\newcommand{\R}{\mathbb R}
\newcommand{\IP}{\mathbb P}
\newcommand{\E}{\mathbb E}
\newcommand{\dr}{\mathrm d r}
\newcommand{\ds}{\mathrm d s}
\newcommand{\dt}{\mathrm d t}
\newcommand{\du}{\mathrm d u}
\newcommand{\ddt}{\partial_t }
\newcommand{\mddx}{\frac{\partial}{\partial x}}
\newcommand{\prog}{\textup{Prog}}
\newcommand{\arcoth}{\textup{arcoth}}
\newcommand{\eps}{\varepsilon}
\theoremstyle{plain}
\newtheorem{theorem}{Theorem}[section]
\newtheorem{example}[theorem]{Example}
\newtheorem{proposition}[theorem]{Proposition}
\newtheorem{lemma}[theorem]{Lemma}
\newtheorem{definition}[theorem]{Definition}
\newtheorem{assumption}[theorem]{Assumption}
\newtheorem{remark}[theorem]{Remark}
\newtheorem*{condition*}{Condition}
\def\thm@space@setup{
\thm@preskip=\parskip \thm@postskip=0pt
}
\numberwithin{equation}{section}
\begin{document}

\title{Optimal trade execution under small market impact and portfolio liquidation with semimartingale strategies\thanks{Financial support through the Stiftung der Deutschen Wirtschaft (sdw) is gratefully acknowledged. We thank Peter Bank, Mikhail Urusov and seminar participants at various institutions for valuable comments and remarks. This is a substantially extended version of a previous preprint entitled {\sl Small impact analysis in stochastically illiquid markets.}}
}

\author{Ulrich Horst\footnote{Department of Mathematics, and School of Business and Economics, Humboldt-Universit\"{a}t zu Berlin, Unter den Linden 6, 10099 Berlin, Germany, \texttt{horst@math.hu-berlin.de}.} \and Evgueni Kivman\footnote{Department of Mathematics, Humboldt-Universit\"{a}t zu Berlin, Unter den Linden 6, 10099 Berlin, Germany, \texttt{kivman@math.hu-berlin.de}.}}
\maketitle

\maketitle

\begin{abstract}
We consider an optimal liquidation problem with instantaneous price impact and stochastic resilience for small instantaneous impact factors. Within our modelling framework, the optimal portfolio process converges to the solution of an optimal liquidation problem with general semimartingale controls when the instantaneous impact factor converges to zero. Our results provide a unified framework within which to embed the two most commonly used modelling frameworks in the liquidation literature and provide a foundation for the use of semimartingale liquidation strategies and the use of portfolio processes of unbounded variation. Our convergence results are based on novel convergence results for BSDEs with singular terminal conditions and novel representation results of BSDEs in terms of uniformly continuous functions of forward processes.
\end{abstract}

{\bf AMS Subject Classification:} 93E20, 91B70, 60H30.

{\bf Keywords:} {portfolio liquidation, singular BSDE, stochastic liquidity, singular control}

\section{Introduction}

The impact of limited liquidity on optimal trade execution has been extensively analyzed in the mathematical finance and stochastic control literature in recent years.
The majority of the optimal portfolio liquidation literature allows for one of two possible price impacts. The first approach, pioneered by Bertsimas and Lo~\cite{bertsimas_lo} and Almgren and Chriss~\cite{almgren_chriss}, divides the price impact in a purely temporary effect, which depends only on the present trading rate and does not influence future prices, and in a permanent effect, which influences the price depending only on the total volume that has been traded in the past. The temporary impact is typically assumed to be linear in the trading rate, leading to a quadratic term in the cost functional. The original modelling framework has been extended in various directions including general stochastic settings with and without model uncertainty and multi-player and mean-field-type models by many authors including Ankirchner et al.~\cite{ankirchner_jeanblanc_kruse}, Cartea et al. ~\cite{cartea_donelly_jaimungal_2017}, Fu et al.~\cite{fu_graewe_horst_popier}, Gatheral and Schied~\cite{gatheral_schied}, Graewe et al.~\cite{graewe_horst_qiu}, Horst et al.~\cite{horst_xia_zhou}, Kruse and Popier~\cite{kruse_popier} and Neuman and Vo{\ss}~\cite{neuman_voss}.

A second approach, initiated by Obizhaeva and Wang~\cite{obizhaeva_wang}, assumes that price impact is not permanent, but transient with the impact of past trades on current prices decaying over time. When impact is transient, one often allows for both absolutely continuous and singular trading strategies. When singular controls are admissible, optimal liquidation strategies usually comprise large block trades at the initial and terminal time.
The work of Obizhaeva and Wang has been extended by Alfonsi et al.~\cite{alfonsi_fruth_schied}, Chen et al.~\cite{chen_kou_wang}, Fruth et al.~\cite{fruth_schoneborn_urusov}, Gatheral~\cite{gatheral}, Guéant~\cite{gueant}, Horst and Naujokat~\cite{horst_naujokat}, Lokka and Xu~\cite{lokka_xu} and Predoiu et al.~\cite{predoiu_shaikhet_shreve}, among many others.

Single and multi-asset liquidation problems with instantaneous and transient market impact and stochastic resilience where trading is confined to absolutely continuous strategies have been analyzed in Graewe and Horst~\cite{graewe_horst} and Horst and Xia~\cite{horst_xia}, respectively. This is consistent with the empirical work of Large~\cite{large} and Lo and Hall~\cite{lo_hall}, which suggests that this resilience does indeed vary stochastically. Although only absolutely continuous trading strategies were admissible in \cite{graewe_horst,horst_xia}, numerical simulations reported in \cite{graewe_horst} suggest that if all model parameters are deterministic constants, then the optimal portfolio process converges to the optimal solution in \cite{obizhaeva_wang} with two block trades and a constant trading rate as the instantaneous impact parameter converges to zero. Cartea and Jaimungal~\cite{cartea_jaimungal_2016} provide empirical evidence that the instantaneous price impact is indeed (much) smaller than permanent (or transient) price impact. The numerical simulations in \cite{graewe_horst} suggest that the model in \cite{graewe_horst} provides a common framework within which to embed the two most commonly used liquidation models \cite{almgren_chriss,obizhaeva_wang} as limiting cases.

This paper provides a rigorous convergence analysis within a Markovian factor model. It turns out that the stochastic setting is quite different from the deterministic one. Most importantly, we show that in the stochastic setting, the optimal portfolio processes obtained in \cite{graewe_horst} converge in the Skorohod $\mathcal M_2$ topology to a process of {\sl infinite variation} with jumps as the instantaneous market impact parameter converges to zero. Our second main result is to prove that the limiting portfolio process is optimal in a liquidation model with semimartingale execution strategies and to explicitly compute the optimal trading cost in the semimartingale execution framework.

Showing that the limiting model solves a liquidation model with semimartingale execution strategies is more than mere byproduct. Control problems with semimartingale strategies are usually difficult to solve because there are no canonical candidates for the value function and/or optimal strategies. We show that the optimal solution in the limiting model is fully determined by the unique bounded solution to a one-dimensional quadratic BSDE. Our limit result provides a novel approach to solving control problems with semimartingale strategies that complements the approaches in \cite{ackermann_kruse_urusov} and \cite{garleanu_pedersen}. They solved related models by passing to a continuous time limit from a sequence of discrete time models.

Within a portfolio liquidation framework, inventory processes with infinite variation were first considered by Lorenz and Schied~\cite{lorenz_schied} to the best of our knowledge. Later, Becherer et al.~\cite{becherer_bilarev_frentrup} considered a trading framework with generalized price impact and proved that the cost functional depends continuously on the trading strategy, considered in several topologies. Bouchard et al.~\cite{bouchard_loeper_zou} considered infinite variation inventory processes in the context of hedging.

The paper closest to ours is the recent work by Ackermann et al.~\cite{ackermann_kruse_urusov}. They considered a liquidation model with general RCLL semimartingale trading strategies. Their framework is more general than ours as they allow for more general filtrations and stochastic order book depth. At the same time, their analysis is confined to risk-neutral traders. In our setting, when the model parameters are deterministic and the instantaneous price impact goes to zero, the case of risk neutral traders - which is then a special case to the model studied in \cite{ackermann_kruse_urusov} - is explicitly solvable. Allowing for risk aversion renders the impact model significantly more complicated as it adds a quadratic dependence of the integrated trading rate into the HJB equation, cf. \cite{graewe_horst} for details.

Our work also complements the work of G\^arleanu and Pedersen~\cite{garleanu_pedersen}. They consider an array of market impact models, including a model with purely transient costs. They write [p.497] that ``with purely [transient] price-impact costs, the optimal portfolio policy can have jumps and infinite quadratic variation.'' As in \cite{ackermann_kruse_urusov}, they justify portfolio holdings with infinite quadratic variation by taking a limit of a sequence of discrete time models with increasing trading frequency. They also prove that the optimal portfolio processes in the discrete time models converge to the optimal portfolio process in the corresponding continuous time model if either the instantaneous price impact converges to a positive constant or the instantaneous price impact factor multiplied by the (increasing) trading frequency converges to zero. However, they do not consider the general case of an instantaneous price impact factor converging to zero. Most importantly, they consider a portfolio choice problem on an infinite time horizon, which avoids the liquidation constraint at the terminal time.

Last but not least our work complements the work of Carmona and Webster~\cite{carmona_webster}, who provide strong evidence that inventories of large traders often do have indeed a non-trivial quadratic variation component. For instance, for the Blackberry stock, they analyze the inventories of ``the three most active Blackberry traders'' on a particular day, namely CIBC World Markets Inc., Royal Bank of Canada Capital Markets, and TD Securities Inc. From their data, they ``suspect that RBC (resp. TD Securities) were trading to acquire a long (resp. short) position in Blackberry'' and found that the corresponding inventory processes were with infinite variation. More generally, they find that systematic tests ``on different days and other actively traded stocks give systematic rejections of this null hypothesis [quadratic variation of inventory being zero], with a $p$-value never greater than $10^{-5}$.'' Our results suggest that inventories with non-trivial quadratic variation arise naturally when market depth is high and resilience and/or market risk fluctuates stochastically. This is very intuitive; in deep markets it is comparably cheap to frequently adjust portfolios to stochastically varying market environments.

The main technical challenge in establishing our convergence results is that the optimal solution to the limiting model cannot be obtained by taking the limit of the three-dimensional quadratic BSDE system that characterizes the optimal solution in the model with positive instantaneous impact. Instead, we prove that the limit is fully characterized by the solution to a one-dimensional quadratic BSDE. Remarkably, this BSDE is independent of the liquidation requirement. As a result, full liquidation takes place if the instantaneous impact parameter converges to zero even it is not strictly required. The reason is a loss in book value of the remaining shares that outweighs the liquidation cost for small instantaneous impact. Our convergence result is based on a novel representation result for solutions for BSDEs driven by It\^o processes in terms of \textit{uniformly} continuous functions of the forward process and on a series of novel convergence results for sequences of singular stochastic integral equations and random ODEs, which we choose to report in an abstract setting in Appendix \ref{SectAppendixRegularityItoBsde} and \ref{SectAppendixConv}, respectively.

The limiting portfolio process is optimal in a liquidation model with general semimartingale execution strategies. Within our modeling framework where the cost coefficients are driven by continuous factor processes block trades optimally occur only at the beginning and the end of the trading period. This is very intuitive as one would expect large block trades to require some form trigger such as an external shock leading to a discontinuous change of cost coefficients.
The proof of optimality proceeds in three steps. We first prove that the process with jumps can be approximated by absolutely continuous ones. This allows us to approximate the trading costs in the semimartingale model by trading costs in the pre-limit models from which we finally deduce the optimality of the limiting process in the semimartingale model by using the optimality of the approximating inventory processes in the pre-limit models. As a byproduct of our approximation, we also obtain that the optimal costs are given in terms of the aforementioned one-dimensional quadratic BSDE.

The rest of the paper is organized as follows. In Section \ref{SectProblemResults}, we recall the modelling setup from \cite{graewe_horst,horst_xia} and summarize our main results. The proofs are given in Sections \ref{SectProofsPart1} and \ref{SectProofsPart2}. A series of fairly abstract convergence results for various stochastic equations with singularities upon which our convergence results are based is postponed to two appendices.

{\bf Notation.} Throughout, randomness is described by an $\R^m$-valued Brownian motion $\{W_t\}_{t\in[0,T]}$ defined on $(\Omega,\mathcal F, \{\mathcal F_t\}_{t \in [0,T]}, \IP)$, a complete probability space, where $\{\mathcal F_t\}_{t \in [0,T]}$ denotes the filtration generated by $W$, augmented by the $\IP$-null-sets. Unless otherwise specified, all equations and inequalities hold in the $\IP$-a.s.~sense. For a subset $A \subseteq \R^d$, we denote by $\mathcal L_{\prog}^2(\Omega\times[0,T];A)$ the set of all progressively measurable $A$-valued stochastic processes $\{X_t\}_{t\in[0,T]}$ such that
$
\E[ \int^T_0 \Vert X_t\Vert_2^2 \, \dt ] < \infty,
$
while $\mathcal L_{\prog}^\infty(\Omega\times[0,T];A)$ denotes the subset of essentially bounded processes. $\mathcal L_{\mathcal P}^2(\Omega\times[0,T];A)$ and $\mathcal L_{\mathcal P}^\infty(\Omega\times[0,T];A)$ denote the respective subsets of predictable processes. Whenever $T-$ appears, we mean that there exists an $\eps > 0$ such that a statement holds for all $T' \in (T-\eps,T)$.

\section{Problem formulation and main results} \label{SectProblemResults}

In this section, we introduce two portfolio liquidation models with stochastic market impact. In the first model, analyzed in Section \ref{SectModelEtaPositive}, the investor is confined to absolutely continuous trading strategies. For small instantaneous market impact, we prove that the optimal liquidation strategy converges to a semimartingale with jumps. In Section \ref{SectModelEta0}, we therefore analyze a liquidation model with semimartingale trading strategies. We prove that the limiting process obtained in Section \ref{SectModelEtaPositive} is optimal in a model where semimartingale strategies that satisfy a suitable regularity condition are admissible.

\subsection{Portfolio liquidation with absolutely continuous strategies} \label{SectModelEtaPositive}

We take the liquidation model analyzed in \cite{graewe_horst,horst_xia} as our starting point and consider an investor that needs to close within a given time interval $[0,T]$ a (single-asset) portfolio of $x_0>0$ shares using a trading strategy $\xi = \{\xi_t\}_{t \in [0,T]}$. If $\xi_t < 0$, the investor is selling the asset at a rate $\xi_t$ at time $t \in [0,T]$, else she is buying it. For a given strategy $\xi$, the corresponding portfolio process $X^\xi = \{X^\xi_t\}_{t \in [0,T]}$ satisfies the ODE
\begin{align*}
	\mathrm dX^\xi_t = \xi_t \, \dt, \quad X^\xi_0 = x_0.
\end{align*}
The set of {\sl admissible strategies} is given by the set $$\mathcal A := \mathcal L_{\prog}^2(\Omega\times[0,T];\R).$$
For a general inventory process $X \in \mathcal L_{\mathcal P}^2 (\Omega\times[0,T];\R)$, the corresponding {\sl transient price impact} is described by $Y^X = \{Y^X_t\}_{t \in [0,T]}$, the unique stochastic process that satisfies the ODE 	
\begin{align}
	\mathrm dY^X_t
	=
	\gamma \, \mathrm dX_t
	- \rho_t Y^X_t \, \dt,
	\quad
	Y^X_0
	=
	0
	\label{eqYDef}
\end{align}
for some constant $\gamma > 0$ and some essentially bounded, adapted, $(0,\infty)$-valued process $\rho=\{\rho_t\}_{t \in [0,T]}$. The process $Y^X$ may be viewed as describing an additional shift in the unaffected benchmark price process generated by the large investor's trading activity. For $\xi \in \mathcal A$, we write $Y^\xi := Y^{X^\xi}$.

For any {\sl instantaneous impact factor} $\eta > 0$ and any {\sl penalization factor}
$$
N \in \mathcal N = \mathcal N(\eta) := [\underline N(\eta),\infty],
$$
where
$$
\underline N (\eta)
:=
\gamma + 1 + \sqrt{ 2 \eta \max \big( \Vert\lambda\Vert_{L^\infty}, \gamma \Vert\rho\Vert_{L^\infty} \big) },
$$
the {\sl cost functional} is given by
\begin{align*}
	J^{\eta,N}(\xi)
	=&
	\E\bigg[
		\frac \eta 2 \int^T_0 (\xi_t)^2 \, \dt
		+ \int^T_0 Y^\xi_t \, \mathrm dX^\xi_t
		+ \frac 1 2 \int^T_0 \lambda_t \big(X^\xi_t\big)^2 \, \dt
		+ \frac N 2 \big(X^\xi_T\big)^2
		- X^\xi_T Y^\xi_T
	\bigg].
\end{align*}
The first term in the above cost functional captures the {\sl instantaneous trading costs}; the second captures the costs from {\sl transient price impact}; the third captures {\sl market risk} where the adapted and non-negative process $\lambda = \{\lambda_t\}_{t\in [0,T]}$ specifies the degree of risk.
If full liquidation is required ($N=\infty$), the fourth term should formally be read as $+\infty \mathbbm 1_{\{X^\xi_T \neq 0\}}$ with the convention $0 \cdot \infty = 0$. The case $N=\infty$ captures the case where full liquidation is required; this case is analyzed in \cite{graewe_horst}. The case $\gamma + 1 \leq N < \infty$ is analyzed in \cite{horst_xia}. The fifth term captures an additional loss in book value of the remaining shares. It drops out of the cost function if $N = \infty$; see \cite{graewe_horst,horst_xia} for further details on the impact costs and cost coefficients.

It has been shown in \cite{graewe_horst,horst_xia} that the optimization problem
\begin{align}
	\min_{\xi \in \mathcal A} J^{\eta,N}(\xi)
	\label{eqMinimizationEtaPos}
\end{align}
has a solution $\hat\xi^{\eta,N}$ for any $N \in {\cal N}$ and any $\eta > 0$. The solution is given in terms of a backward SDE system with possibly singular terminal condition. We index the optimal trading strategies and state processes by $\eta$ and $N$ as we are interested in their behavior for small instantaneous impact factors for both finite and infinite $N$.
\begin{theorem}[\cite{graewe_horst,horst_xia}] \label{thmgraewexiathm}
With the above assumptions, for all $\eta \in (0,\infty)$ and $N\in \mathcal N$, the following holds.
\begin{itemize}
\item[i)] The BSDE system
\begin{align*}
	-dA^{\eta,N}_t
	=&
	\bigg(\lambda_t - \frac 1 \eta \big( A^{\eta,N}_t - \gamma B^{\eta,N}_t \big)^2 \bigg) \, \dt
	- Z^{\eta,N,A}_t \, \mathrm dW_t,
	\\
	- dB^{\eta,N}_t
	=&
	\bigg( - \rho_t B^{\eta,N}_t + \frac 1 \eta \big(\gamma C^{\eta,N}_t - B^{\eta,N}_t + 1 \big) \big( A^{\eta,N}_t - \gamma B^{\eta,N}_t \big) \bigg) \, \dt
	- Z^{\eta,N,B}_t \, \mathrm dW_t,
	\\
	- dC^{\eta,N}_t
	=&
	\bigg( - 2\rho_t C^{\eta,N}_t - \frac 1 \eta \big(\gamma C^{\eta,N}_t - B^{\eta,N}_t + 1 \big)^2 \bigg) \, \dt
	- Z^{\eta,N,C}_t \, \mathrm dW_t
\end{align*}
with terminal condition
\begin{align*}
	B^{\eta,N}_T = 1, \quad
	C^{\eta,N}_T = 0
\end{align*}
together with $A^{\eta,N}_T = N$ in the case $N < \infty$ and $\lim_{t\nearrow T} A^{\eta,\infty}_t = \infty$ in $L^\infty$ if $N=\infty$
has a solution
\[
	\Big( \big( A^{\eta,N},B^{\eta,N},C^{\eta,N} \big) , \big(Z^{\eta,N,A}, Z^{\eta,N,B}, Z^{\eta,N,C}\big) \Big)
\]
that belongs to the space
$\mathcal L^\infty_{\mathcal P}(\Omega\times[0,T];\R^3) \times \mathcal L^2_{\prog}(\Omega\times[0,T];\R^{3\times m})$
if $N<\infty$ and to the space
$\mathcal L^\infty_{\mathcal P}(\Omega\times[0,T-];\R^3) \times \mathcal L^2_{\prog}(\Omega\times[0,T-];\R^{3\times m})$
if $N=\infty$.

\item[ii)] The liquidation problem \eqref{eqMinimizationEtaPos} has a solution $\hat\xi^{\eta,N}$. The corresponding state process
$$\big(\hat X^{\eta,N},\hat Y^{\eta,N}\big) := \big(X^{\hat\xi^{\eta,N}},Y^{\hat\xi^{\eta,N}}\big)$$
is given by the (unique) solution to the ODE system
\begin{equation}\label{eqOdeXY}
\begin{split}
	\ddt \hat X^{\eta,N}_t
	=& - \frac 1 {\sqrt{\eta}} D^{\eta,N}_t \hat X^{\eta,N}_t
	- \frac 1 {\sqrt{\eta}} E^{\eta,N}_t \hat Y^{\eta,N}_t,
	\\
	\ddt \hat Y^{\eta,N}_t
	=& - \frac \gamma {\sqrt{\eta}} D^{\eta,N}_t \hat X^{\eta,N}_t
	- \frac \gamma {\sqrt{\eta}} E^{\eta,N}_t \hat Y^{\eta,N}_t
	- \rho_t \hat Y^{\eta,N}_t
\end{split}
\end{equation}
with initial conditions $\hat X^{\eta,N}_0 = x_0$ and $\hat Y^{\eta,N}_0 = 0$ where
\begin{align*}
	D^{\eta,N} :=
	\frac 1 {\sqrt{\eta}} \big(A^{\eta,N}-\gamma B^{\eta,N}\big),
	\quad
	E^{\eta,N} :=
	\frac 1 {\sqrt{\eta}} \big(\gamma C^{\eta,N} - B^{\eta,N} + 1\big).
\end{align*}
\end{itemize}
\end{theorem}

Let us now define the process
\[
	\hat Z^{\eta,N} := \gamma \hat X^{\eta,N} - \hat Y^{\eta,N}.
\]
The benefit of defining this process is that the terms in the ODE \eqref{eqOdeXY} that are multiplied by $\sqrt{\eta^{-1}}$ drop out so that we expect that the process $\hat Z^{\eta,N}$ remains stable for small values of $\eta$. Next, we state a result on the optimal state process and the previously introduced process $\hat Z^{\eta,N}$ that will be important for our subsequent analysis. In particular, we show that the optimal portfolio process $\hat X^{\eta,N}$ never changes its sign. The proof is given in Section \ref{SectEtaPos}.

\begin{theorem} \label{thmXYEtaPosBdthm}
For all $\eta \in (0,\infty)$, $N\in \mathcal N$, the process $\hat Z^{\eta,N}$ is non-increasing on $[0,T]$. Moreover,
\begin{align*}
	\IP\big[
		\hat X^{\eta,N}_t \in (0,x_0),
		\hat Y^{\eta,N}_t \in (-\gamma x_0,0),
		\hat Z^{\eta,N}_t \in (0,\gamma x_0)
		\text{ for all } t \in (0,T)
	\big]
	= 1.
\end{align*}
\end{theorem}

We are interested in the dynamics of the optimal portfolio processes for small instantaneous price impact. We address this problem within a factor model
where the cost coefficients $\lambda$ and $\rho$ are driven by an It\^o diffusion, which is given by the unique strong solution to the SDE
\begin{align*}
	d\chi_t
	= \mu(t,\chi_t) \, \dt + \sigma(t,\chi_t) \, \mathrm dW_t, \quad \chi_0 = \overline\chi_0
\end{align*}
on $[0,T]$ with $\overline\chi_0 \in \R^n$.
We assume throughout that the function
\begin{align*}
	(\mu, \sigma) \colon [0,T] \times \R^n \to \R^n \times \R^{n\times m}
\end{align*}
is bounded, measurable and uniformly Lipschitz continuous in the space variable:
\[
\big\Vert
(\mu, \sigma)(t,x_1) -
(\mu, \sigma)(t,x_2)
\big\Vert_\infty
\le c \Vert x_1 - x_2\Vert_\infty.
\]

\begin{assumption}\label{assFactorModel}
The processes $\rho$ and $\lambda$ are of the form
\[
	(\rho_t,\lambda_t) = f(t,\chi_t)
\]
for some bounded $C^{1,2}$ function $f = (f^\rho,f^\lambda) \colon [0,T] \times \R^n \to [0,\infty)^2$ with bounded derivatives. Moreover, the function $f^\rho$ is bounded away from zero.
\end{assumption}

For convenience, we define the stochastic process $\varphi = \{\varphi_t\}_{t \in [0,T]}$ by
$$
\varphi_t := \sqrt{\lambda_t + 2 \gamma \rho_t},
$$
choose constants $\underline\rho, \overline\rho, \underline\varphi,\overline\varphi, \overline\lambda \in (0,\infty)$ such that
\begin{align*}
	\underline\rho \le \rho \le \overline\rho,
	\quad
	0 \le \lambda \le \overline\lambda,
	\quad
	\underline\varphi \le \varphi \le \overline\varphi.
\end{align*}

In what follows, we heuristically argue that the processes $\hat X^{\eta,N}$ converge to a limit process $\hat X^0$ (independent of $N$) as $\eta \to 0$ and identify the limit $\hat X^0$.
Since the ODE system \eqref{eqOdeXY} is not defined for $\eta =0$, we cannot define the limiting process as the solution to this system. Instead,
we first identify the limits of the coefficients of the ODE system and then derive candidate limits for the state processes in terms of the limiting coefficients.

\subsubsection{Convergence of the coefficient processes}

In this section, we state the convergence results for the coefficient processes $D^{\eta,N}$ and $E^{\eta,N}$ of the ODE system \eqref{eqOdeXY} as $\eta \to 0$. In particular, we prove that their limits $D^{0}$ and $E^0$ exist and are driven by a common factor, which is given by the solution of a quadratic BSDE.

Before proceeding to the limit result, we provide some heuristics for the convergence. Assuming for simplicity that all coefficients are deterministic, the dynamics of the coefficient processes satisfy
\begin{align*}
	\sqrt\eta \dot D^{\eta,N}_t
	=&
	\big( D^{\eta,N}_t \big)^2
	- \lambda_t
	- \gamma \dot B^{\eta,N}_t,
	\\
	\sqrt\eta \dot E^{\eta,N}_t
	=&
	2 \sqrt\eta \rho_t E^{\eta,N}_t
	+ \gamma \big( E^{\eta,N}_t \big)^2
	- 2 \rho_t \big( 1 - B^{\eta,N}_t \big)
	- \dot B^{\eta,N}_t.
\end{align*}
Letting $\eta \to 0$, we expect that
\begin{align*}
	0
	=&
	\big( D^0_t \big)^2
	- \lambda_t
	- \gamma \dot B^0_t,
	\\
	0
	=&
	\gamma \big( E^0_t \big)^2
	+ 2 \rho_t \big( B^0_t - 1 \big)
	- \dot B^0_t,
\end{align*}
that is, we expect that
\begin{align} \label{D1}
	D^0_t
	=
	\sqrt{ \gamma \dot B^0_t + \lambda_t }
	\quad \mbox{and} \quad
	E^0_t
	=
	\sqrt{ \gamma^{-1} \big( \dot B^0_t + 2 \rho_t ( 1 - B^0_t ) \big) }.
\end{align}
Moreover, by the choice of the coefficients $D^{\eta,N}$ and $E^{\eta,N}$, we expect that
\begin{align} \label{D2}
	- \dot B^0_t
	=&
	- \rho_t B^0_t
	+ D^0_t E^0_t.
\end{align}
The three equalities combined yield
\begin{equation}
\begin{split}
	\big( D^0_t + \gamma E^0_t \big)^2
	=&
	\gamma \dot B^0_t
	+ \lambda_t
	+ 2 \gamma D^0_t E^0_t
	+ \gamma \big( \dot B^0_t + 2 \rho_t ( 1 - B^0_t ) \big)
	\\=&
	(\varphi_t)^2
	+ 2 \gamma \big(
		\dot B^0_t
		+ D^0_t E^0_t
		- \rho_t B^0_t
	\big)
	\\=&
	(\varphi_t)^2.
\end{split}
\label{eqDPlusGammaEEqPhi}
\end{equation}
Plugging \eqref{D2} and \eqref{eqDPlusGammaEEqPhi} back into \eqref{D1} yields
\begin{align*}
	D^0_t
	=
	(\varphi_t)^{-1} \big( \lambda_t + \gamma \rho_t B^0_t \big) \quad \mbox{and} \quad E^0_t
	=
	(\varphi_t)^{-1} \rho_t \big( 2 - B^0_t \big).
\end{align*}
Hence we expect $D^0$ and $E^0$ to be driven by $B^0$ and $B^0$ to satisfy the ODE
\begin{align*}
	- \dot B^0_t
	=&
	- \rho_t B^0_t
	+ D^0_t E^0_t
	\\=&
	- \frac 1 {\varphi_t^2} \Big(\gamma \big(\rho_t B^0_t\big)^2 - 2 \lambda_t \rho_t \big(1 - B^0_t\big)\Big)
	\end{align*}
with terminal condition $B^0_T=1$ (because $B^{\eta,N}_T = 1$). Our heuristic also suggests that the limit processes are independent of the liquidation requirement.

\begin{example}
If $\lambda = C \rho$ for some constant $C \geq 0$, then the process $B^0$ can be computed explicitly. If $C = 0$, then
$
B^0_t
= 2/({2 + \int_t^T \rho_s \, \ds }).
$
If $C > 0$, then
\begin{align*}
	B^0_t
	=&
	- \frac C \gamma
	+ \frac 1 \gamma \sqrt{C(C+2\gamma)} \coth \bigg(
		\arcoth\bigg( \frac{C+\gamma}{\sqrt{C(C+2\gamma)}} \bigg)
		+ \sqrt{\frac C {C+2\gamma}} \int_t^T \rho_s \, \ds
	\bigg).
\end{align*}
\end{example}

The preceding heuristic suggests that the limiting coefficient processes are driven by a solution to the BSDE corresponding to the above ODE for $B^0$. The following lemma is proven in Section \ref{SectEta0}.

\begin{lemma} \label{thmexistsB0lemma}
There exists a unique solution
$
(B^0,Z^{0,B})
$
in the space
\[
\mathcal L^\infty_{\mathcal P} \big( \Omega\times[0,T];(0,\infty) \big)
\times \mathcal L^2_{\mathcal P} \big( \Omega\times[0,T];\R^m \big)
\]
to the BSDE
\begin{align}
	- \, \mathrm dB^0_t
	=
	- \frac 1 {\varphi_t^2} \Big(\gamma \big(\rho_t B^0_t\big)^2 - 2 \lambda_t \rho_t \big(1 - B^0_t\big)\Big) \, \dt
	- Z^{0,B}_t \, \mathrm dW_t,
	\quad
	B^0_T
	=
	1.
	\label{eqBsdeB0}
\end{align}
The process $B^0$ is bounded from above by $1$ and bounded from below by
\begin{align} \label{lower-bound-B0}
	\underline{B}^0_t := \exp\big( - \underline\varphi^{-2} \gamma \overline\rho^2 (T-t)\big).
\end{align}
Moreover, there exists a {\sl uniformly} continuous function (``decoupling field'') $h\colon [0,T]\times\R^n \to \R$ such that $B^0$ and $\{h(t,\chi_t)\}_{t \in [0,T]}$ are indistinguishable.
\end{lemma}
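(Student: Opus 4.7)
The plan is to treat \eqref{eqBsdeB0} as a scalar quadratic BSDE with constant (hence bounded) terminal condition, to derive an a priori two-sided bound that confines $B^0$ to a region where the driver is Lipschitz in $b$, and then to combine standard Lipschitz BSDE theory with the abstract representation result of Appendix \ref{SectAppendixRegularityItoBsde}. Writing the driver as
\[
g(t,b) := -\frac{1}{\varphi_t^2}\bigl(\gamma\rho_t^2 b^2 - 2\lambda_t\rho_t(1-b)\bigr),
\]
the key structural observation is that on the range $[0,1]$ the driver is bounded and Lipschitz in $b$, so a truncation argument will allow me to avoid the quadratic growth issue from the outset.

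For the upper bound I would compare with the constant process $\tilde B \equiv 1$: since $g(t,1) = -\gamma\rho_t^2/\varphi_t^2 \le 0$, this constant is a super-solution with matching terminal condition. For the lower bound, the elementary estimate $b^2 \le b$ on $[0,1]$ combined with $\rho_t \le \overline\rho$ and $\varphi_t \ge \underline\varphi$ yields
\[
g(t,b) \ge -\underline\varphi^{-2}\gamma\overline\rho^2\, b \quad \text{for } b \in [0,1],
\]
and the explicit deterministic process $\underline B^0_t$ in \eqref{lower-bound-B0} solves the ODE driven by this majorant with terminal value $1$, so a second comparison gives $B^0 \ge \underline B^0$. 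Existence will then follow by truncating $g(t,b)$ outside $[\underline B^0_0/2,2]$ to a globally Lipschitz driver, applying classical Lipschitz BSDE theory in $\mathcal L^\infty_{\mathcal P}\times\mathcal L^2_{\mathcal P}$, and using the two comparison arguments to verify that the solution of the truncated equation never leaves $[\underline B^0,1]$, where truncation and true driver coincide; uniqueness within $\mathcal L^\infty_{\mathcal P}(\Omega\times[0;T];(0;\infty))$ is immediate because any bounded positive solution satisfies the same comparison inequalities and hence also coincides with the truncated-BSDE solution.

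The Markov representation $B^0_t = h(t,\chi_t)$ for some measurable $h$ is standard once one observes that Assumption \ref{assFactorModel} makes the driver a bounded measurable function of $(t,\chi_t,b)$ alone and that the terminal condition is constant. The main obstacle, and the reason this lemma is non-trivial, is upgrading continuity of $h$ to \emph{uniform} continuity on the unbounded domain $[0;T]\times\R^n$: naive BSDE stability combined with the usual $L^2$-Lipschitz dependence of $\chi^{t,x}$ on $x$ only yields local Lipschitz estimates for $h$ that need not extend globally, and the time regularity estimate is similarly not uniform in $x$. This is precisely the situation addressed by the novel regularity result of Appendix \ref{SectAppendixRegularityItoBsde}, whose hypotheses---bounded and uniformly Lipschitz SDE coefficients for $\chi$, bounded constant terminal condition, and a driver $g$ that is bounded and uniformly Lipschitz on the a priori range $[\underline B^0,1]$ jointly in $(x,b)$---are all met here. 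I would therefore invoke this abstract result as a black box to conclude the existence of a uniformly continuous representative $h$, completing the proof.
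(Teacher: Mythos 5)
Your proposal follows essentially the same route as the paper: truncate the quadratic driver to obtain a Lipschitz BSDE, use the comparison principle against the explicit super-solution $1$ (since the driver at $b=1$ equals $-\gamma\rho_t^2/\varphi_t^2\le 0$) and the explicit sub-solution $\underline B^0$ (via $g(t,b)\ge -\underline\varphi^{-2}\gamma\overline\rho^2 b$ on $[0;1]$) to confine the solution to $[\underline B^0;1]$, where the truncation is inactive, and then invoke Theorem \ref{thmBsdeSolExistsUniqueWithHthm} for the uniformly continuous Markovian representative $h$; this is exactly the paper's argument, including the delegation of the uniform-continuity issue to the appendix. The one place where your writeup is too quick is uniqueness: you truncate once and for all outside $[\underline B^0_0/2;2]$ and claim that ``any bounded positive solution satisfies the same comparison inequalities and hence coincides with the truncated-BSDE solution.'' An arbitrary element of $\mathcal L^\infty_{\mathcal P}(\Omega\times[0;T];(0;\infty))$ is not known a priori to take values in $[\underline B^0_0/2;2]$, so it need not solve your fixed truncated equation, and you cannot apply the Lipschitz comparison/uniqueness theory to the original quadratic driver directly. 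The cure is a solution-dependent truncation level, which is precisely why the paper works with the whole family of truncated drivers $\psi^{\overline b}$, $\overline b\in[1;\infty)$: a bounded positive solution with $\Vert B\Vert_{L^\infty}\le\overline b$ solves the $\psi^{\overline b}$-BSDE, hence equals $B^{\overline b}$ by Lipschitz uniqueness, and the comparison bounds force $B^{\overline b}\in[\underline B^0;1]$, so all these coincide with $B^1$. Adding this re-truncation (or, alternatively, a linearization argument for the difference of two bounded solutions) closes the gap; otherwise your plan is sound.
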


We prove below that the process $B^{\eta,N}$ converges to $B^0$ as $\eta \to 0$ and that $D^{\eta,N}$ and $E^{\eta,N}$ converge to the processes
\begin{equation}
\begin{split} \label{eqD0def}
	D^0_t := (\varphi_t)^{-1} \big( \lambda_t + \gamma \rho_t B^0_t \big) \quad \mbox{and} \quad
	E^0_t := (\varphi_t)^{-1} \rho_t \big( 2 - B^0_t \big),
\end{split}
\end{equation}
respectively.
In view of Lemma \ref{thmexistsB0lemma}, these processes are well-defined and so the dynamics of the process $B^0$ can be rewritten as
\begin{align*}
	- \, \mathrm dB^0_t
	=
	- \big(
		\rho_t B^0_t
		- D^0_t E^0_t
	\big) \, \dt
	- Z^{0,B}_t \, \mathrm dW_t, \quad B^0_T = 1.
\end{align*}
Likewise, the BSDE for the process $B^{\eta,N}$ can be rewritten as
\begin{align}
	- \, \mathrm dB^{\eta,N}_t
	=&
	- \big( \rho_t B^{\eta,N}_t - D^{\eta,N}_t E^{\eta,N}_t \big) \, \dt
	- Z^{\eta,N,B}_t \, \mathrm dW_t,
	\quad
	B^{\eta,N}_T = 1.
	\label{eqSdeB}
\end{align}
This suggests that the process $B^{\eta,N}$ converges to $B^0$ on the entire interval $[0,T]$. By contrast, convergence of the processes $D^{\eta,N}$ and $E^{\eta,N}$ can only be expected to hold on compact subintervals of $[0,T)$ because the terminal conditions of the limiting and the approximating processes are different. Specifically, we have the following result. Its proof is given in Section \ref{SectBDEFconv}.
\begin{proposition} \label{thmbdeconv0prop}
Let
\begin{align*}
b^{\eta,N} := B^{\eta,N} - B^0, \quad
d^{\eta,N} := D^{\eta,N} - D^0, \quad
e^{\eta,N} := E^{\eta,N} - E^0.
\end{align*}
Then, for all $\eps > 0$, there exists an $\eta_0>0$ such that, for all $\eta \in (0,\eta_0]$ and for all $N \in \mathcal N$,
\begin{align*}
	\IP\bigg[
	&\sup_{t \in [0,T)} \max\Big( \big|b^{\eta,N}_t\big| , -d^{\eta,N}_t , e^{\eta,N}_t, - d^{\eta,N}_t - \gamma e^{\eta,N}_t \Big) \le \eps,
	\\&\quad\quad
	\inf_{t \in [0,T-\eps]} \min\big(-d^{\eta,N}_t, e^{\eta,N}_t, - d^{\eta,N}_t - \gamma e^{\eta,N}_t \big) \ge -\eps
	\bigg]
	= 1.
\end{align*}
\end{proposition}

\subsubsection{Convergence of the state process}

Having derived the limits of the coefficient processes, we can now heuristically derive the limits of the processes $\hat X^{\eta,N}$, $\hat Y^{\eta,N}$ and $\hat Z^{\eta,N}$, which we denote by $\hat X^0$, $\hat Y^0$ and $\hat Z^0$, respectively.


Since $\hat X^{\eta,\infty}_T = 0$ for all $\eta > 0$, we expect that $\hat X^0_T = 0$. We prove in Lemma \ref{thmXTnFiniteTo0lemma} that this convergence also holds if $N$ is finite. The proof heavily relies on the optimality of $\hat X^0$ in the semimartingale portfolio liquidation model.

Assuming that the optimal trading strategy remains stable if $\eta \to 0$, the ODE \eqref{eqOdeXY} suggests that the term $D^{\eta,N} \hat X^{\eta,N} + E^{\eta,N} \hat Y^{\eta,N}$ is small for small $\eta$ and hence that
\[
	D^0_t \hat X^0_t = - E^0_t \hat Y^0_t \quad \mbox{on} \, [0,T).
\]
We do not conjecture the above relation at the terminal time because the convergence of $D^{\eta,N}$ and $E^{\eta,N}$ only holds on $[0,T)$. Assuming that
\begin{align*}
	\hat Z^0 = \gamma \hat X^0 - \hat Y^0
	\quad \mbox{on} \,
	[0,T),
\end{align*}
the equation \eqref{eqDPlusGammaEEqPhi} implies
\[
	\hat Z^0 = - \frac{\varphi}{D^0} \hat Y^0 \quad \mbox{and hence} \quad
	\hat Y^0 = - \frac{D^0}{\varphi} \hat Z^0.
\]	
On the other hand, by definition,
$
\ddt \hat Z^{\eta,N}
= \rho \hat Y^{\eta,N}
$.
As $\eta \to 0$, this suggests that $\ddt \hat Z^0 = \rho \hat Y^0$ and hence that
\begin{align*}
	\ddt \hat Z^0_t
	= - \frac {\rho_t D^0_t} {\varphi_t} \hat Z^0, \quad \hat Z^0_0 = \gamma x_0.
\end{align*}

This motivates us to {\sl define the process}
\begin{align*}
	\hat Z^0_t
	 : =
	\gamma x_0 \exp\bigg( -\int^t_0 \frac{\rho_s}{\varphi_s} D^0_s \, \ds \bigg), \quad
	t \in [0,T].
\end{align*}
Since we expect that $\hat X^0_T = 0$ and that $\hat Z^0 = \gamma \hat X^0 - \hat Y^0$, we now introduce the {\sl candidate limiting state processes}
\begin{equation}
\label{X0}
\begin{split}
	\hat X^0_t
	& : =
	\mathbbm 1_{[0,T)}(t) \frac {E^0_t}{\varphi_t} \hat Z^0_t, \\
	\hat Y^0_t
	& : =
	- \mathbbm 1_{[0,T)}(t) \frac {D^0_t}{\varphi_t} \hat Z^0_t
	- \mathbbm 1_{\{T\}}(t) \hat Z^0_T,
	\quad
	t \in [0,T].
\end{split}
\end{equation}
Since $(\hat X^0_{0-},\hat Y^0_{0-}) = (x_0,0)$, we expect that the limiting state process jumps at the initial and the terminal time. In particular, we cannot expect uniform convergence on $[0,T]$.

We also expect the limiting state processes to be of unbounded variation; this can already be deduced from Figure 1. The figure also suggests that the portfolio process is more or less monotone for large $\eta$, while this property is lost for small $\eta$. When $\eta\to 0$, adjustments to small changes in market environments are cheap. This is very different from round-trip strategies where own impact is used to drive market prices into a favorable direction.

Figure \ref{fig:1} also suggests that the limiting portfolio process jumps only at times $0$ and $T$. This is consistent with the definition of candidate processes \eqref{X0} as well as the observation in \cite{horst_naujokat}, according to which jumps in the optimal strategy can only be triggered by exogenous shocks like jumps in the cost coefficients, which are absent in the present model.

\begin{figure}
\includegraphics{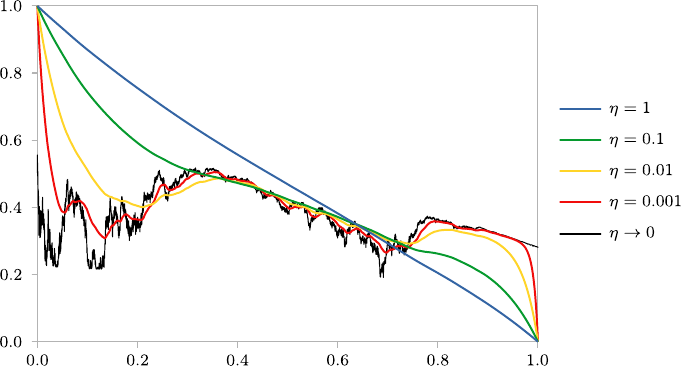}
\caption{Optimal trading strategies $\hat X^{\eta,\infty}$ for the liquidation model for different instantaneous impact factors and their limit $\hat X^0$ for $m=n=T=x_0=1$, $\gamma=3$, $\rho_t = 1 + 0{.}9 \sin(2{.}5 W_t)$, $\lambda\equiv1$.}
\label{fig:1} 
\end{figure}

It remains to clarify in which sense the state processes converge. Contrary to the convergence result stated in Proposition \ref{thmbdeconv0prop}, we can only expect convergence in probability because the state process follows a forward ODE while the coefficient processes follow backward SDEs; see also Appendix \ref{SectAppendixConvOde}. The following theorem establishes uniform convergence in probability on compact subintervals on $(0,T)$ along with some ``upper/lower convergence'' at the initial and terminal time. The proof is given in Section \ref{SectXYconv}.

\begin{theorem} \label{thmxConv0Specificthm}
For all $\eps>0$ and $\delta > 0$, there exists an $\eta_0>0$ such that, for all $\eta \in (0,\eta_0]$ and all $N \in \mathcal N$,
\begin{align*}
	\IP\Big[
		&
		\sup_{t\in [\eps,T)} \max \big( \hat X^{\eta,N}_t - \hat X^0_t , \hat Y^{\eta,N}_t - \hat Y^0_t \big) \le \eps ,
\\&
		\quad
		\inf_{t\in [0,T-\eps]} \min \big( \hat X^{\eta,N}_t - \hat X^0_t , \hat Y^{\eta,N}_t - \hat Y^0_t \big) \ge - \eps
	\Big]
	\ge 1 - \delta.
\end{align*}
\end{theorem}

The preceding theorem does not provide a convergence result on the whole time interval, due to the jumps of the limit processes at the initial and terminal time. However, along with our results from Section \ref{SectModelEta0}, it allows us to prove the convergence of the graphs of the state processes on the entire time interval. The completed graph of a RCLL function $X \colon \{0-\}\cup[0,T] \to \R$ with finitely many jumps is defined by
\begin{align*}
	G_X
	&:=
	\big\{
		(t,x) \in [0,T]\times\R
		:
		x = X_t = X_{t-} \text{ or }
		X_{t-} < X_t, x \in [X_{t-}, X_t] \\
		& \quad
		\text{or} \,
		X_{t-} > X_t, x \in [X_t, X_{t-}]
	\big\}.
\end{align*}

The Skorohod $\mathcal M_2$ distance between $X$ and $Y$ is defined as the Hausdorff distance between their completed graphs, i.e.
\begin{align*}
	d_{ \mathcal M_2 }(X,Y)
	:=
	\max\Big(
		\sup_{p \in G_X} \min_{q \in G_Y} \Vert p-q\Vert_\infty,
		\sup_{q \in G_Y} \min_{p \in G_X} \Vert p-q\Vert_\infty
	\Big)
	\in [0,\infty],
\end{align*}

where
\begin{align*}
	\big\Vert (s,y) \big\Vert_\infty
	:=
	\max \big( |s|,|y| \big).
\end{align*}

If strict liquidation is required, then Theorem \ref{thmxConv0Specificthm} is sufficient to prove convergence of the state processes in the Skorohod $\mathcal M_2$ sense.
Even if liquidation is not required, it turns out that the terminal position converges to zero as $\eta \to 0$. Heuristically, this can be seen as follows.

Let $t_0 \in (0,T)$. Disregarding market risk costs, which we expect to be of order $\mathcal O(T-t_0)$ and hence negligible if $t_0 \to T$, and disregarding instantaneous impact costs for the moment, the cost functional for any given admissible strategy $\xi$ is given by
\begin{align*}
	&
	\int^T_{t_0} Y^\xi_s \xi_s \, \ds
	- X^\xi_T Y^\xi_T
	+ \frac N 2 \big( X^\xi_T \big)^2
	\\=&
	Y^\xi_T X^\xi_T
	- Y^\xi_{t_0} X^\xi_{t_0}
	- \int^T_{t_0} \dot Y^\xi_s X^\xi_s \, \ds
	- X^\xi_T Y^\xi_T
	+ \frac N 2 \big( X^\xi_T \big)^2
	\\=&
	- Y^\xi_{t_0} X^\xi_{t_0}
	- \int^T_{t_0} \gamma \xi_s X^\xi_s \, \ds
	+ \int^T_{t_0} \rho Y^\xi_s X^\xi_s \, \ds
	+ \frac N 2 \big( X^\xi_T \big)^2
	\\=&
	- Y^\xi_{t_0} X^\xi_{t_0}
	- \frac \gamma 2 \Big( \big( X^\xi_T \big)^2 - \big( X^\xi_{t_0} \big)^2 \Big)
	+ \int^T_{t_0} \rho Y^\xi_s X^\xi_s \, \ds
	+ \frac N 2 \big( X^\xi_T \big)^2
	\\=&
	- Y^\xi_{t_0} X^\xi_{t_0}
	+ \frac \gamma 2 \big( X^\xi_{t_0} \big)^2
	+ \frac {N - \gamma} 2 \big( X^\xi_T \big)^2
	+ \int^T_{t_0} \rho Y^\xi_s X^\xi_s \, \ds.
\end{align*}
Hence, we expect the {\sl controllable} costs to satisfy
\[
	\mathbb E \bigg[ \frac {N - \gamma} 2 \big( X^\xi_T \big)^2 + \int^T_{t_0} \rho Y^\xi_s X^\xi_s \, \ds \bigg]
	= \mathbb E \bigg[ \frac {N - \gamma} 2 \big( X^\xi_T \big)^2 \bigg] + \mathcal{O}(T-t_0)
\]
plus instantaneous impact costs. Since $N > \gamma$, this suggests to make $X^\xi_T$ small, which is cheap if $\eta$ is small. More precisely, we have the following result; its proof is given in Section \ref{SectXGraphConv}.

\begin{proposition} \label{thmxConv0prop}
For all $\eps>0$ and $\delta>0$, there exists an $\eta_0>0$ such that, for all $\eta\in(0,\eta_0]$ and all $N \in \mathcal N$,
\begin{align*}
	\IP\Big[d_{ \mathcal M_2 } \big(\hat X^{\eta,N},\hat X^0 \big) \le \eps\Big]
	\ge
	1 - \delta.
\end{align*}
\end{proposition}


\subsection{Optimal liquidation with semimartingale strategies} \label{SectModelEta0}

In this section, we prove that the limit process $\hat X^0$ is the optimal portfolio process in a trade execution model with semimartingale trading strategies.

In our semimartingale model, a trading strategy is given by a triple $$\theta=(j^+,j^-,V)$$ where $j^+$ and $j^-$ are real-valued, non-decreasing pure jump processes and $V$ is a real-valued continuous Brownian semimartingale starting in zero. The jump processes $j^+$ and $j^-$ describe the cumulative effects of buying, respectively selling large blocks of shares while the continuous semimartingale $V$ describes the effect of continuously trading small amounts of the stock. The portfolio process $X^\theta = \{X^\theta_t\}_{t\in[0,T]}$ associated with a strategy $\theta$ is then given by
\begin{align*}
	X^\theta_t
	= x_0 + j^+_t - j^-_t + V_t.
\end{align*}
The associated price impact process, again given by \eqref{eqYDef}, is denoted $Y^\theta$. We note that $X^\theta$ and $Y^\theta$ are semimartingales.

We assume that strict liquidation is required and that the cost associated with a trading strategy $\theta$ is given by

\begin{align*}
	J^0(\theta)
	:= \E \bigg[
		\int_{(0,T]} Y^\theta_{s-} \, \mathrm dX^\theta_s
		+ \int^T_0 \frac 1 2 \lambda_s \big( X^\theta_s \big)^2 \, \ds
		+ \frac \gamma 2 \big[ X^\theta \big]_T
	\bigg].
\end{align*}
The first term captures the transient price impact cost; the second term captures market risk. The third term emerges as an additional cost term when passing from discrete to continuous time, as shown in \cite{ackermann_kruse_urusov}. Moreover, in the absence of this term, arbitrarily low costs can be achieved; see \cite{ackermann_kruse_urusov} for details.

The cost function can be conveniently rewritten as

\begin{align*}
	J^0(\theta)
	=&
	\E \bigg[
		\frac \gamma 2 (j^+_0 - j^-_0)^2
		+ \int_{(0,T]} \frac 1 2 \big( Y^\theta_{s-} + Y^\theta_s \big) \, \mathrm dX^\theta_s
		+ \int^T_0 \frac 1 2 \lambda_s \big( X^\theta_s \big)^2 \, \ds
		+ \frac \gamma 2 [V]_T
	\bigg].
\end{align*}
This representation supports our intuition that the price impact before and the price impact after the jump equally influence the total cost. The first term in this expression captures the cost of the initial block trade at time $t=0$.

The cost functional is well defined under the following admissibility condition.
\begin{definition} \label{assAdmissible}
A trading strategy $\theta = (j^+,j^-,V)$ is called admissible if the liquidation constraint $X^\theta_T = 0$ holds, if $j^\pm$ is a RCLL, predictable, real-valued, non-decreasing and square integrable pure jump process, and $V$ is a continuous semimartingale starting in zero with
\begin{align}
	\E\Big[ \max_{t \in [0,T]} \vert V_t\vert^2 \Big]
	<
	\infty.
	\label{eqVMaxLp}
\end{align}
The set of all admissible trading strategies is denoted $\mathcal A^0$.
\end{definition}
Our goal is now to solve the optimization problem
\begin{align*}
	\min_{\theta \in \mathcal A^0} J^0(\theta).
\end{align*}
To this end, we verify directly that the limit process $\hat X^0$ obtained in the previous section is optimal. The results of Section \ref{SectModelEtaPositive} show that the process has the following representation:
\[
	\hat X^0_t = x_0 - \hat j^-_t + \hat V_t
\]	
where the jump process $\hat j^-$ and the continuous part are given by, respectively
\begin{align*}
	\hat j^-_t
	& :=
	x_0 - \hat X^0_0
	+ \mathbbm 1_{\{T\}} (t) \hat X^0_{t-}, \\
	\hat V_t
	& :=
	(\varphi_t)^{-2} \rho_t \big( 2 - B^0_t \big) \hat Z^0_t
	- (\varphi_0)^{-2} \rho_0 \big( 2 - B^0_0 \big) \hat Z^0_0.
\end{align*}
In view of Assumption \ref{assFactorModel} and because $B^0$ is a continuous semimartingale and $\hat Z^0$ is differentiable, the process $\hat V$ is a continuous semimartingale starting in zero. Hence the following holds.
\begin{lemma} \label{thmHatThetaAdmissiblelemma}
The strategy $\hat\theta = (0,\hat j^-, \hat V)$ is admissible.
\end{lemma}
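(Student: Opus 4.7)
The plan is to verify the clauses of Definition \ref{assAdmissible} one by one for $\hat\theta=(0,\hat j^-,\hat V)$, namely the RCLL/predictable/non-decreasing/square-integrable pure-jump structure of the two jump components, the continuous semimartingale structure of $\hat V$ together with \eqref{eqVMaxLp}, and finally the liquidation constraint $X^{\hat\theta}_T=0$. The component $j^+\equiv 0$ satisfies every requirement trivially, so the work concentrates on $\hat j^-$, $\hat V$, and the terminal constraint.

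For $\hat j^-$ the key observation is that it is constant on $[0;T)$ with at most two jumps, both at the deterministic times $0$ and $T$, so it is automatically RCLL and of pure-jump type. I would check non-decreasingness and boundedness of the jump sizes together: from the explicit formula $\hat X^0_t=\varphi_t^{-2}\rho_t(2-B^0_t)\hat Z^0_t$ and $\varphi_t^2=\lambda_t+2\gamma\rho_t$ one reads off $\varphi_t^{-2}\rho_t(2-B^0_t)\le 1/\gamma$, so $\hat X^0_t\le \hat Z^0_t/\gamma\le x_0$, using that $\hat Z^0$ is non-increasing with $\hat Z^0_0=\gamma x_0$ by \eqref{eqDdtZ0EqEhoY0}. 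Combined with $B^0\in(0;1]$ from Lemma \ref{thmexistsB0lemma} and the positivity of $\rho,\varphi^{-2},\hat Z^0$, this places both jumps $x_0-\hat X^0_0$ and $\hat X^0_{T-}$ in $[0;x_0]$. Predictability reduces to checking that the two jump sizes are $\mathcal F_{t-}$-measurable at the respective jump times: $\hat X^0_0$ is deterministic (since $\chi_0=\overline\chi_0$ is deterministic, $\hat Z^0_0=\gamma x_0$, and $B^0_0=h(0,\overline\chi_0)$ is deterministic by Lemma \ref{thmexistsB0lemma}), and $\hat X^0_{T-}$ is the left limit of an adapted process and hence $\mathcal F_{T-}$-measurable. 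Square integrability follows immediately from $\hat j^-\le 2x_0$.

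For $\hat V$ the identity $\hat V_0=0$ is visible from the definition, and boundedness by a deterministic constant follows from the uniform bounds $\varphi\ge\underline\varphi$, $\rho\le\overline\rho$, $B^0\in(0;1]$, $\hat Z^0\le\gamma x_0$; this in particular delivers \eqref{eqVMaxLp}. The semimartingale property I would establish in four small steps: (i) by Assumption \ref{assFactorModel} and It\^o's formula applied to $(t,x)\mapsto f(t,x)$, both $\rho=f^\rho(\cdot,\chi_\cdot)$ and $\lambda=f^\lambda(\cdot,\chi_\cdot)$ are continuous semimartingales, and since $\varphi^2=\lambda+2\gamma\rho\ge 2\gamma\underline\rho>0$, another application of It\^o to $y\mapsto y^{-1}$ shows that $\varphi^{-2}$ is a continuous semimartingale; (ii) the BSDE \eqref{eqBsdeB0} furnishes the semimartingale decomposition of $B^0$ directly; (iii) $\hat Z^0$ solves the ODE \eqref{eqDdtZ0EqEhoY0} with bounded measurable coefficient, so it is pathwise absolutely continuous and thus a continuous process of bounded variation; (iv) products and linear combinations of continuous semimartingales are continuous semimartingales, so $\hat V$ is.

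The liquidation constraint then comes out by direct computation: continuity of $\hat V$ and the explicit definition give $\hat V_T=\lim_{t\uparrow T}\varphi_t^{-2}\rho_t(2-B^0_t)\hat Z^0_t-\hat X^0_0=\hat X^0_{T-}-\hat X^0_0$, while $\hat j^-_T=x_0-\hat X^0_0+\hat X^0_{T-}$, so that $X^{\hat\theta}_T=x_0-\hat j^-_T+\hat V_T=0$. The main obstacle is step (iv) of the semimartingale argument, which needs to reconcile three different sources of regularity in one process: the It\^o-diffusion structure of $\chi$, the quadratic BSDE structure of $B^0$, and the random-ODE structure of $\hat Z^0$; all other verifications are routine once the a priori bounds from Lemma \ref{thmexistsB0lemma} and Assumption \ref{assFactorModel} are in hand.
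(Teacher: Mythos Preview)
Your proposal is correct and follows the same approach as the paper, just with considerably more detail. The paper disposes of this lemma in a single sentence preceding its statement (``In view of Assumption \ref{assFactorModel} and because $B^0$ is a continuous semimartingale and $\hat Z^0$ is differentiable, the process $\hat V$ is a continuous semimartingale starting in zero''), leaving the remaining admissibility conditions implicit; you spell them all out, and your verifications are sound. One shortcut you could take: the bounds $\hat X^0_t\in(0;x_0)$ on $(0;T)$ are already recorded as Lemma \ref{thmXY0BdStarStarlemma}, so for the non-decreasingness of $\hat j^-$ you only need to add the elementary check at $t=0$, which you in any case carry out correctly.
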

In order to prove that $\hat\theta$ is optimal, we approximate the cost and the portfolio process associated with any strategy $\theta \in \mathcal A^0$ by the cost and portfolio processes corresponding to absolutely continuous trading strategies. To this end, we first approximate the continuous semimartingale part $V$ by differentiable processes. The proof of the following Lemma is given in Section \ref{SectVApproximation}.

\begin{lemma} \label{thmWPApproximationExistslemma}
For all $\theta = (j^+,j^-,V) \in \mathcal A^0$ and for all $\beta,\delta > 0$, there exists a constant $\nu > 0$ and an adapted and continuous $\dot V^{\beta,\nu} \colon \Omega \times [0,T] \to [-\frac \beta \nu,\frac \beta \nu]$ such that
\begin{align}
	\max_{t \in [0,T]} \bigg\vert \int^t_0 \dot V^{\beta,\nu}_s \, \ds \bigg\vert
	\le
	\max_{t \in [0,T]} \vert V_t\vert.
	\label{eqMaxVBetaNuLeMaxV}
\end{align}
and
\begin{align}
	\IP\bigg[ \sup_{t \in [0,T]} \bigg\vert \int^t_0 \dot V^{\beta,\nu}_s \, \ds - V_t \bigg\vert \le 3\beta \bigg]
	\ge
	1 - \delta.
	\label{eqIntDotVMinusVSmall}
\end{align}
\end{lemma}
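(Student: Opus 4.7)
My plan is to define $U_t := \int_0^t \dot V^{\beta,\nu}_s\,\ds$ as the pathwise solution of the ``velocity-limited chase ODE''
\[
\dot U_t \;=\; \phi\!\bigl((V_t-U_t)/\nu\bigr),\qquad U_0=0,\qquad \phi(x) \;:=\; \bigl(x\wedge \tfrac{\beta}{\nu}\bigr)\vee\bigl(-\tfrac{\beta}{\nu}\bigr),
\]
and to set $\dot V^{\beta,\nu}_t := \phi((V_t-U_t)/\nu)$. Since the right-hand side is $(1/\nu)$-Lipschitz in $u$ and continuous in $t$ (because $V$ is continuous), Cauchy--Lipschitz yields a unique continuous solution on $[0;T]$, which is $\mathcal F_t$-adapted by Picard iteration. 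By definition, $\dot V^{\beta,\nu}$ is continuous and pointwise valued in $[-\beta/\nu;\beta/\nu]$, so the derivative bound is immediate.

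For the pathwise inventory bound I introduce the adapted, continuous, non-decreasing running maximum $\widehat M_t:=\sup_{s\le t}|V_s|$. At any instant with $U_t=\widehat M_t$ one has $V_t-U_t\le|V_t|-\widehat M_t\le 0$, hence $\dot U_t\le 0$; combined with $\widehat M$ being non-decreasing, a one-sided barrier/comparison argument (analyzing the first time $U-\widehat M$ would become positive) forces $U_t\le\widehat M_t$ for every $t$. The lower estimate $U_t\ge -\widehat M_t$ is symmetric, giving $|U_t|\le\widehat M_t\le\max_{s\in[0;T]}|V_s|$.

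For the tracking estimate, I use that $V$ is almost surely uniformly continuous on $[0;T]$, so its modulus of continuity $\omega_V(\nu)$ converges to $0$ a.s.\ (and in probability) as $\nu\downarrow 0$. I therefore choose a deterministic $\nu>0$ so small that $\IP[G]\ge 1-\delta$ with $G:=\{\omega_V(\nu)\le\beta\}$. On $G$ I claim $\sup_t|R_t|\le 2\beta$ for $R:=V-U$. Suppose for contradiction that $R_{t_0}>3\beta$ at some $t_0$, and let $s_0:=\sup\{s\le t_0\colon R_s\le\beta\}$; continuity of $R$ gives $R_{s_0}=\beta$ and $R_s>\beta$ on $(s_0;t_0]$, so the clip saturates at $\beta/\nu$ and
\[
R_{t_0}-R_{s_0} \;=\; (V_{t_0}-V_{s_0})\;-\;\tfrac{\beta}{\nu}(t_0-s_0).
\]
Partition $[s_0;t_0]$ into $k:=\lceil(t_0-s_0)/\nu\rceil\le 1+(t_0-s_0)/\nu$ sub-intervals of length $\le\nu$. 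Telescoping with $\omega_V(\nu)\le\beta$ on $G$ gives $V_{t_0}-V_{s_0}\le k\beta$, while $\frac{\beta}{\nu}(t_0-s_0)\ge(k-1)\beta$, so $R_{t_0}-R_{s_0}\le\beta$ and $R_{t_0}\le 2\beta$, a contradiction. The case $R_{t_0}<-3\beta$ is symmetric.

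The main difficulty is reconciling three requirements that naturally pull in different directions: a pathwise derivative bound, a pathwise inventory bound by $\max_s|V_s|$, and a high-probability uniform tracking bound. A direct mollification such as the backward moving average $\tfrac{1}{\nu}\int_{(t-\nu)^+}^t V_s\,\ds$ automatically gives the inventory bound by convexity, but its derivative is only bounded by $\omega_V(\nu)/\nu$ and so may exceed $\beta/\nu$ off the good event, and naive clipping then destroys the inventory bound. The chase-ODE formulation resolves this tension cleanly: clipping enforces the derivative bound for every $\omega$, the running-maximum barrier enforces the inventory bound for every $\omega$, and the combinatorial telescoping argument yields the tracking bound with the slack needed to absorb both the $\beta$-step at $s_0$ and the at most $\beta$ accrued while the clip is saturated.
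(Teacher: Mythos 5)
Your construction is exactly the paper's: your $\phi(y/\nu)$ coincides with the paper's clipped driver $f^{\beta,\nu}(y)$, the inventory bound \eqref{eqMaxVBetaNuLeMaxV} is obtained by the same one-sided comparison/barrier argument, and the tracking bound \eqref{eqIntDotVMinusVSmall} is proved on the same good event of small modulus of continuity via the same saturation-of-the-clip mechanism, so the proof is correct and essentially identical. The only (cosmetic) differences are that you use the running maximum as barrier instead of the constant $\max_s|V_s|+\eps$, and you locate the last crossing of level $\beta$ and telescope (yielding even the sharper bound $2\beta$) rather than stepping back $\nu$ from the first hitting time of $3\beta$.
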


Next, we approximate the portfolio process $X^\theta$ by a portfolio process associated with an absolutely continuous strategy. To this end, for all $\theta\in \mathcal A^0$ and $\beta,\nu,\eps > 0$, we define the integrable process
\begin{align*}
	\xi^{\theta,\beta,\nu,\eps}_t
	:=
	\begin{cases}
	\frac 1 \eps ( j^+_t - j^+_{t-\eps} )
	- \frac 1 \eps ( j^-_t - j^-_{t-\eps} )
	+ \dot V^{\beta,\nu}_t,
	& t \le T - \eps,
	\\
	- \frac 1 \eps \big( x_0 + \int^{T-\eps}_0 \xi^{\theta,\beta,\nu,\eps}_u \, \du \big),
	& t > T-\eps.
	\end{cases}
\end{align*}
In view of square-integrability of $j^\pm$, we see that $\xi^{\theta,\beta,\nu,\eps}$ belongs to $\mathcal A$.
The corresponding portfolio process is denoted $X^{\theta,\beta,\nu,\eps}$. The proof of the following Lemma is given in Section \ref{SectConvergence}.

\begin{lemma} \label{thmEintDeltaXsqconv0lemma}
For all $\theta = (j^+,j^-,V) \in \mathcal A^0$ and for all $\delta>0$, there exist $\beta>0$, $\nu>0$ and $\eps>0$ such that
\begin{align*}
	\E\bigg[ \int^T_0
		\big( X^{\theta,\beta,\nu,\eps}_t - X^\theta_t \big)^2
	\, \dt \bigg]
	\le \delta.
\end{align*}
\end{lemma}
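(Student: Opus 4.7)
The plan is to split $[0,T]$ into $[0,T-\eps]$ and $[T-\eps,T]$, show that on the former the $L^2$-error can be made small by a coordinated choice of $\beta$, $\nu$, and $\eps$, and that on the latter it is $O(\eps)$ due to an $L^2$-bound on both processes.

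First I would observe that for $t \le T-\eps$, with the convention $j^\pm_s = 0$ for $s < 0$, a change of variables gives
\begin{equation*}
X^{\theta,\beta,\nu,\eps}_t - X^\theta_t = \bigl(M^+_\eps(t) - j^+_t\bigr) - \bigl(M^-_\eps(t) - j^-_t\bigr) + \Bigl(\int_0^t \dot V^{\beta,\nu}_s \ds - V_t\Bigr),
\end{equation*}
where $M^\pm_\eps(t) := \eps^{-1} \int_{(t-\eps)\vee 0}^t j^\pm_s \ds$ is a moving average of $j^\pm$. For the jump terms, monotonicity of $j^\pm$ gives $0 \le j^\pm_t - M^\pm_\eps(t) \le j^\pm_T$, while $M^\pm_\eps(t) \to j^\pm_{t-} = j^\pm_t$ at every continuity point of $j^\pm$ (which is $dt$-a.e., since jumps are at most countable). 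Since $j^\pm_T \in L^2$, dominated convergence yields $\E \int_0^T (M^\pm_\eps(t) - j^\pm_t)^2 \dt \to 0$ as $\eps \to 0$, independently of $\beta$ and $\nu$.

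For the $V$-term I would invoke Lemma \ref{thmWPApproximationExistslemma} with some auxiliary $\delta_V > 0$ to obtain an event $A$ with $\IP[A] \ge 1-\delta_V$ on which the uniform error is at most $3\beta$. On $A^c$, the bound \eqref{eqMaxVBetaNuLeMaxV} implies that the integrand is pointwise at most $4 \max_s |V_s|^2$, so
\begin{equation*}
\E \int_0^T \Bigl(\int_0^t \dot V^{\beta,\nu}_s \ds - V_t\Bigr)^2 \dt \le 9T \beta^2 + 4T \, \E\bigl(\max_s |V_s|^2 \mathbbm{1}_{A^c}\bigr).
\end{equation*}
By \eqref{eqVMaxLp} and absolute continuity of the integral, the second term tends to $0$ as $\delta_V \to 0$; thus choosing $\beta$ small first and then $\delta_V$ small (which fixes $\nu$ via Lemma \ref{thmWPApproximationExistslemma}) makes this piece arbitrarily small, uniformly in $\eps$.

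On $[T-\eps,T]$, the rate $\xi^{\theta,\beta,\nu,\eps}$ is constant and equal to $-X^{\theta,\beta,\nu,\eps}_{T-\eps}/\eps$, so $X^{\theta,\beta,\nu,\eps}$ interpolates linearly to $0$; hence $|X^{\theta,\beta,\nu,\eps}_t| \le x_0 + j^+_T + j^-_T + \max_s |V_s|$ using \eqref{eqMaxVBetaNuLeMaxV}, and the same $L^2$-bound holds for $|X^\theta_t|$ since $X^\theta_T = 0$. Integrating the squared difference over an interval of length $\eps$ therefore contributes only $O(\eps)$. The proof is then finished by first choosing $\beta$ and $\delta_V$ (and the corresponding $\nu$) to handle the $V$-error and then choosing $\eps$ small enough to absorb both the jump smoothing and the boundary term. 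The main hurdle is the coordinated parameter choice in the correct order ($\beta \to \delta_V \to \nu \to \eps$); the analytic content reduces to the dominated-convergence argument for the jump smoothing, which crucially uses monotonicity and $L^2$-integrability of $j^\pm$.
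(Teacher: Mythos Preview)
Your proof is correct and takes a genuinely simpler route than the paper's.

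The main difference lies in how the jump-smoothing error is handled. The paper first proves an auxiliary lemma (Lemma~\ref{thmEintDeltasqjconv0lemma}) showing that $\E\int_0^T (j_t - j_{t-\eps})^2\,\dt \to 0$ via an inductive doubling argument, and then feeds this into a somewhat intricate four-term triangle-inequality decomposition of $X^{\theta,\beta,\nu,\eps}_t - X^\theta_t$. You bypass this entirely by recognising that $\int_0^t \eps^{-1}(j^\pm_s - j^\pm_{s-\eps})\,\ds$ is a moving average $M^\pm_\eps(t)$, bounding $|j^\pm_t - M^\pm_\eps(t)| \le j^\pm_T \in L^2(\Omega)$ via monotonicity, and invoking dominated convergence on $\Omega\times[0,T]$ (using that each path of $j^\pm$ has only countably many discontinuities). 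This is more transparent and avoids the combinatorics of the doubling trick.

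Similarly, on $[T-\eps,T]$ the paper splits $X^{\theta,\beta,\nu,\eps}_{T-\eps}$ into its constituent pieces and estimates each, whereas you obtain a single $\eps$-independent pathwise bound $|X^{\theta,\beta,\nu,\eps}_t| \vee |X^\theta_t| \le x_0 + j^+_T + j^-_T + \max_s |V_s|$ (crucially using \eqref{eqMaxVBetaNuLeMaxV} to make this uniform in $\beta,\nu$) and conclude the contribution is $O(\eps)$ in one line. Your handling of the $V$-term via uniform integrability is essentially the same as the paper's.

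The paper's approach has the minor advantage of making the dependence of each term on the parameters fully explicit, but your dominated-convergence argument is shorter, more elementary, and identifies the correct order of parameter selection ($\beta \to \nu \to \eps$) just as cleanly.
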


For all $\xi \in \mathcal A$ with $\int^T_0 \xi_s \, \ds = -x_0$, we can define
$V^\xi_t := \int^t_0 \xi_s \, \ds$
such that $(0,0,V^\xi) \in \mathcal A^0$ and, for all $\eta>0$, we have
\begin{align}
	J^{\eta,\infty}(\xi)
	=
	J^0 \big( 0, 0, V^\xi \big)
	+ \frac \eta 2 \E\bigg[ \int^T_0 (\xi_t)^2 \, \dt \bigg].
	\label{eqJEtaEqJ0PlusEIntXiSq}
\end{align}

The preceding lemma allows us to establish a cost estimate. The proof is given in Section \ref{SectCostApproximation}.

\begin{lemma} \label{thmJetaconvJ0lemma}
For all $C > 0$, there exists a constant $D(C) > 0$ such that the following holds:

For all $\theta = (j^+,j^-,V) \in \mathcal A^0$ with $\E[ \int^T_0 (X^\theta_t)^2 \, \dt ] \le C$ and for all $\xi \in \mathcal A$ with $\int^T_0 \xi_s \, \ds = -x_0$,
\begin{align*}
	&
	\Big\vert J^0(\theta) - J^0\big(0,0,V^\xi\big) \Big\vert
	\le
	 D(C) \cdot \bigg(
		\sqrt{ \E\bigg[ \int^T_0 \big(X^\theta_t - X^{\xi}_t\big)^2 \, \dt \bigg] }
		+ \E\bigg[ \int^T_0 \big(X^\theta_t - X^{\xi}_t\big)^2 \, \dt \bigg]
	\bigg).
\end{align*}
\end{lemma}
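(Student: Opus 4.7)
The strategy is to reduce both cost functionals to a common clean form via integration by parts and then bound their difference by $L^2$ estimates.

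Applying the product rule to $X^\theta Y^\theta$, using $dY^\theta = \gamma dX^\theta - \rho Y^\theta \dt$, the initial conditions $X^\theta_{0-} = x_0$, $Y^\theta_{0-} = 0$, and the liquidation constraint $X^\theta_T = 0$, I obtain
\begin{align*}
\int_{(0,T]} Y^\theta_{s-} dX^\theta_s = \tfrac{\gamma}{2}x_0^2 - \tfrac{\gamma}{2}[X^\theta]_T + \int_0^T \rho_s X^\theta_s Y^\theta_s \ds.
\end{align*}
Decomposing $[X^\theta]_T = [V]_T + (\Delta X^\theta_0)^2 + \sum_{0 < s \le T}(\Delta X^\theta_s)^2$ and rewriting $\int_{(0,T]} \tfrac{1}{2}(Y^\theta_{s-} + Y^\theta_s) dX^\theta_s = \int_{(0,T]} Y^\theta_{s-} dX^\theta_s + \tfrac{\gamma}{2}\sum_{0<s\le T}(\Delta X^\theta_s)^2$ via $\Delta Y^\theta = \gamma \Delta X^\theta$, the initial-block cost $\tfrac{\gamma}{2}(j^+_0 - j^-_0)^2 = \tfrac{\gamma}{2}(\Delta X^\theta_0)^2$ and the quadratic-variation cost $\tfrac{\gamma}{2}[V]_T$ in $J^0(\theta)$ cancel precisely with the corresponding corrections, yielding the clean representation
\begin{align*}
J^0(\theta) = \tfrac{\gamma}{2}x_0^2 + \E\bigg(\int_0^T \rho_s X^\theta_s Y^\theta_s \ds + \tfrac{1}{2}\int_0^T \lambda_s (X^\theta_s)^2 \ds\bigg).
\end{align*}
The same identity applied to $(0, 0, V^\xi)$ (for which $j^\pm \equiv 0$, $[V^\xi]_T = 0$, $X^{(0,0,V^\xi)} = X^\xi$, and $Y^{(0,0,V^\xi)} = Y^\xi$) gives an analogous expression, so the $\tfrac{\gamma}{2}x_0^2$ cancels in the difference.

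From here I use the factorizations $X^\theta Y^\theta - X^\xi Y^\xi = (X^\theta - X^\xi) Y^\theta + X^\xi (Y^\theta - Y^\xi)$ and $(X^\theta)^2 - (X^\xi)^2 = (X^\theta - X^\xi)(X^\theta + X^\xi)$, and apply Cauchy-Schwarz to each piece. The key quantity to control is $\E\int_0^T (Y^\theta - Y^\xi)^2 \dt$: since $Y^\theta - Y^\xi$ satisfies $d(Y^\theta - Y^\xi) = \gamma d(X^\theta - X^\xi) - \rho(Y^\theta - Y^\xi) \dt$ with vanishing initial value and $(Y^\theta - Y^\xi) - \gamma(X^\theta - X^\xi)$ is of finite variation, one obtains the pointwise bound $|Y^\theta_t - Y^\xi_t| \le \gamma|X^\theta_t - X^\xi_t| + \overline\rho \int_0^t |Y^\theta_s - Y^\xi_s| \ds$; Gronwall's inequality then yields $\E\int_0^T (Y^\theta - Y^\xi)^2 \dt \le C_1 \E\int_0^T (X^\theta - X^\xi)^2 \dt$ for some $C_1 = C_1(\gamma, \overline\rho, T)$. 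The remaining norms $\Vert X^\xi\Vert_{L^2}$ and $\Vert X^\theta + X^\xi\Vert_{L^2}$ are controlled by $2\Vert X^\theta\Vert_{L^2} + \Vert X^\theta - X^\xi\Vert_{L^2}$ via the triangle inequality.

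Collecting the contributions, the linear terms in $\sqrt{\E\int (X^\theta - X^\xi)^2 \dt}$ pick up a coefficient involving $\Vert X^\theta\Vert_{L^2}$ and $\Vert Y^\theta\Vert_{L^2}$ (both finite by admissibility: $j^\pm$ are square-integrable, $\E\max|V|^2 < \infty$, and $Y^\theta$ inherits square integrability through its linear dynamics), while cross terms produce the quadratic contribution $\E\int(X^\theta - X^\xi)^2 \dt$, giving the desired bound with $C(\theta)$ depending on $\gamma, \overline\rho, \overline\lambda, T, \Vert X^\theta\Vert_{L^2}, \Vert Y^\theta\Vert_{L^2}$. The main obstacle is the careful bookkeeping of the integration by parts: one must verify that the initial-block cost $\tfrac{\gamma}{2}(j^+_0 - j^-_0)^2$ and the $\tfrac{\gamma}{2}[V]_T$ cost in $J^0$ are exactly what is needed to absorb the jump and continuous-bracket corrections produced when expanding $\int_{(0,T]}\tfrac{1}{2}(Y^\theta_{s-} + Y^\theta_s) dX^\theta_s$ via It\^o's formula, leaving the $\theta$-independent constant $\tfrac{\gamma}{2}x_0^2$ and the clean bilinear terms. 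Once this identity is established, the rest is standard $L^2$ estimation.
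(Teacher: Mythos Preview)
Your argument is correct and takes a genuinely different route from the paper.

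The paper proceeds via Lemma~\ref{thmintYdXeqlemma}, which applies It\^o's formula to $(Y^\theta)^2$ and yields the representation
\[
J^0(\theta)=\E\bigg(\frac{1}{2\gamma}\big(Y^\theta_T\big)^2+\frac{1}{\gamma}\int_0^T\rho_s\big(Y^\theta_s\big)^2\,\ds+\frac12\int_0^T\lambda_s\big(X^\theta_s\big)^2\,\ds\bigg).
\]
The difference $J^0(\theta)-J^0(0,0,V^\xi)$ is then controlled by bounding differences of squares (Lemma~\ref{thmIntDiffSqUSqVBdlemma}) and invoking the explicit $Y$-estimates of Lemma~\ref{thmYpropertieslemma}, in particular the terminal bound \eqref{eqEYTsqbd} on $\E\big((Y^\theta_T-Y^\xi_T)^2\big)$.

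You instead apply It\^o to $X^\theta Y^\theta$ and obtain the bilinear representation
\[
J^0(\theta)=\frac{\gamma}{2}x_0^2+\E\bigg(\int_0^T\rho_s X^\theta_s Y^\theta_s\,\ds+\frac12\int_0^T\lambda_s\big(X^\theta_s\big)^2\,\ds\bigg),
\]
which has no terminal term. The difference of costs then reduces to integrals of $X^\theta Y^\theta-X^\xi Y^\xi$ and $(X^\theta)^2-(X^\xi)^2$, and your Gronwall argument on the absolutely continuous process $(Y^\theta-Y^\xi)-\gamma(X^\theta-X^\xi)$ gives $\|Y^\theta-Y^\xi\|_{L^2}\le C\|X^\theta-X^\xi\|_{L^2}$ directly. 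The two representations are equivalent (one checks this via $Y^\theta=\gamma X^\theta+Z^\theta$ with $\dot Z^\theta=-\rho Y^\theta$, $Z^\theta_0=-\gamma x_0$, and $X^\theta_T=0$), but yours avoids both the terminal-value estimate and the explicit variation-of-constants formula for $Y$ used in Lemma~\ref{thmYpropertieslemma}, at the cost of needing the finite quantity $\|Y^\theta\|_{L^2}$ in the constant $C(\theta)$, which follows from the same linear dynamics. Overall your route is somewhat more streamlined.
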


As a consequence of the previous results, the optimal instantaneous price impact term converges to zero as $\eta \to 0$. The proof is given in Section \ref{SectInstantaneousCostConvergence}.

\begin{lemma} \label{thmetaEintxisqconv0lemma}
We have
\begin{align*}
	&
	\lim_{\eta \to 0} \frac \eta 2 \E\bigg[ \int^T_0 \big( \hat\xi^{\eta,\infty}_t \big)^2 \, \dt \bigg]
	=
	0.
\end{align*}
\end{lemma}

The cost estimate in Lemma \ref{thmJetaconvJ0lemma} allows us to establish the optimality of the trading strategy $\hat\theta$ by using the optimality of $\hat\xi^{\eta,\infty}$ in the strict liquidation model with absolutely continuous strategies. It turns out that the minimal trading costs are fully determined by the initial value $B^0_0$ of the process $B^0$ along with the impact factor $\gamma$ and the initial portfolio.

\begin{theorem} \label{thmthetaoptimalJ0thm}
It holds that
\begin{align*}
	\min_{\theta \in \mathcal A^0} J^0(\theta)
	=
	J^0 \big( \hat\theta \big)
	=
	\frac 1 2 \gamma (x_0)^2 B^0_0.
\end{align*}
\end{theorem}

\begin{proof}
Let us assume to the contrary that $\hat\theta$ does not minimize the cost functional $J^0$ over the set $\mathcal A^0$.
Then, there exist a strategy $\theta = (j^+,j^-,V) \in \mathcal A^0$ and a constant $\delta \in (0,\infty)$ such that $J^0(\theta) + \delta \le J^0(\hat\theta)$. Moreover, due to \eqref{eqJEtaEqJ0PlusEIntXiSq} and Lemma \ref{thmJetaconvJ0lemma}, for all $\eta,\beta,\nu,\eps \in (0,\infty)$, we have (for convecience, let $p(x) := x + \sqrt x$)
\begin{align*}
	&
	 J^{\eta,\infty}\big( \xi^{\theta,\beta,\nu,\eps} \big)
	- J^{\eta,\infty}\big( \hat\xi^{\eta,\infty} \big)
	\\\le&
	 J^{\eta,\infty}\big( \xi^{\theta,\beta,\nu,\eps} \big)
	- J^0( \theta )
	+ J^0( \hat\theta )
	- \delta
	- J^0\big( 0, 0, V^{\hat\xi^{\eta,\infty}} \big)
	\\\le&
	 - \delta
	+ \frac \eta 2 \E\bigg[ \int^T_0 \big( \xi^{\theta,\beta,\nu,\eps}_t \big)^2 \, \dt \bigg]
	+ \Big|
		J^0( \theta )
		- J^0\big( 0, 0, V^{\xi^{\theta,\beta,\nu,\eps}} \big)
	\Big|
	+ \Big|
		J^0\big( \hat\theta \big)
		- J^0\big( 0, 0, V^{\hat\xi^{\eta,\infty}} \big)
	\Big|
	\\\le&
	 - \delta
	+ \frac \eta 2 \E\bigg[ \int^T_0 \big( \xi^{\theta,\beta,\nu,\eps}_t \big)^2 \, \dt \bigg]
	+ D \bigg(\E\bigg[ \int^T_0 \big( X^\theta_t \big)^2 \, \dt \bigg] \bigg)
	\cdot p\bigg( \E\bigg[ \int^T_0 \big( X^\theta_t - X^{\theta,\beta,\nu,\eps}_t \big)^2 \, \dt \bigg] \bigg)
	\\&
	+ D \bigg(\E\bigg[ \int^T_0 \big( X^{\hat\theta}_t \big)^2 \, \dt \bigg) \bigg]
	\cdot p\bigg( \E\bigg[ \int^T_0 \big( \hat X^0_t - \hat X^{\eta,\infty}_t \big)^2 \, \dt \bigg] \bigg).
\end{align*}
According to Lemma \ref{thmEintDeltaXsqconv0lemma}, we can first choose $\beta,\nu,\eps\in (0,\infty)$ and then, due to $|\hat X^{\eta,\infty}_t| \le x_0$ (cf. Theorem \ref{thmXYEtaPosBdthm}), choose $\eta \in (0,\infty)$ small enough according to Theorem \ref{thmxConv0Specificthm} such that the sum is negative. This contradicts to the optimality of $\hat\xi^{\eta,\infty}$ for $J^{\eta,\infty}$. This proves the optimality of $\hat\theta$.

It remains to compute $J^0 ( \hat\theta )$. In view of \eqref{eqJEtaEqJ0PlusEIntXiSq}, for all $\eta \in (0,\infty)$, we have
\begin{align*}
	J^0 \big( \hat\theta \big)
	=
	J^0 \big( \hat\theta \big)
	- J^0 \big( 0, 0, V^{\hat\xi^{\eta,\infty}} \big)
	- \frac \eta 2 \E\bigg[ \int^T_0 \big( \xi^{\eta,\infty}_t \big)^2 \, \dt \bigg]
	+ J^{\eta,\infty} \big( \hat\xi^{\eta,\infty} \big).
\end{align*}
The difference of the first two terms converge to zero as $\eta \to 0$, which is verified using Lemma \ref{thmJetaconvJ0lemma} and Theorem \ref{thmxConv0Specificthm} as in the proof of Theorem \ref{thmthetaoptimalJ0thm}. The third term converges to zero by Lemma \ref{thmetaEintxisqconv0lemma}. Hence, using the representation of the value function given in \cite{graewe_horst},
\begin{align*}
	J^0 \big( \hat\theta \big)
	=&
	\lim_{\eta \to 0} J^{\eta,\infty} \big( \hat\xi^{\eta,\infty} \big)
	=
	\lim_{\eta \to 0} \frac 1 2 A^{\eta,\infty}_0 (x_0)^2
	=
	\frac {(x_0)^2} 2 \lim_{\eta \to 0} \big(
		\sqrt{\eta} D^{\eta,\infty}_0
		+ \gamma B^{\eta,\infty}_0
	\big)
	=
	\frac 1 2 \gamma (x_0)^2 B^0_0.
\end{align*}

\end{proof}

\begin{remark}
If $\lambda \equiv 0$, then our model is a special case of the model analyzed in \cite{ackermann_kruse_urusov}, which also contains cases when there is no optimal trading strategy. However, since $\gamma$ is constant in our model, the processes ``$\mu$'' and ``$\sigma$'' introduced in \cite{ackermann_kruse_urusov} are equal to zero. This implies that the equation ``$\tilde\beta = Y$'' holds in their notation. As shown in Section 5 of \cite{ackermann_kruse_urusov}, the process ``$M^\perp$'' introduced therein is also equal to zero, which implies that ``$Y$'' and hence ``$\tilde\beta$'' is a semimartingale. Theorem 2.3 (ii) in \cite{ackermann_kruse_urusov} confirms that, under this property, an optimal trading strategy does indeed exist.
\end{remark}


\section{Proofs for Section \ref{SectModelEtaPositive}} \label{SectProofsPart1}

This section proves the results stated in Section \ref{SectModelEtaPositive}. We start with a priori estimates and regularity properties for the coefficient processes that specify the optimal state processes.

\subsection{A priori estimates and regularity properties} \label{SectAprioriRegularity}


\subsubsection{The case $\eta > 0$} \label{SectEtaPos}

The following estimates have been established in \cite{graewe_horst,horst_xia}, except for the upper bound on $E^{\eta,N}$ for finite $N$, which is stronger than the corresponding one in \cite{horst_xia}. It can be established using the same arguments as in the proof of Proposition 3.2 in \cite{graewe_horst} noting that $D^{\eta,N}_T < \infty$ if $N$ is finite.

\begin{lemma} \label{thmBDEEtaPosApriorilemma}
Let
$$\overline\kappa := \sqrt{2 \max(\overline\lambda,\gamma\overline\rho)}.$$
For all $\eta \in (0,\infty)$, $N\in \mathcal N$ and $s,t \in [0,T)$ with $s \le t$, we have that
\begin{align*}
	 e^{ -\overline\rho (T-s) }
	&\le
	B^{\eta,N}_s
	\le
	1,
	\\
	0
	&\le E^{\eta,N}_s
	\le \gamma^{-1} \overline\kappa \tanh\big( \sqrt{\eta^{-1}} \overline\kappa (T-s) \big)
	\le \gamma^{-1} \overline\kappa ,
	\quad
	\\
	0
	&< D^{\eta,N}_s
	\le \overline\kappa \coth\big(\sqrt{\eta^{-1}} \overline\kappa (T-s)\big)
	\le \overline\kappa \coth\big(\sqrt{\eta^{-1}} \overline\kappa (T-t)\big).
\end{align*}
\end{lemma}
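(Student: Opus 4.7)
The bounds on $B^{\eta,N}$ and $D^{\eta,N}$, the lower bound $E^{\eta,N} \ge 0$, and the upper bound on $E^{\eta,\infty}$ in the singular case are all direct quotes from \cite{graewe_horst,horst_xia}, so I would simply invoke them. The only assertion requiring a genuinely new argument is the explicit $\tanh$-upper bound on $E^{\eta,N}$ for finite $N$, which is sharper than the corresponding estimate in \cite{horst_xia}. My plan is to adapt almost verbatim the proof of Proposition~3.2 in \cite{graewe_horst}.

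In that proof, the analogous bound for the singular case $N=\infty$ is obtained by a monotone approximation with finite penalizations followed by passing to the limit, precisely because $D^{\eta,\infty}_T = \infty$ introduces a terminal singularity that blocks a direct comparison up to the horizon. For $N<\infty$ the situation simplifies: since $D^{\eta,N}_T = \eta^{-1/2}(N-\gamma)$ is finite, the whole system \eqref{GHFBSDE} has bounded terminal data, so the comparison can be carried out directly on all of $[0;T]$, skipping the approximation step entirely. This is the content of the hint ``noting that $D^{\eta,N}_T < \infty$ if $N$ is finite.''

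Concretely, I would introduce the explicit solution
\begin{align*}
\bar E^\eta_s := \gamma^{-1}\overline\kappa \tanh\bigl(\sqrt{\eta^{-1}}\,\overline\kappa (T-s)\bigr)
\end{align*}
of the scalar Riccati ODE $\sqrt\eta\,\dot{\bar E}^\eta = \gamma (\bar E^\eta)^2 - \gamma^{-1}\overline\kappa^2$ with $\bar E^\eta_T = 0$, and apply an It\^o comparison to $E^{\eta,N} - \bar E^\eta$. Rewriting the dynamics of $E^{\eta,N}$ from the BSDE \eqref{GHFBSDE} for $C^{\eta,N}$ via $\sqrt\eta\, E^{\eta,N} = \gamma C^{\eta,N} - B^{\eta,N} + 1$, and combining the already-established a priori bounds $0 \le B^{\eta,N} \le 1$, $0 < D^{\eta,N}$, $\rho \le \overline\rho$, $\lambda \le \overline\lambda$ with the definition $\overline\kappa^2 = 2\max(\overline\lambda,\gamma\overline\rho)$, one checks that on the event $\{E^{\eta,N} \ge \bar E^\eta\}$ the driver governing $E^{\eta,N}$ is pathwise dominated by the Riccati driver of $\bar E^\eta$. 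A standard localization/Tanaka argument applied to $(E^{\eta,N}-\bar E^\eta)^+$ then yields the pointwise bound $E^{\eta,N}_s \le \bar E^\eta_s$ on $[0;T]$, which is the desired estimate.

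The main obstacle is the coupling of the three-dimensional BSDE system: the one-dimensional comparison theorem for BSDEs is not directly applicable because the driver of $C^{\eta,N}$ involves all three unknowns $(A,B,C)$. One must therefore make sure the a priori bounds on $B^{\eta,N}$ and $D^{\eta,N}$ feed consistently into the $E^{\eta,N}$ dynamics and control the sign of the remaining linear-in-$E$ terms. The finiteness of $D^{\eta,N}_T$ for $N<\infty$ eliminates the terminal singularity in this coupling, which is what allows us to bypass the approximation procedure required in \cite{graewe_horst} for $N=\infty$.
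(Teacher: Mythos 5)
Your proposal matches the paper's own treatment: the paper likewise quotes \cite{graewe_horst,horst_xia} for all bounds except the $\tanh$-estimate on $E^{\eta,N}$ for finite $N$, and justifies that one by exactly the argument you describe, namely repeating the comparison from Proposition~3.2 of \cite{graewe_horst} while using that $D^{\eta,N}_T = \eta^{-1/2}(N-\gamma) < \infty$ removes the terminal singularity and hence the need for the approximation step. Your explicit Riccati comparison function and the sign checks (using $D^{\eta,N}>0$, $0\le B^{\eta,N}\le 1$ and $\gamma^{-1}\overline\kappa^2 \ge 2\overline\rho$) are a correct fleshing-out of that same route.
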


The preceding estimates allow us to prove that neither the optimal portfolio process nor the corresponding spread process change sign.
\begin{proof}[Proof of Theorem \ref{thmXYEtaPosBdthm}]
Let us put $V(t) := \hat X^{\eta,N}_t \cdot \hat Y^{\eta,N}_t$. Then $V(0) = 0$ and $V$ satisfies the ODE
\begin{align*}
	\dot V(t)
	=&
	-\Big(
		\sqrt{\eta^{-1}} \big(
			D^{\eta,N}_t
			+ \gamma E^{\eta,N}_t
		\big)
		+ \rho_t
	\Big) V(t)
	- \sqrt{\eta^{-1}} \Big(
		E^{\eta,N}_t \big(\hat Y^{\eta,N}_t\big)^2
		+ \gamma D^{\eta,N}_t \big(\hat X^{\eta,N}_t\big)^2
	\Big).
\end{align*}
As a result,
\begin{align*}
	V(t)
	=&
	- \exp\bigg( - \int^t_0 \Big( \sqrt{\eta^{-1}} \big( D^{\eta,N}_u + \gamma E^{\eta,N}_u \big) + \rho_u \Big) \, \du \bigg) \cdot
	\\& \qquad
	\int^t_0 \sqrt{\eta^{-1}} \Big( E^{\eta,N}_s \big(\hat Y^{\eta,N}_s\big)^2 + \gamma D^{\eta,N}_s \big(\hat X^{\eta,N}_s\big)^2 \Big)
	\\& \qquad\qquad
	\cdot \exp\bigg( \int_0^s \Big( \sqrt{\eta^{-1}} \big( D^{\eta,N}_u + \gamma E^{\eta,N}_u \big) + \rho_u \Big) \, \du \bigg) \, \ds.
\end{align*}

In view of Lemma \ref{thmBDEEtaPosApriorilemma}, this shows that $V(t) < 0$ on $(0,T)$. Hence, strict positivity of $\hat X^{\eta,N}_0$ yields
\begin{align*}
	\IP\big[
		\hat X^{\eta,N}_t > 0,
		\hat Y^{\eta,N}_t < 0
		\text{ for all } t \in (0,T)
	\big]
	= 1.
\end{align*}
Thus, the definition of $\hat Z^{\eta,N}$ along with \eqref{eqOdeXY} yields $\ddt \hat Z^{\eta,N}_t = \rho_t \hat Y^{\eta,N}_t < 0$ on $(0,T)$. Moreover,
\begin{align*}
	&
	\hat X^{\eta,N}_t <
	\gamma^{-1} \hat Z^{\eta,N}_t
	\le \gamma^{-1} \hat Z^{\eta,N}_0
	= x_0
	\quad \mbox{and} \quad
	\hat Y^{\eta,N}_t
	>
	- \hat Z^{\eta,N}_t
	\ge
	- \gamma x_0.
\end{align*}
\end{proof}

Next, we prove that the process $B^{\eta,N}$ satisfies an $L^1$ uniform continuity property. We refer to Appendix \ref{SectAppendixRegularityItoBsde} for a discussion of general regularity properties of stochastic processes.

\begin{lemma} \label{thmBweakstarpropsatisfiedlemma}
The process $B^{\eta,N}$ satisfies Condition \hyperlink{C1}{C.1} stated in Appendix \ref{SectAppendixRegularityItoBsde} on $[0,T]$ uniformly in $N \in \mathcal N$ and $\eta \in (0,H]$, for all $H \in (0,\infty)$.
\end{lemma}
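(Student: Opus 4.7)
The strategy is to rewrite the BSDE for $B^{\eta,N}$ in a form whose driver is uniformly bounded in $\eta$ and $N$, and then to derive uniform $L^2$-control on the martingale integrand $Z^{\eta,N,B}$. As already observed in equation \eqref{eqSdeB}, the $B$-equation reads
\[
	-dB^{\eta,N}_t
	= \bigl( D^{\eta,N}_t E^{\eta,N}_t - \rho_t B^{\eta,N}_t \bigr)\dt
	- Z^{\eta,N,B}_t\,\mathrm dW_t,
	\quad B^{\eta,N}_T=1.
\]
The key point is that, while $D^{\eta,N}$ blows up and $E^{\eta,N}$ degenerates as $\eta\to 0$ (or $t\to T$ if $N=\infty$), Lemma~\ref{thmBDEEtaPosApriorilemma} gives the bounds $D^{\eta,N}_s\le \overline\kappa\coth(\sqrt{\eta^{-1}}\overline\kappa(T-s))$ and $E^{\eta,N}_s\le\gamma^{-1}\overline\kappa\tanh(\sqrt{\eta^{-1}}\overline\kappa(T-s))$, and since $\coth(u)\tanh(u)=1$ the product satisfies $0\le D^{\eta,N}_s E^{\eta,N}_s\le \gamma^{-1}\overline\kappa^2$ uniformly in $\eta\in(0;H]$, $N\in\mathcal N$ and $s\in[0;T]$. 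Combined with $\|B^{\eta,N}\|_{L^\infty}\le 1$ and $\|\rho\|_{L^\infty}\le\overline\rho$, the driver is therefore uniformly bounded.

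With a uniformly bounded driver and a deterministic terminal value equal to $1$, I would next derive a uniform a priori bound of the form $\mathbb E\int_0^T|Z^{\eta,N,B}_s|^2\ds\le K$ by applying It\^o's formula to $(B^{\eta,N}_t)^2$, integrating between $0$ and $T$, taking expectations, and rearranging, so that the martingale part disappears and the $|Z|^2$-integral is dominated by the bounded cross terms. For arbitrary $s\le t$ the BSDE then gives
\[
	\mathbb E\bigl|B^{\eta,N}_t-B^{\eta,N}_s\bigr|
	\le \bigl(\gamma^{-1}\overline\kappa^2 + \overline\rho\bigr)(t-s)
	+ \mathbb E\Bigl|\textstyle\int_s^t Z^{\eta,N,B}_u\,\mathrm dW_u\Bigr|,
\]
and by Cauchy--Schwarz the stochastic integral is bounded by $\sqrt{K(t-s)}$. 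This furnishes an $L^1$-modulus of continuity for $B^{\eta,N}$ that is independent of $\eta\in(0;H]$ and $N\in\mathcal N$, which is what I expect Condition~\hyperlink{C1}{C.1} to require.

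The main obstacle is not the modulus estimate itself, which is essentially standard once the driver is bounded, but rather recognising and exploiting the cancellation $DE\le\gamma^{-1}\overline\kappa^2$ hidden in the bounds of Lemma~\ref{thmBDEEtaPosApriorilemma}. Without this identity the driver of the $B$-equation would inherit the singular behaviour of $D^{\eta,N}$ near $T$ (when $N=\infty$) and no uniform-in-$\eta,N$ estimate would be available; with it, the $B$-component of the three-dimensional system enjoys strictly better regularity than $A^{\eta,N}$, $C^{\eta,N}$, $D^{\eta,N}$ or $E^{\eta,N}$ individually, which is ultimately why $B^0$ — rather than any of the other coefficients — admits the uniformly continuous forward representation announced in Lemma~\ref{thmexistsB0lemma}.
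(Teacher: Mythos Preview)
Your key observation — that $D^{\eta,N}_s E^{\eta,N}_s\le\gamma^{-1}\overline\kappa^2$ because $\coth(u)\tanh(u)=1$ — is exactly what the paper uses as well, and your It\^o argument on $(B^{\eta,N})^2$ correctly upgrades $Z^{\eta,N,B}$ from $L^2([0;T-])$ to $L^2([0;T])$ with a bound uniform in $\eta,N$.

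There is, however, a gap in your final step. The claim that Cauchy--Schwarz gives $\E\bigl|\int_s^t Z^{\eta,N,B}_u\,\mathrm dW_u\bigr|\le\sqrt{K(t-s)}$ is unjustified: Cauchy--Schwarz only yields $\bigl(\E\int_s^t|Z^{\eta,N,B}_u|^2\du\bigr)^{1/2}$, and a global $L^2$ bound $K$ does \emph{not} imply that this integral is $O(t-s)$. So your $L^1$-modulus-of-continuity argument for $\E|B^{\eta,N}_t-B^{\eta,N}_s|$ does not close.

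The fix is to re-read Condition~C.1: it asks for $\bigl|\E\bigl(V(B^{\eta,N}_\tau-B^{\eta,N}_s)\bigr)\bigr|\le\eps\,\E|V|$ with $V$ $\mathcal F_s$-measurable and $\tau$ a stopping time in $[s;s+\delta]$. Because $V$ is $\mathcal F_s$-measurable and (thanks to your $L^2$-bound) $\{\int_s^t Z^{\eta,N,B}_u\,\mathrm dW_u\}_{t\ge s}$ is a true martingale, optional sampling gives $\E\bigl(V\int_s^\tau Z^{\eta,N,B}_u\,\mathrm dW_u\bigr)=0$. The martingale part therefore \emph{disappears} from the relevant expectation, leaving only the driver contribution, which is bounded by $(\gamma^{-1}\overline\kappa^2+\overline\rho)\,\delta\,\E|V|$. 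This is exactly what the paper does on $[0;T-\eps_1]$; the paper circumvents the question of $Z\in L^2$ up to $T$ by handling $s>T-\eps_1$ separately via the pointwise a~priori bounds $\exp(-\overline\rho(T-s))\le B^{\eta,N}_s\le 1$, which already force $|B^{\eta,N}_\tau-B^{\eta,N}_s|\le 1-\exp(-\overline\rho\eps_1)$ there. Your route, once corrected, avoids this case split at the cost of the extra It\^o step.
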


\begin{proof}
Let $\eps, \eps_1 > 0$, $s > T-\eps_1$ and let $V$ and $\tau$ be arbitrary according to the definition of Condition \hyperlink{C1}{C.1}. By Lemma \ref{thmBDEEtaPosApriorilemma}, if $\eps_1$ is small enough,
\begin{align*}
	\bigg| \E\Big[ V \big( B^{\eta,N}_\tau - B^{\eta,N}_s \big) \Big] \bigg|
	\le \Big( 1 - \exp\big( -\overline\rho \eps_1 \big) \Big) \E\big[ |V| \big]
	\le \eps \E\big[ |V| \big].
\end{align*}

If $s \leq T - \eps_1$, the assertion follows from the integral representation \eqref{eqSdeB} along with the estimates established in Lemma \ref{thmBDEEtaPosApriorilemma} using that the stochastic integral in \eqref{eqSdeB} is a martingale on $[0,T-\eps_1]$. We emphasize that $Z^{\eta,N,B}$ is possibly defined only on $[0,T-]$ and so the stochastic integral may be a martingale only away from the terminal time.
\end{proof}


\subsubsection{The case $\eta=0$} \label{SectEta0}

We are now going to establish a priori estimates on the candidate limiting coefficient processes.
First, we show that Assumption \ref{assFactorModel} directly implies the following regularity result for the parameter processes:
\begin{lemma} \label{thmChiRhoLambdaPhilemma}
The processes $\chi$, $\rho$, $\lambda$, $\varphi$ and $\varphi^{-1}$ satisfy Condition \hyperlink{C2}{C.2} introduced in Appendix \ref{SectAppendixRegularityItoBsde}.
\end{lemma}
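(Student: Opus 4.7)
The plan is to exhibit, for each of the five processes in the list, an explicit deterministic function of $(t, \chi_t)$ that reproduces it, and to verify that this function inherits whatever boundedness and uniform continuity (in both $t$ and $x$) Condition \hyperlink{C2}{C.2} demands. For $\chi$ itself the identity map $(t,x)\mapsto x$ does the job. For $\rho$ and $\lambda$ the representations $\rho_t = f^\rho(t,\chi_t)$ and $\lambda_t = f^\lambda(t,\chi_t)$ are supplied verbatim by Assumption \ref{assFactorModel}, and the $C^{1,2}$ regularity with bounded derivatives immediately gives uniform Lipschitz continuity (hence uniform continuity) of $f^\rho$ and $f^\lambda$ on $[0;T]\times\R^n$.

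For $\varphi$ I would set $g(t,x) := \sqrt{f^\lambda(t,x) + 2\gamma f^\rho(t,x)}$, so that $\varphi_t = g(t,\chi_t)$. Since $f^\rho$ is bounded away from zero by Assumption \ref{assFactorModel}, the argument of the square root takes values in a compact subinterval $[a;b] \subset (0;\infty)$, on which $y \mapsto \sqrt y$ is $C^\infty$ with bounded derivatives. Composition with the bounded-derivative map $f^\lambda + 2\gamma f^\rho$ therefore yields a bounded $C^{1,2}$ function $g$ with bounded derivatives, in particular uniformly continuous. The same argument applied to $y\mapsto 1/\sqrt y$ on $[a;b]$ gives $\varphi^{-1}_t = (g(t,\chi_t))^{-1}$ with the identical regularity properties, so $\varphi^{-1}$ also admits a uniformly continuous bounded representation in $\chi$.

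The potential obstacle is purely bookkeeping: one must check that the precise quantitative formulation of Condition \hyperlink{C2}{C.2} in Appendix \ref{SectAppendixRegularityItoBsde} (presumably some combination of a bounded, uniformly continuous representation together with regularity of the forward factor $\chi$) is indeed stable under the smooth nonlinear transformations $y\mapsto\sqrt y$ and $y\mapsto 1/\sqrt y$ applied on a compact subset of $(0;\infty)$. This stability is standard and the uniform lower bound on $f^\rho$ in Assumption \ref{assFactorModel} is exactly what removes the only possible singularity. The regularity of $\chi$ itself — needed if Condition \hyperlink{C2}{C.2} also involves moment or modulus-of-continuity bounds on the forward process — follows from the uniform Lipschitz and boundedness assumptions on $(\mu,\sigma)$ via standard SDE estimates, so no further work is required beyond citing these classical bounds.
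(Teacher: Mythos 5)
Your proposal is correct and follows essentially the same route as the paper: establish Condition \hyperlink{C2}{C.2} for $\chi$ from the boundedness of $(\mu,\sigma)$ (this is exactly Lemma \ref{thmEcondmaxincXbdlemma} together with Lemmas \ref{thmPropertyStarComponentslemma} and \ref{thmSumCondC1C2lemma}), and then transfer it to $\rho$, $\lambda$, $\varphi$, $\varphi^{-1}$ by writing them as uniformly continuous (indeed Lipschitz, thanks to the lower bound on $f^\rho$) functions of $(t,\chi_t)$, which is the content of Lemma \ref{thmftWtpropstarlemma}. The only difference is presentational: since your representing functions are Lipschitz, the composition step you call ``standard'' is immediate and does not even require the concave-envelope argument the paper uses for general uniformly continuous functions.
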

\begin{proof}
$\chi$ satisfies Condition \hyperlink{C2}{C.2} due to Lemma \ref{thmEcondmaxincXbdlemma}, Lemma \ref{thmPropertyStarComponentslemma} and Lemma \ref{thmSumCondC1C2lemma}. The rest immediately follows by Lemma \ref{thmftWtpropstarlemma}.
\end{proof}

We are now ready to prove that the process $B^0$ is well-defined.
\begin{proof}[Proof of Lemma \ref{thmexistsB0lemma}]
The existence result follows from a standard argument. In fact, it is well known that, for any $\overline b \in [1,\infty)$, the BSDE
\begin{align}
	- \, \mathrm dB^{\overline b}_t
	=
	\psi^{\overline b} \big( t, \chi_t , B^{\overline b}_t \big) \, \dt
	- Z^{\overline b}_t \, \mathrm dW_t,
	\quad
	B^{\overline b}_T = 1
	\label{eqthmexistsB0lemmaBsde}
\end{align}
with Lipschitz continuous driver (we recall $f^\lambda, f^\rho$ that are defined in Assumption \ref{assFactorModel})
\begin{align*}
	\psi^{\overline b}(t,\chi,b)
	:=&
	- \frac{1}{ f^\lambda(t,\chi) + 2 \gamma f^\rho(t,\chi) } \bigg(
		\gamma f^\rho(t,\chi)^2 \big(\overline b^2 \wedge b^2\big)
		- 2 f^\lambda(t,\chi) f^\rho(t,\chi) \Big(
			1
			- \big( 0 \vee (b\wedge \overline b) \big)
		\Big)
	\bigg)
\end{align*}
has a unique solution $(B^{\overline b},Z^{\overline b}) \in \mathcal L_{\mathcal P}^2(\Omega\times[0,T];\R \times \R^m)$ (cf. Theorem \ref{thmBsdeSolExistsUniqueWithHthm}).
Let us then define the functions
$\underline\phi, \overline\phi \colon \Omega \times [0,T] \times \R \times \R^m \to \R$ by
\[
	\underline\phi(t,b,z) := - \underline\varphi^{-2} \gamma \overline\rho^2 b \quad \mbox{and} \quad \overline\phi(t,b,z) := 0.
\]
By definition, $(\underline{B}^0,0)$ is the unique solution to the BSDE with driver $\underline\phi$ and terminal condition $1$, where the lower bound $\underline{B}^0$ on the process $B^0$ was defined in \eqref{lower-bound-B0}. Likewise, $(1,0)$ is the unique solution of the BSDE with driver $\overline\phi$ and the same terminal condition. Since
\[
	\psi^{\overline b}\big(t,\chi_t,\underline{B}^0_t\big) > \underline\phi\big(t,\underline{B}^0_t,0\big) \quad \mbox{and} \quad \psi^{\overline b}(t,\chi_t,1) < \overline\phi(t,1,0),
\]
the standard comparison principle for BSDEs with Lipschitz continuous drivers yields
\[
	\underline{B}^0 \leq B^{\overline b} \leq 1.
\]
This proves that $(B^{1},Z^{1})$ is the desired unique bounded solution to the BSDE \eqref{eqBsdeB0}.

The second assertion follows Theorem \ref{thmBsdeSolExistsUniqueWithHthm} applied to the BSDE \eqref{eqthmexistsB0lemmaBsde} for $\overline b = 1$.
\end{proof}

Having established the existence of the process $B^0$, the processes $D^0$ and $E^0$ are well-defined. The following lemma establishes estimates and regularity properties for $D^0$ and $E^0$.
\begin{lemma} \label{thmD0E0Apriorilemma}
The following a priori estimates hold:
\begin{align*}
	\frac{\gamma \underline\rho}{\overline\varphi} \exp\bigg( - \frac{\gamma \overline\rho^2}{\underline\varphi^{2}} T \bigg)
	&\le D^0_t
	\le \underline\varphi^{-1} \big( \overline\lambda + \gamma \overline\rho \big),
	\\
	\overline\varphi^{-1} \underline\rho
	&\le E^0_t
	\le 2 \underline\varphi^{-1} \overline\rho.
\end{align*}
Moreover, the processes $D^0$ and $E^0$ satisfy Condition \hyperlink{C2}{C.2}.
\end{lemma}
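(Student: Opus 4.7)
The plan is to derive the a priori bounds directly from the definitions \eqref{eqD0def} using the two-sided bound on $B^0$ from Lemma \ref{thmexistsB0lemma}, and then to deduce Condition \hyperlink{C2}{C.2} by expressing $D^0$ and $E^0$ as Lipschitz combinations of processes already known to satisfy it.

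\textbf{A priori estimates.} For the upper bound on $D^0$, I would use $B^0_t \le 1$, $\lambda_t \le \overline\lambda$, $\rho_t \le \overline\rho$ and $\varphi_t \ge \underline\varphi$, which immediately give
\[
D^0_t = \frac{\lambda_t + \gamma \rho_t B^0_t}{\varphi_t}
\le \frac{\overline\lambda + \gamma \overline\rho}{\underline\varphi}.
\]
For the lower bound, I would discard the non-negative term $\lambda_t$ and then use $\rho_t \ge \underline\rho$, $\varphi_t \le \overline\varphi$, together with Lemma \ref{thmexistsB0lemma} which gives $B^0_t \ge \underline B^0_t \ge \exp(-\underline\varphi^{-2}\gamma\overline\rho^2 T)$, to obtain
\[
D^0_t \ge \frac{\gamma \rho_t B^0_t}{\varphi_t} \ge \frac{\gamma \underline\rho}{\overline\varphi}\exp\!\bigg(-\frac{\gamma\overline\rho^2}{\underline\varphi^{2}}T\bigg).
\]
For $E^0$, since $B^0 \le 1$ implies $2-B^0 \in [1,2]$, the definition together with the bounds on $\rho$ and $\varphi$ yields $\overline\varphi^{-1}\underline\rho \le E^0_t \le 2\underline\varphi^{-1}\overline\rho$ in one line.

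\textbf{Regularity.} For Condition \hyperlink{C2}{C.2}, the starting point is that by Lemma \ref{thmexistsB0lemma} the process $B^0$ admits the representation $B^0_t = h(t,\chi_t)$ for a \emph{uniformly} continuous function $h\colon [0;T]\times\R^n \to \R$, and moreover $h$ is bounded between $\underline B^0_0 > 0$ and $1$. Combined with Assumption \ref{assFactorModel}, which provides bounded $C^{1,2}$ representations $(\rho_t,\lambda_t)=f(t,\chi_t)$, we see that
\[
D^0_t = g^D(t,\chi_t),\qquad E^0_t = g^E(t,\chi_t),
\]
where $g^D$ and $g^E$ are explicit bounded functions built from $f^\rho$, $f^\lambda$, $f^\varphi := \sqrt{f^\lambda+2\gamma f^\rho}$ and $h$. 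Because $f^\rho$ is bounded away from zero by Assumption \ref{assFactorModel}, so is $f^\varphi$, and hence $1/f^\varphi$ is a bounded Lipschitz (in space) and continuous (in time) function of $(t,\chi)$. Thus $g^D$ and $g^E$ are uniformly continuous in the space variable (and continuous in time), which is the structural input that Lemma \ref{thmftWtpropstarlemma} turns into Condition \hyperlink{C2}{C.2}, given that $\chi$ already satisfies it by Lemma \ref{thmChiRhoLambdaPhilemma}. I would therefore conclude by applying Lemma \ref{thmftWtpropstarlemma} componentwise to $g^D(\cdot,\chi)$ and $g^E(\cdot,\chi)$, precisely mirroring the argument used for $\rho,\lambda,\varphi,\varphi^{-1}$ in the proof of Lemma \ref{thmChiRhoLambdaPhilemma}.

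\textbf{Main obstacle.} The only non-routine point is to justify that $h$ is uniformly continuous with a modulus that behaves well under composition with the bounded Lipschitz functions $f^\rho$, $f^\lambda$, $1/f^\varphi$. This is exactly the content of the last clause of Lemma \ref{thmexistsB0lemma}; once that is invoked, the rest is a straightforward algebra-of-regularity argument using the appendix lemmas, entirely parallel to the proof of Lemma \ref{thmChiRhoLambdaPhilemma}.
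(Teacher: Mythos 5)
Your proposal is correct and follows essentially the same route as the paper: the a priori bounds are obtained by inserting the bounds on $B^0$ from Lemma \ref{thmexistsB0lemma} into the definitions \eqref{eqD0def}, and Condition \hyperlink{C2}{C.2} is deduced by writing $D^0$ and $E^0$ as uniformly continuous functions of $(t,\chi_t)$ (using the uniformly continuous representation $h$ of $B^0$, the boundedness and Lipschitz properties of $f^\rho,f^\lambda$, and that $f^\varphi$ is bounded away from zero) and then applying Lemma \ref{thmftWtpropstarlemma} together with Lemma \ref{thmChiRhoLambdaPhilemma}. No gaps.
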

\begin{proof}
The a priori estimates can be obtained by plugging the bounds on $B^0$ (cf. Lemma \ref{thmexistsB0lemma}) into the definitions of $D^0$ and $E^0$ given in \eqref{eqD0def}.
Moreover, if we denote by $h$ the function derived from Lemma \ref{thmexistsB0lemma}, then for all $t \in [0,T]$,
\begin{align*}
	&D^0_t
	= \bigg( \frac{ f^\lambda + \gamma \cdot f^\rho \cdot h }{ \sqrt{ f^\lambda + 2 \gamma \cdot f^\rho } } \bigg) (t,\chi_t),
	\quad
	E^0_t
	= \bigg( \frac{ f^\rho \cdot (2 - h) }{ \sqrt{ f^\lambda + 2 \gamma \cdot f^\rho } } \bigg) (t,\chi_t).
\end{align*}
In view of Assumption \ref{assFactorModel}, the processes $D^0$ and $E^0$ can be represented as uniformly continuous functions of the factor process $\chi$ and hence the assertion follows from Lemma \ref{thmChiRhoLambdaPhilemma} and Lemma \ref{thmftWtpropstarlemma}.
\end{proof}

The next lemma can be viewed as the analogue to Theorem \ref{thmXYEtaPosBdthm} in the case $\eta = 0$.
\begin{lemma} \label{thmXY0BdStarStarlemma}
It holds
\begin{align*}
	\IP\big[
		\hat X^0_t \in (0,x_0),
		\hat Y^0_t \in (-\gamma x_0,0)
		\text{ for all } t \in (0,T)
	\big]
	= 1.
\end{align*}
\end{lemma}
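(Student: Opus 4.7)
The plan is to work directly with the explicit representations of $\hat X^0$, $\hat Y^0$, and $\hat Z^0$ derived in the heuristic paragraph preceding the statement, combined with the a priori bounds on $D^0$, $E^0$ from Lemma \ref{thmD0E0Apriorilemma} and the key algebraic identity $D^0 + \gamma E^0 = \varphi$ established in \eqref{eqDPlusGammaEEqPhi}.

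First I would record the formulas
\[
\hat Z^0_t = \gamma x_0 \exp\bigg(-\int^t_0 \frac{\rho_s}{\varphi_s} D^0_s \, \ds \bigg),
\quad
\hat X^0_t = \frac{E^0_t}{\varphi_t}\hat Z^0_t,
\quad
\hat Y^0_t = -\frac{D^0_t}{\varphi_t}\hat Z^0_t
\]
for $t \in (0;T)$. Since $\rho$, $\varphi$, and $D^0$ are all essentially bounded and bounded \emph{away from zero} by Lemma \ref{thmD0E0Apriorilemma} and Assumption \ref{assFactorModel}, the integrand in the exponent is bounded and strictly positive. Hence $\hat Z^0_t > 0$ for all $t \in [0;T]$, and moreover, for every $t \in (0;T]$ the exponent is strictly negative, which gives the strict inequality $\hat Z^0_t < \gamma x_0$.

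Next I would exploit the identity $D^0_t + \gamma E^0_t = \varphi_t$ from \eqref{eqDPlusGammaEEqPhi}, combined with the strict positivity $D^0_t, E^0_t > 0$ coming from Lemma \ref{thmD0E0Apriorilemma}. This gives
\[
0 < \frac{\gamma E^0_t}{\varphi_t} = 1 - \frac{D^0_t}{\varphi_t} < 1
\quad \text{and} \quad
0 < \frac{D^0_t}{\varphi_t} < 1
\qquad \text{for all } t \in [0;T].
\]
Consequently, for $t \in (0;T)$,
\[
\hat X^0_t = \frac{E^0_t}{\varphi_t}\hat Z^0_t \in \Big(0, \tfrac{1}{\gamma}\hat Z^0_t\Big) \subset (0, x_0),
\]
using $\hat Z^0_t < \gamma x_0$ in the last inclusion, and likewise
\[
\hat Y^0_t = -\frac{D^0_t}{\varphi_t}\hat Z^0_t \in (-\hat Z^0_t, 0) \subset (-\gamma x_0, 0).
\]

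The argument is essentially a direct computation once one has the explicit representation and the identity $D^0 + \gamma E^0 = \varphi$; there is no real obstacle. The only subtle point to flag is that the strictness of all inequalities (not just weak ones) relies crucially on the quantitative lower bounds on $D^0$, $E^0$ from Lemma \ref{thmD0E0Apriorilemma}, together with the strict positivity of $\rho_s/\varphi_s \cdot D^0_s$ ensuring $\hat Z^0_t < \gamma x_0$ for every $t > 0$.
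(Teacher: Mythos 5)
Your argument is correct and is essentially the paper's own proof: both establish $0<\hat Z^0_t<\gamma x_0$ on $(0;T)$ from the explicit exponential formula and the strict positivity of $D^0$ (Lemma \ref{thmD0E0Apriorilemma}), and then use the identity $D^0+\gamma E^0=\varphi$ together with the strict bounds on $D^0$ and $E^0$ to squeeze $\hat X^0_t$ into $(0;x_0)$ and $\hat Y^0_t$ into $(-\gamma x_0;0)$. The only cosmetic difference is that the paper writes $\hat X^0_t=\frac{\varphi_t-D^0_t}{\gamma\varphi_t}\hat Z^0_t$ where you use $\frac{E^0_t}{\varphi_t}\hat Z^0_t$, which is the same quantity.
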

\begin{proof}
Due to Lemma \ref{thmD0E0Apriorilemma}, $D^0$ is positive and hence, for $t \in (0,T)$,
\begin{align*}
	0
	<
	\hat Z^0_t
	<
	x_0 \gamma.
\end{align*}
Since $D^0 + \gamma E^0 = \varphi$ on $[0,T)$, it thus follows from Lemma \ref{thmD0E0Apriorilemma} that, for all $t\in (0,T)$,
\begin{align*}
	0 < \hat X^0_t
	& = \frac{\varphi_t - D^0_t}{\gamma \varphi_t} \hat Z^0_t
	< \gamma^{-1} \hat Z^0_t < x_0, \\
	0 > \hat Y^0_t
	& = - \frac{\varphi_t - \gamma E^0_t}{\varphi_t} \hat Z^0_t
	> - \hat Z^0_t
	> -\gamma x_0.
\end{align*}
\end{proof}


\subsection{Proof of the convergence results} \label{SectConv}

In this section, we prove our main convergence results. We start with the convergence of the coefficients of the ODE system \eqref{eqOdeXY}. Subsequently, we prove that the convergence of the coefficients yields convergence of the state process.

\subsubsection{Proof of Proposition \ref{thmbdeconv0prop}} \label{SectBDEFconv}

The proof of Proposition \ref{thmbdeconv0prop} is split into a series of lemmas. In a first step, we establish the convergence as $\eta \to 0$ of the auxiliary processes
\begin{align*}
	F^{\eta,N}_t
	:=
	D^{\eta,N}_t + \gamma E^{\eta,N}_t \quad \mbox{to} \quad F^0 := \varphi
\end{align*}
and
\begin{align*}
	G^{\eta,N} := \rho B^{\eta,N} + \varphi E^{\eta,N} \quad \mbox{to} \quad
	G^0
	:=
	\rho B^0 + \varphi E^0
	=
	2 \rho.
\end{align*}

On $[0,T)$, the processes $F^{\eta,N}$ and $G^{\eta,N}$ satisfy the dynamics
\begin{equation}
\begin{split}
	\mathrm dF^{\eta,N}_t
	=&
	\sqrt{\eta^{-1}} \Big(
		\big( F^{\eta,N}_t \big)^2
		- (\varphi_t)^2
	\Big) \, \dt
	+ 2 \gamma \rho_t E^{\eta,N}_t \, \dt
	\\&
	\quad + \sqrt{\eta^{-1}} \big(
		Z^{\eta,N,A}_t
		- 2 \gamma Z^{\eta,N,B}_t
		+ \gamma^2 Z^{\eta,N,C}_t
	\big) \, \mathrm dW_t,
\label{eqBsdeF}
\end{split}
\end{equation}
respectively,
\begin{equation}
\begin{split}	
		- \, \mathrm d \big( \varphi^{-1} G^{\eta,N} \big)_t
	=&
	\sqrt{\eta^{-1}} \Big(
		- G^{\eta,N}_t
		+ 2 \rho_t
		- E^{\eta,N}_t \big( F^{\eta,N}_t - \varphi_t \big)
	\Big) \, \dt
	- 2 \rho_t E^{\eta,N}_t \, \dt
	\\&
	\quad - \, \mathrm d \big( \varphi^{-1} \rho B^{\eta,N} \big)_t
	+ \sqrt{\eta^{-1}} \big( Z^{\eta,N,B}_t - \gamma Z^{\eta,N,C}_t \big) \, \mathrm dW_t.
	\label{eqBsdeG}
\end{split}
\end{equation}

A general convergence result for integral equations of the above form is established in Appendix \ref{SectAppendixConvSdeScaled}. It allows us to prove the following two lemmas.
\begin{lemma} \label{thmfconv0lemma}
Let $f^{\eta,N} := F^{\eta,N} - \varphi$. For all $\eps>0$, there exists an $\eta_0>0$ such that, for all $\eta\in(0,\eta_0]$ and all $N \in \mathcal N$,
\begin{align*}
	\IP \Big[
		\sup_{s \in [0,T-\eps]} f^{\eta,N}_s \le \eps,
		\inf_{s \in [0,T)} f^{\eta,N}_s \ge -\eps
	\Big]
	=
	1.
\end{align*}
\end{lemma}

\begin{proof}
For every $N \in \mathcal N$,
we apply Lemma \ref{thmfggeneralconvlemma} to $P^\eta := F^{\eta,N}$ with
\begin{align*}
	\psi &:= 1, \quad a(x,y) := (x\vee 0)^2 - y^2, \quad P^0 := \varphi,
	\\
	q^{\eta} &:= 0, \quad
	L^{\eta}_t := 2\gamma \int^t_0 \rho_s E^{\eta,N}_s \, \ds,
\end{align*}
noticing that $\eps \mapsto \eta_0 = \eta_0(\eps)$ is independent of $N$.
Assumption \ref{assthmfggeneralconvlemma} \textit{i)} is satisfied with $\overline P := \overline\varphi + 3 \overline\kappa$, due to Lemma \ref{thmBDEEtaPosApriorilemma}.
If $N = \infty$, then Assumption \ref{assthmfggeneralconvlemma} \textit{ii)} \textit{a)} is satisfied due to Theorem \ref{thmgraewexiathm} and due to the a priori estimates derived in Lemma \ref{thmBDEEtaPosApriorilemma} and Lemma \ref{thmD0E0Apriorilemma}.
If $N<\infty$ and $\eta \le \overline \varphi^{-2}$, then
\begin{align*}
	f^{\eta,N}_T
	= D^{\eta,N}_T + \gamma \underbrace{E^{\eta,N}_T}_{=0} - D^0_T - \gamma E^0_T
	= \sqrt{\eta^{-1}} (N - \gamma) - \varphi_T
	\ge \sqrt{\eta^{-1}} - \overline\varphi
	> 0,
\end{align*}
which shows Assumption \ref{assthmfggeneralconvlemma} \textit{ii)} \textit{b)}.
Assumption \ref{assthmfggeneralconvlemma} \textit{iv)} follows by direct computation using $\overline\eps := \underline\varphi/2$ and $\beta := \underline\varphi/2$.
Assumption \ref{assthmfggeneralconvlemma} \textit{v)} follows from Lemma \ref{thmChiRhoLambdaPhilemma} and Lemma \ref{thmBDEEtaPosApriorilemma}.
\end{proof}

It is not difficult to show that similar arguments as those used to prove the converegence of $f^{\eta,N}$ can be applied to $-g^{\eta,N} := -G^{\eta,N} + 2 \rho$. As a result, the intervals $[0,T)$ and $[0,T-\eps]$ in the statement of the convergence result for $g^{\eta,N}$ need to be swapped.

\begin{lemma} \label{thmgconv0lemma}
For all $\eps>0$, there exists an $\eta_0>0$ such that, for all $\eta\in(0,\eta_0]$ and all $N \in \mathcal N$,
\begin{align*}
	\IP \Big[
		\sup_{s \in [0,T)} g^{\eta,N}_s \le \eps,
		\inf_{s \in [0,T-\eps]} g^{\eta,N}_s \ge -\eps
	\Big]
	=
	1.
\end{align*}
\end{lemma}

We are now going to prove the almost sure convergence to zero of the process $b^{\eta,N} := B^{\eta,N} - B^0$. To this end, we first observe that
\begin{align*}
	D^{\eta,N}
	=
	F^{\eta,N} - \gamma E^{\eta,N}
	\quad \mbox{and} \quad
	\varphi E^{\eta,N}
	=
	G^{\eta,N} - \rho B^{\eta,N}
\end{align*}
yields
\begin{align*}
	&
	- D^{\eta,N} E^{\eta,N}
	=
	- f^{\eta,N} E^{\eta,N}
	- G^{\eta,N}
	+ \rho B^{\eta,N}
	+ \gamma \varphi^{-2} \big( G^{\eta,N} - \rho B^{\eta,N} \big)^2.
\end{align*}
Plugging this into \eqref{eqSdeB} shows that
\begin{align*}
	\mathrm dB^{\eta,N}_t
	=&
	\Big(
		- G^{\eta,N}_t
		- f^{\eta,N}_t E^{\eta,N}_t
		+ 2 \rho_t B^{\eta,N}_t
		+ \gamma (\varphi_t)^{-2} \big( G^{\eta,N}_t - \rho_t B^{\eta,N}_t \big)^2
	\Big) \, \dt
	+ Z^{\eta,N,B}_t \, \mathrm dW_t
\end{align*}
on $[0,T)$.
Performing an analogous computation for $B^0$ and subtracting the two equations yields
\begin{align*}
	\mathrm db^{\eta,N}_t
	=&
	\bigg(
		\big( b^{\eta,N}_t \big)^2 \cdot \gamma (\varphi_t)^{-2} (\rho_t)^2
		+ b^{\eta,N}_t \Big(
			2 \rho_t
			- 2 \gamma (\varphi_t)^{-2} \rho_t \big( G^0_t - \rho_t B^0_t \big)
		\Big)
		- f^{\eta,N}_t E^{\eta,N}_t
		- g^{\eta,N}_t
	\\&
		+ \gamma (\varphi_t)^{-2} g^{\eta,N}_t \Big(
			g^{\eta,N}_t + 2 G^0_t
			- \rho_t \big( 2 b^{\eta,N}_t + 2 B^0_t \big)
		\Big)
	\bigg) \, \dt
	+ \big( Z^{\eta,N,B}_t - Z^{0,B}_t \big) \, \mathrm dW_t
\end{align*}
on $[0,T)$.
Using that
\begin{align*}
	2 \rho_t
	- 2 \gamma (\varphi_t)^{-2} \rho_t \big( G^0_t - \rho_t B^0_t \big)
	& = 2 \rho_t (\varphi_t)^{-1} D^0_t,
	\\
	g^{\eta,N}_t + 2 G^0_t
	- \rho_t \big( 2 b^{\eta,N}_t + 2 B^0_t \big)
	& =
	- g^{\eta,N}_t
	+ 2 \varphi_t E^{\eta,N}_t,
\end{align*}
shows that
\begin{align*}
	\mathrm db^{\eta,N}_t
	=&
	\Big(
		\big( b^{\eta,N}_t \big)^2 \cdot \gamma (\varphi_t)^{-2} (\rho_t)^2
		+ b^{\eta,N}_t \cdot 2 (\varphi_t)^{-1} \rho_t D^0_t
		- f^{\eta,N}_t E^{\eta,N}_t
	\\&
		+ g^{\eta,N}_t \big(
			- 1
			- \gamma (\varphi_t)^{-2} g^{\eta,N}_t
			+ 2 \gamma (\varphi_t)^{-1} E^{\eta,N}_t
		\big)
	\Big) \, \dt
	+ \big( Z^{\eta,N,B}_t - Z^{0,B}_t \big) \, \mathrm dW_t
\end{align*}
on $[0,T)$.
This BSDE is different from \eqref{eqBsdeF} and \eqref{eqBsdeG}. We apply Lemma \ref{thmbGeneralConvlemma} to prove the following result.

\begin{lemma} \label{thmbconv0lemma}
For all $\eps>0$, there exists an $\eta_0>0$ such that, for all $\eta\in(0,\eta_0]$ and all $N \in \mathcal N$,
\begin{align*}
	\IP \Big[ \sup_{s \in [0,T]} \big| b^{\eta,N}_s \big| \le \eps \Big]
	=
	1.
\end{align*}
\end{lemma}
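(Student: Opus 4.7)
The plan is to recast the BSDE (\ref{eqBsdeb}) for $b^{\eta,N}$, which carries the trivial terminal value $b^{\eta,N}_T = 0$, into the setting of the abstract convergence Lemma \ref{thmbGeneralConvlemma}. Since both $B^{\eta,N}$ and $B^0$ take values in $(0;1]$ by Lemma \ref{thmBDEEtaPosApriorilemma} and Lemma \ref{thmexistsB0lemma}, the difference $b^{\eta,N}$ is a priori uniformly bounded, so I can linearize the quadratic term in (\ref{eqBsdeb}) by absorbing one factor of $b^{\eta,N}$ into the coefficient and rewrite the drift as $K^{\eta,N}_t\,b^{\eta,N}_t + S^{\eta,N}_t$, where
\[
K^{\eta,N}_t := \gamma(\varphi_t)^{-2}(\rho_t)^2\, b^{\eta,N}_t + 2(\varphi_t)^{-1}\rho_t D^0_t
\]
is uniformly bounded in $\eta$ and $N$ by Lemma \ref{thmChiRhoLambdaPhilemma} and Lemma \ref{thmD0E0Apriorilemma}, and
\[
S^{\eta,N}_t := - f^{\eta,N}_t E^{\eta,N}_t + g^{\eta,N}_t\bigl( -1 - \gamma(\varphi_t)^{-2} g^{\eta,N}_t + 2\gamma(\varphi_t)^{-1} E^{\eta,N}_t \bigr)
\]
is the source.

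The heart of the argument is to control $S^{\eta,N}$. On $[0;T-\eps]$, both Lemma \ref{thmfconv0lemma} and Lemma \ref{thmgconv0lemma} supply two-sided $\eps$-bounds for $f^{\eta,N}$ and $g^{\eta,N}$, so since $E^{\eta,N}$, $\varphi^{-1}$, $\varphi^{-2}$ and $g^{\eta,N}$ itself are uniformly bounded (Lemma \ref{thmBDEEtaPosApriorilemma} and Lemma \ref{thmChiRhoLambdaPhilemma}), I obtain $\sup_{t \in [0;T-\eps]} |S^{\eta,N}_t| \le C_1 \eps$ for some constant $C_1$ independent of $\eta \in (0;\eta_0]$ and $N \in \mathcal N$. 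On the boundary layer $(T-\eps;T]$ the one-sided convergences degenerate and $f^{\eta,N}$ may be large, but the a priori bounds in Lemma \ref{thmBDEEtaPosApriorilemma} give $D^{\eta,N}_t E^{\eta,N}_t \le \overline\kappa^2/\gamma$ for all $t$ (the $\coth$ and $\tanh$ factors cancel), so $S^{\eta,N}$ is uniformly bounded by some constant $C_2$ on $(T-\eps;T]$. Combined with $b^{\eta,N}_T = 0$ and the vanishing length of the layer, this contribution is absorbed by the Gronwall mechanism packaged into Lemma \ref{thmbGeneralConvlemma}: integrating the BSDE backward from $T$ over an interval of length $\eps$ with bounded drift coefficient and bounded source gives $|b^{\eta,N}_{T-\eps}| \lesssim \eps$, and a second Gronwall step on $[0;T-\eps]$ with a source of size at most $C_1\eps$ propagates the $O(\eps)$ bound to the whole of $[0;T]$.

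The main obstacle I anticipate is precisely the asymmetry between the one-sided statements of Lemmas \ref{thmfconv0lemma} and \ref{thmgconv0lemma} near $T$ combined with the quadratic dependence of (\ref{eqBsdeb}) on $b^{\eta,N}$; the crucial observation that makes it go through is that the product $D^{\eta,N}_t E^{\eta,N}_t$ remains uniformly bounded despite each factor individually blowing up or collapsing at $T$, so that the boundary-layer source is only badly behaved in size but not in integrability. Up to this observation and the abstract Gronwall-type estimate encoded in Lemma \ref{thmbGeneralConvlemma}, the argument is routine.
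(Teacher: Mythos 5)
Your overall skeleton is the right one: start from the equation \eqref{eqBsdeb} with $b^{\eta,N}_T=0$, use Lemmas \ref{thmfconv0lemma} and \ref{thmgconv0lemma} together with the a priori bounds to make the source term small on $[0;T-\eps]$, and conclude via the abstract Lemma \ref{thmbGeneralConvlemma}; this is exactly the paper's route, and your observation that $D^{\eta,N}E^{\eta,N}\le\overline\kappa^2/\gamma$ (so the source stays bounded in the terminal layer) is correct. However, two of your steps do not go through as written. First, Lemma \ref{thmbGeneralConvlemma} is not a ``Gronwall mechanism'': it is a stopping-time/optional-sampling argument whose essential hypothesis is the sign (positive-feedback) condition, Assumption \ref{assthmbGeneralConvlemma} iv), i.e.\ $b\,a(t,b)\ge 0$ for $|b|\le\eps_1$. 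You never verify it, and your linearization $a(t,b):=K^{\eta,N}_t b$ with $K^{\eta,N}_t=\gamma\varphi_t^{-2}\rho_t^2 b^{\eta,N}_t+2\varphi_t^{-1}\rho_t D^0_t$ actually undermines it: the condition would then require $K^{\eta,N}_t\ge 0$ pointwise, which can fail (take $\lambda$ small, $B^0_t<1/2$ and $b^{\eta,N}_t$ close to its lower bound $e^{-\overline\rho(T-t)}-1$; then $K^{\eta,N}_t$ has the sign of $2\lambda_t+\gamma\rho_t(2B^0_t-1)$, which may be negative). The paper instead keeps the quadratic term inside $a(t,b)=\gamma\varphi_t^{-2}\rho_t^2 b^2+2\varphi_t^{-1}\rho_t D^0_t b$, so the sign condition is only needed for $|b|\le\eps_1$ and follows from the strictly positive lower bound on $D^0$ (choose $\eps_1$ with $-\eps_1\gamma\varphi^{-1}\rho+2D^0\ge 0$, Lemma \ref{thmD0E0Apriorilemma}).

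Second, both of your Gronwall steps are applied pathwise to an equation that carries the martingale increment $(Z^{\eta,N,B}_t-Z^{0,B}_t)\,\mathrm dW_t$; a pathwise Gronwall argument is simply not available, and ``integrating the BSDE backward from $T$'' over $[T-\eps;T]$ additionally runs into the fact that for $N=\infty$ the process $Z^{\eta,N,B}$ is only square-integrable on $[0;T-]$, so the stochastic integral need not be a martingale up to $T$ (the paper flags exactly this issue in the proof of Lemma \ref{thmBweakstarpropsatisfiedlemma}). One can repair your estimates by taking conditional expectations and running a Gronwall argument on $\sup_\omega$-norms, but that is a different argument from the one you cite and you do not supply it. The cleaner fix is the paper's: the terminal-layer control is Assumption \ref{assthmbGeneralConvlemma} ii) and follows immediately from the two-sided a priori bounds $e^{-\overline\rho(T-t)}\le B^{\eta,N}_t\le 1$ and $\underline B^0_t\le B^0_t\le 1$, which force $|b^{\eta,N}_t|$ to be uniformly small near $T$ without touching the equation at all, and then the a.s.\ bound on all of $[0;T]$ comes from the stopping-time argument inside Lemma \ref{thmbGeneralConvlemma}, not from a Gronwall propagation.
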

\begin{proof}
For every $N \in \mathcal N$,
we apply Lemma \ref{thmbGeneralConvlemma} with
\begin{align*}
	a(t,b)
	&:=
	b^2 \cdot \gamma (\varphi_t)^{-2} (\rho_t)^2
	+ b \cdot 2 (\varphi_t)^{-1} \rho_t D^0_t,
	\\
	q^{\eta}_t
	&:=
	- f^{\eta,N}_t E^{\eta,N}_t
	+ g^{\eta,N}_t \big(
		2 \gamma (\varphi_t)^{-1} E^{\eta,N}_t
		- 1
		- \gamma (\varphi_t)^{-2} g^{\eta,N}_t
	\big).
\end{align*}
Assumption \ref{assthmbGeneralConvlemma} \textit{i)} follows from the a priori estimates on $B^{\eta,N}$ (Lemma \ref{thmBDEEtaPosApriorilemma}), $B^0$ (Lemma \ref{thmexistsB0lemma}) and $D^0$ (Lemma \ref{thmD0E0Apriorilemma}). Assumption \ref{assthmbGeneralConvlemma} \textit{ii)} follows from the a priori estimates on $B^{\eta,N}$ and $B^0$, where the mapping $\eps \mapsto \delta$ is independent of $N$.
Assumption \ref{assthmbGeneralConvlemma} \textit{iii)} follows from the same estimates and Lemma \ref{thmfconv0lemma} and Lemma \ref{thmgconv0lemma}, where the choice of $\eta_1$ is independent of $N$.
Assumption \ref{assthmbGeneralConvlemma} \textit{iv)} is satisfied because we can choose $\eps_1 > 0$ small enough s.t.
$$
-\eps_1 \gamma (\varphi)^{-1} \rho
+ 2 D^0
\ge 0.
$$
\end{proof}

\subsubsection{Proof of Theorem \ref{thmxConv0Specificthm}}
\label{SectXYconv}

First we need to prove an auxiliary result:
\begin{lemma} \label{thmStarStarlemma}
Let $I \subset \R$ be a compact interval and $Y = \{Y_t\}_{t\in I}$ a continuous, adapted, $\R^d$-valued stochastic process. Then the modulus of continuity
\[
	\sup_{s,t\in I, |s-t| \le \nu} \Vert Y_t - Y_s \Vert_\infty
\]
converges to $0$ in probability as $\nu \to 0$.
\end{lemma}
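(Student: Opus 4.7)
The plan is a direct application of the Heine--Cantor theorem combined with the standard fact that almost sure convergence implies convergence in probability. Define the modulus of continuity
\[
	\omega_Y(\nu) := \sup_{s,t\in I,\, |s-t| \le \nu} \Vert Y_t - Y_s\Vert_\infty.
\]
First I would verify measurability: since the sample paths of $Y$ are continuous, the supremum over the set $\{(s,t) \in I \times I : |s-t| \le \nu\}$ coincides with the supremum over any countable dense subset of this set (for every $\omega$), and so $\omega_Y(\nu)$ is a measurable $[0;\infty]$-valued random variable.

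Next, I would fix an arbitrary $\omega \in \Omega$. Since $t \mapsto Y_t(\omega)$ is continuous on the compact interval $I$, the Heine--Cantor theorem guarantees that it is uniformly continuous on $I$. Hence for every $\eps>0$ there exists $\nu_0(\omega,\eps)>0$ such that $\omega_Y(\nu)(\omega) \le \eps$ whenever $\nu \le \nu_0(\omega,\eps)$. In particular, $\omega_Y(\nu)(\omega) \to 0$ monotonically as $\nu \to 0$ for every $\omega$.

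This pointwise (hence almost sure) convergence to zero immediately implies convergence in probability: for any $\eps>0$,
\[
	\IP[\omega_Y(\nu) > \eps] \longrightarrow 0 \quad \text{as } \nu \to 0,
\]
by dominated convergence applied to the indicator $\mathbbm{1}_{\{\omega_Y(\nu) > \eps\}}$, which converges pointwise to $0$ and is dominated by $1$. There is no real obstacle here; the only mild subtlety is confirming measurability of the supremum, which is handled by path continuity and restriction to a countable dense set.
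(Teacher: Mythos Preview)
Your proof is correct and follows essentially the same approach as the paper: pathwise uniform continuity on a compact interval gives almost sure convergence of the modulus of continuity to zero, which in turn implies convergence in probability. You simply add a bit more detail (measurability via a countable dense set, an explicit invocation of Heine--Cantor, and the dominated-convergence justification for a.s.\ $\Rightarrow$ in probability) that the paper leaves implicit.
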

\begin{proof}
For all $\omega\in\Omega$, $Y(\omega)$ is uniformly continuous, hence the modulus of continuity converges to $0$ as $\nu\to 0$ $\IP$-a.s. This implies convergence in probability, in particular.
\end{proof}

\begin{proof}[Proof of Theorem \ref{thmxConv0Specificthm}]
Convergence on compact subintervals of $[0,T)$ follows from Theorem \ref{thmXConv0Generalthm} applied to the ODEs
\begin{align}
\begin{split}
	\ddt \hat X^{\eta,N}_t
	=&
	- \sqrt{\eta^{-1}} F^{\eta,N}_t \bigg(
		\hat X^{\eta,N}_t
		- \frac{E^{\eta,N}_t}{F^{\eta,N}_t} \hat Z^{\eta,N}_t
	\bigg),
	\\
	\ddt \hat Z^{\eta,N}_t
	=&
	\gamma \rho_t \hat X^{\eta,N}_t - \rho_t \hat Z^{\eta,N}_t
\end{split}
\label{eqthmxConv0SpecificthmOdeXYEtaPos}
\end{align}
and
\begin{align*}
\begin{split}
	\hat X^0_t
	=
	\frac{E^0_t}{\varphi_t} \hat Z^0_t, \quad
	\hat Z^0_t
	=
	\hat Z^0_0 \exp\bigg( \int^t_0 \bigg( \gamma \rho_s \frac {E^0_s}{\varphi_s} - \rho_s \bigg) \, \ds \bigg)
\end{split}
\end{align*}
on $[0,T)$.
In fact, Condition \eqref{eqassOdeConvABConv} follows from Proposition \ref{thmbdeconv0prop} and Condition \eqref{boundXZ} follows from Theorem \ref{thmXYEtaPosBdthm}. Finally, $D^0 > 0$ (Lemma \ref{thmD0E0Apriorilemma}), from which we deduce that $x_0 > \frac{E^0_0}{\varphi_0} \hat Z^0_0$.

Thus, by Theorem \ref{thmXConv0Generalthm}, applied once for every $N \in \mathcal N$ and every $\nu > 0$, for all $\nu,\delta \in (0,\infty)$, there exists $\eta_1 = \eta_1(\nu,\delta) \in (0,\infty)$ such that, for all $\eta \in (0,\eta_1]$ and all $N \in \mathcal N$, the set
\begin{align*}
	M_{0}^{\nu,\eta,N}
	:=&
	\bigg\{
		\sup_{t\in [0,T-\nu]} \max\Big(
			\big| \hat Z^0_t - \hat Z^{\eta,N}_t \big|,
			\hat X^0_t - \hat X^{\eta,N}_t
		\Big) \le \nu,
		\inf_{t\in [\nu,T-\nu]} \big( \hat X^0_t - \hat X^{\eta,N}_t \big) \ge -\nu
	\bigg\}
\end{align*}
satisfies
\begin{align}
\IP\big[ M_{0}^{\nu,\eta,N} \big] > 1-\delta/2.
\label{eqthmxConv0SpecificthmPM2GeEtc}
\end{align}
Near the terminal time, we cannot use this theorem since the convergence $\frac{E^{\eta,N}}{F^{\eta,N}} \to \frac{E^0}{\varphi}$ holds only on $[0,T-]$. Instead, we apply increment bounds on intervals of the form $[T-\nu,T]$ to prove that,
for all $\eta > 0$, $N \in \mathcal N$ and for all sufficiently small
\begin{align}
	\nu \in (0,\eps/2] \quad\quad \text{ and } \quad\quad \beta \in (0,\eps/3],
	\label{eqthmxConv0SpecificthmNuLeEpsDiv2BetaLeEpsDiv3}
\end{align}
the set
\begin{align*}
	M_{1}^{\nu,\eps,\eta,N}
	:=&
	\Big\{ \inf_{t \in [T-\nu,T)} \big( \hat X^0_t - \hat X^{\eta,N}_t \big) \ge -\eps \Big\}
\end{align*}
contains the set $M_{0}^{\nu,\eta,N} \cap M_2^{\beta,\nu} \cap M_{3}^{\nu,\eta,N}$, where
\begin{align*}
	M_2^{\beta,\nu}
	:=&
	\Big\{
		\sup_{s,t\in [T-\nu,T)}
		| \varphi_s - \varphi_t |
		\vee \big| E^0_s - E^0_t \big|
		\vee \big| \hat X^0_s - \hat X^0_t \big|
		\le \beta
	\Big\},
	\\
	M_{3}^{\nu,\eta,N}
	:=&
	\Big\{
		\sup_{t \in [T-\nu,T)} e^{\eta,N}_t \vee \big(-f^{\eta,N}_t\big) \le \nu
	\Big\}.
\end{align*}
The probability of the last two events can be made large, due to Lemma \ref{thmStarStarlemma} and Proposition \ref{thmbdeconv0prop}: for all $\beta,\delta,\nu \in (0,\infty)$, there exist $\nu_0(\beta,\delta), \eta_2(\nu) \in (0,\infty)$ such that
\begin{align*}
	\inf_{\nu \le \nu_0(\beta,\delta)}
	\IP\big[M_2^{\beta,\nu}\big]
	>
	1 - \delta/2,
	\quad
	\inf_{\eta \le \eta_2(\nu)}
	\inf_{N \in \mathcal N}
	\IP\big[M_{3}^{\nu,\eta,N}\big]
	=
	1.
\end{align*}

In order to see that $M_{1}^{\nu,\eps,\eta,N} \supseteq M_{0}^{\nu,\eta,N} \cap M_2^{\beta,\nu} \cap M_{3}^{\nu,\eta,N}$, we assume to the contrary that there exists
$$
\omega \in \big( M_{0}^{\nu,\eta,N} \cap M_2^{\beta,\nu} \cap M_{3}^{\nu,\eta,N} \big)
\backslash M_1^{\nu,\eps,\eta,N}.
$$
To obtain the desired contradiction, we show that, for any such $\omega$, there exists a time $s(\omega) \in [T-\nu,T)$ for which we can deduce both non-negativity and strict negativity of $\ddt \hat X^{\eta,N}_{s(\omega)}(\omega)$ simultaneously.

We start with the choice of $s(\omega)$. Since $\omega \notin M_1^{\nu,\eps,\eta,N}$, there exists some $t \in [T-\nu,T)$ such that $\hat X^{\eta,N}_t(\omega) - \hat X^0_t(\omega) > \eps$. Since $\omega \in M_2^{\beta,\nu}$ and due to \eqref{eqthmxConv0SpecificthmNuLeEpsDiv2BetaLeEpsDiv3}, this yields that
\begin{align*}
	\hat X^{\eta,N}_t(\omega)
	> \hat X^0_{T-\nu}(\omega) + \frac {2\eps} 3.
\end{align*}
Since $\omega$ belongs to $M_{0}^{\nu,\eta,N}$ and due to \eqref{eqthmxConv0SpecificthmNuLeEpsDiv2BetaLeEpsDiv3}, this implies that
\begin{align*}
	\hat X^{\eta,N}_{T-\nu}(\omega)
	< \hat X^0_{T-\nu}(\omega) + \frac{2\eps} 3 < \hat X^{\eta,N}_t(\omega).
\end{align*}
Now we can choose $s(\omega) \in (T-\nu,t)$ minimal with the property that
\begin{align}
	\hat X^{\eta,N}_{s(\omega)}(\omega)
	=
	\hat X^0_{T-\nu}(\omega) + \frac{2\eps}3.
	\label{eqthmxConv0SpecificthmXEtaEqX0PlusEps}
\end{align}
Due to minimality of $s(\omega)$, we have $\ddt \hat X^{\eta,N}_{s(\omega)}(\omega) \ge 0$.

We now show that this derivative must also be strictly negative.
In fact, due to \eqref{eqthmxConv0SpecificthmOdeXYEtaPos},
\begin{align*}
	\sqrt\eta \ddt \hat X^{\eta,N}_{s(\omega)}(\omega)
		=&
	- F^{\eta,N}_{s(\omega)}(\omega) \hat X^{\eta,N}_{s(\omega)}(\omega) + E^{\eta,N}_{s(\omega)}(\omega) \hat Z^{\eta,N}_{s(\omega)}(\omega).
\end{align*}
Since $\hat Z^{\eta,N}$ is non-increasing (Theorem \ref{thmXYEtaPosBdthm}), using \eqref{eqthmxConv0SpecificthmXEtaEqX0PlusEps} again, the right hand side of the above equation can be bounded from above by
\begin{align*}
	&
	- \varphi_{T-\nu}(\omega) \hat X^0_{T-\nu}(\omega)
	+ \big( \varphi_{T-\nu}(\omega) - \varphi_{s(\omega)}(\omega) \big) \hat X^0_{T-\nu}(\omega)
	- f^{\eta,N}_{s(\omega)}(\omega) \hat X^0_{T-\nu}(\omega)
	\\&
	- F^{\eta,N}_{s(\omega)}(\omega) \frac{2\eps}3
	+ E^0_{T-\nu}(\omega) \hat Z^0_{T-\nu}(\omega)
	+ E^0_{T-\nu}(\omega) \big( \hat Z^{\eta,N}_{T-\nu}(\omega) - \hat Z^0_{T-\nu}(\omega) \big)
	\\&
	+ \big( E^0_{s(\omega)}(\omega) - E^0_{T-\nu}(\omega) \big) \hat Z^{\eta,N}_{T-\nu}(\omega)
	+ e^{\eta,N}_{s(\omega)}(\omega) \hat Z^{\eta,N}_{T-\nu}(\omega).
\end{align*}
Since $\omega \in M_{3}^{\nu,\eta,N}$, we have
$
	F^{\eta,N}_{s(\omega)}(\omega)
	=
	\varphi_{s(\omega)}(\omega) + f^{\eta,N}_{s(\omega)}(\omega)
	\ge
	\underline\varphi - \nu,
$
which is strictly positive if $\nu < \underline\varphi$.
Moreover,
$
	- \varphi \hat X^0 + E^0 \hat Z^0
	= 0
$
on $[0,T)$.
Since $\omega \in M_{0}^{\nu,\eta,N} \cap M_2^{\beta,\nu} \cap M_3^{\nu,\eta,N}$ and since $\hat X^0$, $E^0$ and $\hat Z^{\eta,N}$ are bounded,
this shows that $\ddt \hat X^{\eta,N}_{s(\omega)}(\omega) < 0$ if $\nu$ and $\beta$ are chosen small enough.

The convergence of $\hat Y^{\eta,N} - \hat Y^0$ on $[0,T-]$ follows from
\begin{align*}
	\hat Y^{\eta,N} - \hat Y^0
	= \gamma \big( \hat X^{\eta,N} - \hat X^0 \big)
	- \big( \hat Z^{\eta,N} - \hat Z^0 \big)
\end{align*}
and \eqref{eqthmxConv0SpecificthmPM2GeEtc}. The convergence on $[T-,T]$ follows from the fact that
\begin{align*}
	\sup_{s\in [T-\nu,T]} \Big|
		\big( \hat Z^{\eta,N}_s - \hat Z^0_s \big)
		- \big( \hat Z^{\eta,N}_{T-\nu} - \hat Z^0_{T-\nu} \big)
	\Big|
\end{align*}
can be made arbitrarily small by choosing $\nu$ small since $\ddt \hat Z^{\eta,N} = \rho \hat Y^{\eta,N}$, since $\ddt \hat Z^0 = \rho \hat Y^0$ and because both $\hat Y^{\eta,N}$ and $\hat Y^0$ are bounded.
\end{proof}

\subsubsection{Proof of Proposition \ref{thmxConv0prop}}
\label{SectXGraphConv}

The proof of the convergence of the optimal portfolio processes in Skorohod $\mathcal M_2$ sense follows from Theorem \ref{thmxConv0Specificthm} if strict liquidation is required. If strict liquidation is not required, the results of Section \ref{SectModelEta0} are required to establish the assertion. This is not a circular argument since the proofs of Section \ref{SectModelEta0} only use the results of Section \ref{SectModelEtaPositive} concerning the liquidating case $N=\infty$.

Since $\hat\theta=(0,\hat j^-, \hat V)$ is optimal in the model introduced in Section \ref{SectModelEta0} (cf. Theorem \ref{thmthetaoptimalJ0thm}), it is easy to show that the strategy $\hat\theta^q:=(\hat j^{+,q},\hat j^{-,q},\hat V^q)$ is admissible for every $q \in \mathbb R$ where
\begin{align*}
	\hat V^q_t
	=&
	\hat V_t + q t,
	\\
	\hat j^{+,q}_t - \hat j^{-,q}_t
	=&
	- \hat j^-_t - \mathbbm 1_{\{T\}}(t) q T.
\end{align*}

The following lemma shows that we can express the cost term corresponding to the transient price impact without It\^o integrals. The proof is an immediate consequence of It\^o's formula for semimartingales (Theorem II.32 in \cite{protter}) and the fact that $\gamma^2 [V]$ is equal to the continuous part of $[Y^\theta]$.
\begin{lemma} \label{thmintYdXeqlemma}
For all $\theta = (j^+,j^-,V) \in \mathcal A^0$ and all $t \in [0,T]$, it holds that
\begin{align*}
	& \int_{(0,t]} \frac 1 2 \big( Y^\theta_{s-} + Y^\theta_s \big) \, \mathrm dX^\theta_s
	=
	\frac 1 {2\gamma} \big(Y^\theta_t\big)^2 - \frac \gamma 2 (j^+_0 - j^-_0)^2
	- \frac \gamma 2 [V]_t
	+ \frac 1 \gamma \int^t_0 \rho_s \big(Y^\theta_s\big)^2 \, \ds.
\end{align*}
\end{lemma}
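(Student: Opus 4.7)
The plan is to obtain the identity from It\^o's formula for semimartingales (Theorem II.32 in \cite{protter}) applied to $y\mapsto y^2$ and the semimartingale $Y^\theta$, together with the defining SDE \eqref{eqYDef}. The symmetric integrand $\tfrac 1 2(Y^\theta_{s-}+Y^\theta_s)$ is precisely what the quadratic It\^o expansion of $(Y^\theta)^2$ produces in the presence of jumps, so the identity reduces to algebra once the chain rule is written down carefully.

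Concretely, I would first decompose
\[
\int_{(0;t]}\tfrac 1 2\big(Y^\theta_{s-}+Y^\theta_s\big)\mathrm dX^\theta_s = \int_{(0;t]}Y^\theta_{s-}\mathrm dX^\theta_s + \tfrac 1 2\sum_{0<s\le t}\Delta Y^\theta_s\,\Delta X^\theta_s,
\]
and use $\Delta Y^\theta_s=\gamma\,\Delta X^\theta_s$ for $s>0$ (from \eqref{eqYDef}) to turn the jump sum into $\tfrac{1}{2\gamma}\sum_{0<s\le t}(\Delta Y^\theta_s)^2$. Next I would apply It\^o's formula to $(Y^\theta)^2$ from $0-$ to $t$ with $Y^\theta_{0-}=0$, obtaining
\[
(Y^\theta_t)^2 = 2\int_{[0;t]}Y^\theta_{s-}\mathrm dY^\theta_s + [Y^\theta]^c_t + \sum_{0\le s\le t}(\Delta Y^\theta_s)^2.
\]
Since $Y^\theta_{0-}=0$, the $s=0$ piece of the stochastic integral vanishes; substituting $\mathrm dY^\theta_s=\gamma\,\mathrm dX^\theta_s-\rho_s Y^\theta_s\,\mathrm ds$ on $(0;t]$, isolating $\int_{(0;t]}Y^\theta_{s-}\mathrm dX^\theta_s$, and recombining with the first step so that the two jump sums over $(0;t]$ cancel yields
\[
\int_{(0;t]}\tfrac 1 2\big(Y^\theta_{s-}+Y^\theta_s\big)\mathrm dX^\theta_s = \tfrac{1}{2\gamma}(Y^\theta_t)^2 + \tfrac 1\gamma\int_0^t\rho_s(Y^\theta_s)^2\,\mathrm ds - \tfrac{1}{2\gamma}[Y^\theta]^c_t - \tfrac{1}{2\gamma}(\Delta Y^\theta_0)^2.
\]
Finally, the hint $[Y^\theta]^c=\gamma^2[V]$ (which holds because only $V$ contributes a continuous martingale part to $X^\theta$, hence to $Y^\theta$) identifies the third term as $-\tfrac\gamma 2[V]_t$, and $\Delta Y^\theta_0=\gamma(j^+_0-j^-_0)$ identifies the fourth term as $-\tfrac\gamma 2(j^+_0-j^-_0)^2$.

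The only subtle point is the bookkeeping at the initial time: one must read ``$Y^\theta_0=0$'' in \eqref{eqYDef} as $Y^\theta_{0-}=0$, so that an initial block trade drives a genuine jump $\Delta Y^\theta_0=\gamma(j^+_0-j^-_0)$ of $Y^\theta$ at time zero, which is precisely what produces the $-\tfrac\gamma 2(j^+_0-j^-_0)^2$ correction in the lemma. Once this convention is fixed, the rest is direct bookkeeping of It\^o's formula and there is no real technical obstacle.
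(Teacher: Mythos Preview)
Your proposal is correct and follows essentially the same approach as the paper: the paper's proof merely states that the identity is an immediate consequence of It\^o's formula for semimartingales (Theorem~II.32 in \cite{protter}) together with $[Y^\theta]^c=\gamma^2[V]$, and your argument is precisely a careful unpacking of this. The only additional point you make explicit---the convention $Y^\theta_{0-}=0$ and the resulting initial-jump bookkeeping---is implicit in the paper's one-line proof.
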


Using the previous lemma, it is not difficult to check that the mapping $q \mapsto J^0(\hat\theta^q)$ is differentiable and that
\begin{align}
\begin{split}
	0
	=
	\frac {\partial J^0(\hat\theta^q)} {\partial q} \bigg|_{q=0}
	=&
	\E\bigg[
		- \hat Y^0_T e^{-\int^T_0 \rho_u \, \du } \int^T_0 t \rho_t e^{\int_0^t \rho_u \, \du} \, \dt
		+ \int^T_0 2 t \rho_t \hat Y^0_t \, \dt
	\\& \quad
		- \int^T_0 2 \rho_t \hat Y^0_t e^{-\int^t_0 \rho_u \, \du } \int^t_0 s \rho_s e^{\int_0^s \rho_u \, \du} \, \ds \, \dt
		+ \int^T_0 t \lambda_t \hat X^0_t \, \dt
	\bigg].
\end{split}
\label{eqthmXTnFiniteTo0lemma0EqDdcPart2}
\end{align}

This allows us to prove that full liquidation is optimal if $\eta \to 0$ even if it is not formally required.

\begin{lemma} \label{thmXTnFiniteTo0lemma}
We have that $\lim_{\eta\to 0} \sup_{N\in \mathcal N} \E[\hat X^{\eta,N}_T] = 0$.
\end{lemma}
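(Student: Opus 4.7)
The plan is to turn the pre-limit optimiser $\hat\xi^{\eta,N}$ into an admissible semimartingale strategy by appending a single terminal block liquidation, and then bound the leftover inventory using the optimality of $\hat\theta$ from Theorem \ref{thmthetaoptimalJ0thm}. The case $N=\infty$ is immediate because $\hat X^{\eta,\infty}_T=0$, so I focus on $N\in[\underline N,\infty)$. Set
\begin{align*}
V^{\eta,N}_t := \hat X^{\eta,N}_t - x_0,\qquad \hat j^{-,\eta,N}_t := \hat X^{\eta,N}_T\,\mathbbm 1_{\{T\}}(t),\qquad \theta^{\eta,N}:=\big(0,\hat j^{-,\eta,N},V^{\eta,N}\big).
\end{align*}
Theorem \ref{thmXYEtaPosBdthm} shows that $V^{\eta,N}$ is a bounded, absolutely continuous process vanishing at $0$ (hence a continuous semimartingale satisfying \eqref{eqVMaxLp}); since $T$ is a deterministic time with $\mathcal F_{T-}=\mathcal F_T$ in the Brownian filtration, the process $\hat j^{-,\eta,N}$ is predictable, RCLL, non-decreasing and square-integrable; and $X^{\theta^{\eta,N}}_T=0$ by construction. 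Hence $\theta^{\eta,N}\in\mathcal A^0$.

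Next, I would compute $J^0(\theta^{\eta,N})$ by splitting the integral $\int_{(0,T]}\tfrac12\bigl(Y^{\theta^{\eta,N}}_{s-}+Y^{\theta^{\eta,N}}_s\bigr)\,\mathrm dX^{\theta^{\eta,N}}_s$ into its absolutely continuous part on $(0,T)$, which coincides with $\int_0^T \hat Y^{\eta,N}_s\hat\xi^{\eta,N}_s\,\ds$, and the single jump contribution at $t=T$, which equals $-\hat X^{\eta,N}_T\hat Y^{\eta,N}_T+\tfrac\gamma2(\hat X^{\eta,N}_T)^2$. Using $[V^{\eta,N}]_T=0$ and the absence of an initial block, direct comparison with the definition of $J^{\eta,N}$ yields the identity
\begin{align*}
J^0(\theta^{\eta,N})
=J^{\eta,N}\bigl(\hat\xi^{\eta,N}\bigr)
-\frac{\eta}{2}\E\bigg(\int_0^T\bigl(\hat\xi^{\eta,N}_t\bigr)^2\,\dt\bigg)
-\frac{N-\gamma}{2}\E\bigl((\hat X^{\eta,N}_T)^2\bigr).
\end{align*}
Optimality of $\hat\theta$ gives $J^0(\hat\theta)\le J^0(\theta^{\eta,N})$; since $\hat\xi^{\eta,\infty}$ fully liquidates we have $J^{\eta,N}(\hat\xi^{\eta,\infty})=J^{\eta,\infty}(\hat\xi^{\eta,\infty})$ and therefore $J^{\eta,N}(\hat\xi^{\eta,N})\le J^{\eta,\infty}(\hat\xi^{\eta,\infty})$; and $\underline N\ge\gamma+1$ forces $N-\gamma\ge 1$ uniformly in $N\in\mathcal N$. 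Rearranging yields the central estimate
\begin{align*}
\frac12\E\bigl((\hat X^{\eta,N}_T)^2\bigr)\le J^{\eta,\infty}\bigl(\hat\xi^{\eta,\infty}\bigr)-J^0(\hat\theta),
\end{align*}
whose right-hand side does not depend on $N$.

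It remains to show $\lim_{\eta\to 0}J^{\eta,\infty}(\hat\xi^{\eta,\infty})=J^0(\hat\theta)$. The lower bound $\ge J^0(\hat\theta)$ follows from \eqref{eqJEtaEqJ0PlusEIntXiSq} and the optimality of $\hat\theta$ tested against $(0,0,V^{\hat\xi^{\eta,\infty}})\in\mathcal A^0$. For the matching upper bound I would test $\hat\xi^{\eta,\infty}$ against the absolutely continuous approximation $\xi^{\hat\theta,\beta,\nu,\eps}$ of \eqref{e-index}, use \eqref{eqJEtaEqJ0PlusEIntXiSq}, send $\eta\to 0$ with $(\beta,\nu,\eps)$ fixed, and then invoke Lemmas \ref{thmEintDeltaXsqconv0lemma} and \ref{thmJetaconvJ0lemma} applied to $\hat\theta$ to make the residual arbitrarily small. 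The conclusion $\sup_{N\in\mathcal N}\E(\hat X^{\eta,N}_T)\to 0$ then follows by Jensen's inequality. The only real subtlety is identifying the correct comparison strategy — replacing the absolutely continuous activity of $\hat\xi^{\eta,N}$ plus a phantom terminal trade by a single predictable terminal jump — after which the structural coefficient $\tfrac{N-\gamma}{2}$ comes out with the right sign precisely because the admissibility lower bound satisfies $\underline N>\gamma$.
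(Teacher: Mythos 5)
Your proposal is correct, and it takes a genuinely different route from the paper. The paper proves the lemma by a first-order (perturbation) argument: it shifts both $\hat\xi^{\eta,N}$ and $\hat\theta$ by a constant rate $q$, sets the derivatives of $J^{\eta,N}$ and $J^0$ at $q=0$ to zero, subtracts the two stationarity conditions, and then uses Theorem \ref{thmxConv0Specificthm} (for all $N\in\mathcal N$), Theorem \ref{thmXYEtaPosBdthm} and Lemma \ref{thmXY0BdStarStarlemma} to make the residual terms small, so that the term $(N-\gamma)T\,\E(\hat X^{\eta,N}_T)$ produces a contradiction if $\limsup_{\eta\to0}\sup_N\E(\hat X^{\eta,N}_T)>0$. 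You instead argue by value comparison: you append a terminal block to the pre-limit optimizer to obtain an admissible $\theta^{\eta,N}\in\mathcal A^0$, verify the exact cost identity $J^0(\theta^{\eta,N})=J^{\eta,N}(\hat\xi^{\eta,N})-\tfrac\eta2\E\int_0^T(\hat\xi^{\eta,N}_t)^2\dt-\tfrac{N-\gamma}2\E\big((\hat X^{\eta,N}_T)^2\big)$ (which checks out, including the jump contribution $-\hat X^{\eta,N}_T\hat Y^{\eta,N}_T+\tfrac\gamma2(\hat X^{\eta,N}_T)^2$ and $[V^{\eta,N}]_T=0$), and reduce everything to the value convergence $J^{\eta,\infty}(\hat\xi^{\eta,\infty})\to J^0(\hat\theta)$, which you obtain from \eqref{eqJEtaEqJ0PlusEIntXiSq}, Lemma \ref{thmEintDeltaXsqconv0lemma} and Lemma \ref{thmJetaconvJ0lemma}. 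Both proofs lean on Theorem \ref{thmthetaoptimalJ0thm}, so both inherit the same non-circularity requirement, which your construction respects since that theorem only uses the $N=\infty$ pre-limit results. What your route buys: a quantitative, $N$-uniform $L^2$ bound $\E\big((\hat X^{\eta,N}_T)^2\big)\le 2\big(J^{\eta,\infty}(\hat\xi^{\eta,\infty})-J^0(\hat\theta)\big)$, no Gateaux-derivative computations, and no need for the finite-$N$ state-process convergence of Theorem \ref{thmxConv0Specificthm}; what the paper's route avoids is the exact jump-cost bookkeeping, working only with first-order conditions. One minor point of care in your last step: fix the target accuracy first, choose $(\beta,\nu,\eps)$ via Lemma \ref{thmEintDeltaXsqconv0lemma} so that the bound of Lemma \ref{thmJetaconvJ0lemma} is small, and only then send $\eta\to0$ to kill $\tfrac\eta2\E\int_0^T(\xi^{\hat\theta,\beta,\nu,\eps}_t)^2\dt$; this is what you intend, but the order of quantifiers should be stated explicitly.
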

\begin{proof}
We assume to the contrary that $\limsup_{\eta \to 0} \sup_{N\in \mathcal N} \E[\hat X^{\eta,N}_T] > 0$ and prove that this contradicts the optimality of $J^{\eta,N}(\hat\xi^{\eta,N})$.
To this end, we consider the admissible trading strategies
\begin{align*}
	\hat\xi^{\eta,N,q}_t
	:=
	\hat\xi^{\eta,N}_t
	+ q,
\end{align*}
compute the derivative of the function $q \mapsto J^{\eta,N}(\hat\xi^{\eta,N,q})$ and show that the derivative at $q=0$ does not vanish for small $\eta$ if
\begin{align*}
	\limsup_{\eta \to 0} \sup_{N\in \mathcal N} \E\big[ \hat X^{\eta,N}_T \big] > 0.
\end{align*}
Obviously, $\hat\xi^{\eta,N,q} \in \mathcal A$ and
\begin{align*}
	\hat X^{\eta,N,q}_t
	:=
	X^{\hat\xi^{\eta,N,q}}_t
	=
	\hat X^{\eta,N}_t + q t.
\end{align*}
Straightforward computations lead to
\begin{align*}
	0
	=
	\frac {\partial J^{\eta,N}(\hat\xi^{\eta,N,q})} {\partial q} \bigg|_{q=0}
	=&
	\E\bigg[
		\eta \big( \hat X^{\eta,N}_T - x_0 \big)
		- \hat Y^{\eta,N}_T e^{ -\int^T_0 \rho_u \, \du } \int^T_0 t \rho_t e^{ \int_0^t \rho_u \, \du } \, \dt
		+ \int^T_0 2 t \rho_t \hat Y^{\eta,N}_t \, \dt
	\\&
		- \int^T_0 2 \rho_t \hat Y^{\eta,N}_t e^{ -\int^t_0 \rho_u \, \du } \int^t_0 s \rho_s e^{\int_0^s \rho_u \, \du } \, \ds \, \dt
		+ \int^T_0 t \lambda_t \hat X^{\eta,N}_t \, \dt
	\\&
		+ (N - \gamma) T \hat X^{\eta,N}_T
		+ \gamma \hat X^{\eta,N}_T e^{ -\int^T_0 \rho_u \, \du } \int^T_0 t \rho_t e^{ \int_0^t \rho_u \, \du } \, \dt
	\bigg].
\end{align*}
Subtracting \eqref{eqthmXTnFiniteTo0lemma0EqDdcPart2} yields
\begin{align*}
	0
	=&
	\E\bigg[
		- \eta x_0
		+ \hat X^{\eta,N}_T \big(
			\eta
			+ (N - \gamma) T
		\big)
		+ \int^T_0 2 t \rho_t \big( \hat Y^{\eta,N}_t - \hat Y^0_t \big) \, \dt
	\\&
		+ \big( \hat Z^{\eta,N}_T - \hat Z^0_T \big) e^{ -\int^T_0 \rho_u \, \du } \int^T_0 t \rho_t e^{ \int_0^t \rho_u \, \du } \, \dt
		+ \int^T_0 t \lambda_t \big( \hat X^{\eta,N}_t - \hat X^0_t \big) \, \dt
	\\&
		- \int^T_0 2 \rho_t \big( \hat Y^{\eta,N}_t - \hat Y^0_t \big) e^{ -\int^t_0 \rho_u \, \du } \int^t_0 s \rho_s e^{\int_0^s \rho_u \, \du } \, \ds \, \dt
	\bigg]
	\\\ge&
	- \eta x_0
	+ \big( \eta + (N - \gamma) T \big) \E\big[ \hat X^{\eta,N}_T \big]
	- T \overline\lambda \E\bigg[ \int^T_0 \big| \hat X^{\eta,N}_t - \hat X^0_t \big| \, \dt \bigg]
	\\&
	- 2 \big( T \overline\rho + \overline\rho^2 T^2 e^{ T \overline\rho } \big) \E\bigg[ \int^T_0 \big| \hat Y^{\eta,N}_t - \hat Y^0_t \big| \, \dt \bigg]
	- T^2 \overline\rho e^{ T \overline\rho } \E\Big[ \big| \hat Z^{\eta,N}_T - \hat Z^0_T \big| \Big].
\end{align*}

In view of Theorem \ref{thmXYEtaPosBdthm}, Theorem \ref{thmxConv0Specificthm}, Lemma \ref{thmXY0BdStarStarlemma} and using
\begin{align*}
	\ddt \big( \hat Z^{\eta,N} - \hat Z^0 \big) = \rho \big( \hat Y^{\eta,N} - \hat Y^0 \big),
\end{align*}
the sum of the three last expected values is small uniformly in $N$ if $\eta$ is small. Hence, if $\limsup_{\eta \to 0} \sup_{N \in \mathcal N} \E[\hat X^{\eta,N}_T] > 0$, then the sum on the right hand side of the above inequality is strictly positive when first choosing $\eta_0>0$ small enough and then choosing $(\tilde\eta,\tilde N)$ with $\tilde\eta \le \eta_0$ such that
$$
\E\big[ \hat X^{\tilde\eta,\tilde N}_T \big]
\ge
\frac 1 2 \limsup_{\eta \to 0} \sup_{N \in \mathcal N} \E\big[ \hat X^{\eta,N}_T \big].
$$.
\end{proof}

We are now ready to prove the convergence of $\hat X^{\eta,N}$ to $\hat X^0$ in the Skorohod $\mathcal M_2$ sense. To this end, we have to bound the distance of each point of any of the graphs to the other graph. In the inner interval $[\eps,T-\eps]$, it is enough to consider $\hat X^{\eta,N}-\hat X^0$, which we have bounded by Theorem \ref{thmxConv0Specificthm}.

\begin{proof}[Proof of Proposition \ref{thmxConv0prop}]
To prove that the probability of
\begin{align*}
	d_{\mathcal M_2}\big(\hat X^{\eta,N},\hat X^0\big) \le \eps
\end{align*}
is large for small $\eta > 0$, we need to prove that the distance of any point $(t,x)$ from either $G_{\hat X^{\eta,N}(\omega)}$ or $G_{\hat X^0(\omega)}$ to the respective other graph is small on a set of large probability. To this end, we fix a small enough $\nu \in (0,\eps)$. If $t\in [\nu,T-\nu]$, this follows directly from Theorem \ref{thmxConv0Specificthm}. For $t \in [0,\nu) \cup (T-\nu,T]$, we use Theorem \ref{thmxConv0Specificthm} along with the facts that (i) the completed graph of a discontinuous function contains the line segments joining the values of the function at the points of discontinuity; (ii) the increments of $\hat X^0$ are small in the sense of Lemma \ref{thmStarStarlemma} and (iii) $\lim_{\eta\to 0} \sup_{N\in\mathcal N} \E[\hat X^{\eta,N}_T] = 0$, due to Lemma \ref{thmXTnFiniteTo0lemma}. For instance, let us consider an $\omega \in \Omega$ with
\begin{align*}
	\sup_{u \in [0,T-\nu]} \big( \hat X^0_u(\omega) - \hat X^{\eta,N}_u(\omega) \big) \le \nu
\end{align*}
and assume that $(t,x) \in G_{\hat X^{\eta,N}(\omega)}$ with $t < \nu$ and
$x < \hat X^0_0(\omega) - \nu$. Since $x = \hat X^{\eta,N}_t(\omega)$, the mean value theorem yields an $s \in [0,t]$ s.t. $\hat X^0_s(\omega) = x + \nu$, which proves that $(s,x+\nu) \in G_{\hat X^0(\omega)}$ and
\begin{align*}
	d_\infty \big( (t,x),(s,x+\nu) \big)
	=
	(t-s) + \nu
	\le
	2 \nu.
\end{align*}
\end{proof}


\section{Proofs for Section \ref{SectModelEta0}} \label{SectProofsPart2}

\subsection{Proof of Lemma \ref{thmWPApproximationExistslemma} } \label{SectVApproximation}

In order to prove Lemma \ref{thmWPApproximationExistslemma} we first define, for all $\beta,\nu \in (0,\infty)$ and $x \in \R$,
\begin{align*}
	f^{\beta,\nu}(x)
	:= \begin{cases}
	-\beta/\nu, & x \le -\beta, \\
	x/\nu, & -\beta \le x \le \beta, \\
	\beta/\nu, & x \ge \beta.
	\end{cases}
\end{align*}

For any admissible strategy $(j^+,j^-,V) \in \mathcal A^0$, there exists a unique
pathwise differentiable, adapted stochastic process $\{\tilde V^{\beta,\nu}_t\}_{t \in [0,T]}$ such that $\tilde V^{\beta,\nu}_0 = 0$ and its time derivative satisfies
\begin{align*}
	\dot V^{\beta,\nu}_t(\omega)
	=
	f^{\beta,\nu} \big( V_t(\omega) - \tilde V^{\beta,\nu}_t(\omega) \big).
\end{align*}

\medskip

Now, \eqref{eqMaxVBetaNuLeMaxV} can easily be verified by the comparison principle: For all $\eps > 0$ and all $t \in [0,T]$, we have
\begin{align*}
	f^{\beta,\nu} \bigg( V_t - \Big( \max_{s \in [0,T]} \vert V_s\vert + \eps \Big) \bigg)
	<
	0
\end{align*}
and hence, for all $\eps > 0$ and $t \in [0,T]$,
\begin{align*}
	\tilde V^{\beta,\nu}_t
	\le
	\max_{s \in [0,T]} \vert V_s\vert + \eps.
\end{align*}

Analogously, we can prove that
	$\tilde V^{\beta,\nu}_t
	\ge
	- \max_{s \in [0,T]} \vert V_s\vert - \eps$.
Now, we are going to prove \eqref{eqIntDotVMinusVSmall}. For $\beta,\nu \in (0,\infty)$, let
\begin{align*}
	M
	:= \Big\{ \sup_{s,t\in[0,T], |s-t| \le \nu} \big| V_s - V_t \big| \le \beta \Big\}.
\end{align*}
In view of Lemma \ref{thmStarStarlemma}, it is enough to prove that
\begin{align*}
	\big| V_t(\omega) - \tilde V^{\beta,\nu}_t(\omega) \big|
	< 3 \beta
\end{align*}
for all $\omega\in M$ and $t\in [0,T]$. In order to see this, let us assume to the contrary that the statement is wrong. Then, by continuity, since $V_0 - \tilde V^{\beta,\nu}_0 = 0$, there exists some $\omega \in M$ and some $t_1 \in [0,T]$ such that
\begin{align*}
	\big| V_{t_1}(\omega) - \tilde V^{\beta,\nu}_{t_1}(\omega) \big|
	= 3 \beta.
\end{align*}
We choose the smallest such $t_1 \in [0,T]$.
Then, for each $t \in [0\vee(t_1-\nu),t_1]$,
\begin{align*}
	\big| V_t(\omega) - \tilde V^{\beta,\nu}_t(\omega) \big|
	\geq &
	\big| V_{t_1}(\omega) - \tilde V^{\beta,\nu}_{t_1}(\omega) \big|
	- \big| V_{t_1}(\omega) - V_t(\omega) \big|
	- \big| \tilde V^{\beta,\nu}_{t_1}(\omega) - \tilde V^{\beta,\nu}_t(\omega) \big|
	\\\ge&
	3 \beta - \beta - (t_1 - t) \frac \beta \nu
	\ge
	\beta.
\end{align*}
Since $V_0 - \tilde V^{\beta,\nu}_0 = 0$, we have $t_1 > \nu$. Since $V_\cdot(\omega) - \tilde V^{\beta,\nu}_\cdot(\omega)$ is continuous and has no roots in $[t_1-\nu,t_1]$, it does not change sign on this interval. We may hence w.l.o.g.~assume that
$V_{t_1}(\omega) - \tilde V^{\beta,\nu}_{t_1}(\omega) = 3 \beta$. This implies that
\begin{align*}
	V_{t}(\omega) - \tilde V^{\beta,\nu}_{t}(\omega)
	\ge \beta
\end{align*}
for all $t \in [t_1-\nu,t_1]$. By definition, for all those $t$, $\dot V^{\beta,\nu}_t(\omega) = \beta/\nu$. This, however, contradicts the minimality of $t_1$ as
\begin{align*}
	V_{t_1-\nu}(\omega) - \tilde V^{\beta,\nu}_{t_1-\nu}(\omega)
	=&
	V_{t_1-\nu}(\omega) - V_{t_1}(\omega)
	+ V_{t_1}(\omega) - \tilde V^{\beta,\nu}_{t_1}(\omega)
	+ \tilde V^{\beta,\nu}_{t_1}(\omega) - \tilde V^{\beta,\nu}_{t_1-\nu}(\omega)
	\\\ge&
	- \beta + 3 \beta + \int_{t_1-\nu}^{t_1} \underbrace{\dot V^{\beta,\nu}_s}_{=\beta/\nu} \, \ds
	= 3 \beta.
\end{align*}
This finishes the proof of Lemma \ref{thmWPApproximationExistslemma}.
\hfill $\Box$

\subsection{Proof of Lemma \ref{thmEintDeltaXsqconv0lemma}} \label{SectConvergence}

The proof of Lemma \ref{thmEintDeltaXsqconv0lemma} requires
the following result on the jump processes.

\begin{lemma} \label{thmEintDeltasqjconv0lemma}
Let $\{j_t\}_{t\in[0,T]}$ be a $[0,\infty)$-valued, non-decreasing and progressively measurable stochastic process with $\E[(j_T)^2] < \infty$. Then (putting $j_t = 0$ for $t < 0$)
\begin{align*}
	\lim_{\eps \to 0} \E\bigg[ \int^T_0 (j_t - j_{t-\eps})^2 \, \dt \bigg]
	= 0.
\end{align*}
\end{lemma}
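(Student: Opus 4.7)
The plan is to prove this by a direct and elementary estimate, exploiting monotonicity of $j$ to bound the squared increment by a product with $j_T$, and then computing the remaining single-increment integral by a simple substitution.

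First I would observe that since $j$ is non-decreasing and non-negative with $j_t = 0$ for $t < 0$, we have for all $t \in [0;T]$ and $\eps > 0$ the pointwise bound
\[
0 \le j_t - j_{t-\eps} \le j_T.
\]
In particular, $(j_t - j_{t-\eps})^2 \le j_T \cdot (j_t - j_{t-\eps})$, so that pathwise
\[
\int_0^T (j_t - j_{t-\eps})^2 \dt \le j_T \int_0^T (j_t - j_{t-\eps}) \dt.
\]

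Next I would evaluate the single integral by a change of variables. Substituting $u = t - \eps$ in the shifted term and using $j_u = 0$ for $u < 0$,
\[
\int_0^T j_{t-\eps} \dt = \int_{-\eps}^{T-\eps} j_u \du = \int_0^{T-\eps} j_u \du,
\]
so that
\[
\int_0^T (j_t - j_{t-\eps}) \dt = \int_0^T j_t \dt - \int_0^{T-\eps} j_u \du = \int_{T-\eps}^T j_t \dt \le \eps\, j_T.
\]
Combining the two pathwise bounds gives
\[
\int_0^T (j_t - j_{t-\eps})^2 \dt \le \eps \,(j_T)^2.
\]

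Finally, taking expectations and using the assumption $\E((j_T)^2) < \infty$,
\[
\E\bigg(\int_0^T (j_t - j_{t-\eps})^2 \dt\bigg) \le \eps\, \E\big((j_T)^2\big) \xrightarrow[\eps \to 0]{} 0,
\]
which is the claim. There is no serious obstacle: the key trick is to downgrade one factor of $(j_t - j_{t-\eps})$ to $j_T$ using monotonicity, after which the remaining integral telescopes onto a strip of length $\eps$. Progressive measurability of $j$ ensures that the inner Lebesgue integral is measurable, so taking expectation is justified.
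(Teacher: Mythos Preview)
Your proof is correct and considerably more direct than the paper's argument. The paper proceeds by a dyadic recursion: it compares $\int_0^T (j_t - j_{t-\eps})^2 \dt$ with $\int_0^T (j_t - j_{t-2\eps})^2 \dt$, obtains the inequality
\[
\frac{1}{\eps}\int_0^T (j_t - j_{t-\eps})^2 \dt \le \frac{1}{2\eps}\int_0^T (j_t - j_{t-2\eps})^2 \dt + \frac{1}{2}(j_T - j_{T-2\eps})^2,
\]
and iterates $k$ times to arrive at the bound $\big(\tfrac{T}{2^k}+\eps\big)\E((j_T)^2)$, which tends to zero after first sending $k\to\infty$. Your approach bypasses all of this by the single observation $(j_t-j_{t-\eps})^2 \le j_T\,(j_t-j_{t-\eps})$, after which the telescoping integral $\int_0^T (j_t-j_{t-\eps})\,\dt = \int_{T-\eps}^T j_t\,\dt \le \eps\, j_T$ gives the sharp pathwise estimate $\int_0^T (j_t-j_{t-\eps})^2\,\dt \le \eps\,(j_T)^2$ directly. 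This is both shorter and yields a cleaner constant; the paper's doubling argument offers no additional information here.
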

\begin{proof}
First we want to bound $\int^T_0 (j_t - j_{t-\eps})^2 \, \dt$ through a term that depends on $\int^T_0 (j_t - j_{t-2\eps})^2 \, \dt$ and apply this bound inductively.
For every $\eps > 0$, we have
\begin{align*}
	\int^T_0 (j_t - j_{t-2\eps})^2 \, \dt
	= & \int^T_0 (j_t - j_{t-\eps} + j_{t-\eps} - j_{t-2\eps})^2 \, \dt
	\\ \ge &
	\int^T_0 (j_t - j_{t-\eps})^2 \, \dt
	+ \int_0^{T-\eps} ( j_t - j_{t-\eps} )^2 \, \dt
	\\ = & 2\int^T_0 (j_t - j_{t-\eps})^2 \, \dt
	- \int_{T-\eps}^T ( j_t - j_{t-\eps} )^2 \, \dt
\end{align*}
as well as
\begin{align*}
	\int_{T-\eps}^T ( j_t - j_{t-\eps} )^2 \, \dt
	\le \int_{T-\eps}^T ( j_T - j_{T-2\eps} )^2 \, \dt
	\le \eps ( j_T - j_{T-2\eps} )^2.
\end{align*}
As a result,
\begin{align*}
	\frac 1 \eps \int^T_0 (j_t - j_{t-\eps})^2 \, \dt
	\le \frac 1 {2\eps} \int^T_0 (j_t - j_{t-2\eps})^2 \, \dt + \frac 1 2 ( j_T - j_{T-2\eps} )^2.
\end{align*}
Hence, inductively, for all $\eps \in (0,T)$ and $k \in \{1,2,\ldots\}$, we have
\begin{align*}
	\frac 1 {\eps} \int^T_0 (j_t - j_{t-\eps})^2 \, \dt
	\le&
	\frac 1 {2 \eps} \int^T_0 (j_t - j_{t-2\eps})^2 \, \dt
	+ \frac 1 2 ( j_T - j_{T-2\eps} )^2
	\\\le&
	\frac 1 {4\eps} \int^T_0 (j_t - j_{t-4\eps})^2 \, \dt
	+ \frac 1 4 ( j_T - j_{T-4\eps} )^2
	+ \frac 1 2 ( j_T - j_{T-2\eps} )^2
	\\\le&
	\ldots
	\\\le &
	\frac 1 {2^k\eps} \int^T_0 (j_t - j_{t-2^k\eps})^2 \, \dt
	+ \sum_{l=1}^k \frac 1 {2^l} ( j_T )^2
	\\ \le &
	\frac 1 {2^k\eps} \int^T_0 (j_t - j_{t-2^k\eps})^2 \, \dt
	+ ( j_T )^2.
\end{align*}
Hence, for all $k \in \mathbb N$,
\begin{align*}
	\E\bigg[ \int^T_0 (j_t - j_{t-\eps})^2 \, \dt \bigg]
	\le
	\bigg(
		\frac T {2^k}
		+ \eps
	\bigg) \E\big[ ( j_T )^2 \big].
\end{align*}
This shows the desired result.
\end{proof}

We are now ready to prove the approximation of arbitrary portfolio processes by absolutely continuous ones.

\begin{proof}[Proof of Lemma \ref{thmEintDeltaXsqconv0lemma}]
By the triangle inequality, for all $\beta,\nu,\eps > 0$,
\begin{align}
\begin{split}
	&
	\sqrt{ \E\bigg[ \int^T_0 \big( X^{\theta,\beta,\nu,\eps}_t - X^\theta_t \big)^2 \, \dt \bigg] }
	\\\le&
	\sqrt{ \E\bigg[ \int^T_0 \bigg( \int^t_{(T-\eps)\wedge t} \dot V^{\beta,\nu}_s \, \ds \bigg)^2 \, \dt \bigg] }
	\\&
	+ \sum_{* \in \{+,-\}} \sqrt{ \E\bigg[ \int^T_0 \bigg( \int_0^{(T-\eps)\wedge t} \frac 1 \eps \big(j^*_s - j^*_{s-\eps}\big) \, \ds - j^*_t \bigg)^2 \, \dt \bigg] }
	\\&
	+ \sqrt{ \E\bigg[ \int^T_0 \bigg( \frac 1 \eps \big(t - (T-\eps) \big)^+ X^{\theta,\beta,\nu,\eps}_{T-\eps} \bigg)^2 \, \dt \bigg] }
	+ \sqrt{ \E\bigg[ \int^T_0 \bigg( \int^t_0 \dot V^{\beta,\nu}_s \, \ds - V_t \bigg)^2 \, \dt \bigg] }.
\end{split}
\label{eqthmEintDeltaXsqconv0lemmaSqrtEIntSqDeltaXLeEtc}
\end{align}
We analyse the four terms separately. This first term can be bounded by
\begin{align*}
	\sqrt{ \E\bigg[ \int^T_0 \bigg( \int^t_{(T-\eps)\wedge t} \dot V^{\beta,\nu}_s \, \ds \bigg)^2 \, \dt \bigg] }
	& \le
	\sqrt{ \E\bigg[ \int^T_{T-\eps} \bigg( \int^t_{T-\eps} \dot V^{\beta,\nu}_s \, \ds \bigg)^2 \, \dt \bigg] } \\
	& \le
	\frac {\eps^{\frac 3 2} \beta} \nu
	\to 0 \quad \mbox{for } \eps \to 0.
\end{align*}

Regarding the second term, for all $\eps < T/2$,
\begin{align*}
	\int_0^{(T-\eps)\wedge t} \frac 1 \eps \big(j^*_s - j^*_{s-\eps}\big) \, \ds - j^*_t
	= - \int_{((T-\eps)\wedge t) - \eps}^{(T-\eps)\wedge t} \frac 1 \eps (j^*_t - j^*_s) \, \ds.
\end{align*}

Using the monotonicity of the jump processes it follows from Lemma \ref{thmEintDeltasqjconv0lemma} that
\begin{align}
	&
	\sqrt{\E \bigg[ \int^T_0 \bigg( \int_0^{(T-\eps)\wedge t} \frac 1 \eps \big(j^*_s - j^*_{s-\eps}\big) \, \ds - j^*_t \bigg)^2 \, \dt \bigg] }
	\nonumber
	\\=&
	\sqrt{\E \bigg[ \int^T_0 \bigg( \int_{((T-\eps)\wedge t) - \eps}^{(T-\eps)\wedge t} \frac 1 \eps \big(j^*_t - j^*_s \big) \, \ds \bigg)^2 \, \dt \bigg] }
	\nonumber
	\\\le&
	\frac 1 \eps \sqrt{ \E\bigg[ \int^T_0 \bigg( \int_{((T-\eps)\wedge t) - \eps}^{(T-\eps)\wedge t} \big( j^*_t - j^*_s \big) \, \ds \bigg)^2 \, \dt \bigg] }
	\nonumber
	\\=&
	 \sqrt{
	\E\bigg[ \int_0^{T-\eps} \big( j^*_t - j^*_{t - \eps} \big)^2 \, \dt \bigg]
	+ \E\bigg[ \int_{T-\eps}^T \big( j^*_t - j^*_{T-2\eps} \big)^2 \, \dt \bigg]
	}
	\nonumber
	\\ \le &
	\sqrt{
		\E\bigg[ \int^T_0 \big( j^*_t - j^*_{t - \eps} \big)^2 \, \dt \bigg]
		+ \eps \E\Big[ \big( j^*_T \big)^2 \Big]
	}
	 \to 0 \quad \mbox{ for } \eps \to 0.
	\label{eqprfthmEintDeltaXsqconv0lemmaEintsqintDeltajconv0}
\end{align}

For the third term, we conclude from the It\^o isometry and the definition of $\xi^{\theta,\beta,\nu,\eps}$ that
\begin{align*}
	&
	\sqrt{\E\bigg[ \int^T_0 \bigg( \frac 1 \eps \big(t - (T-\eps) \big)^+ X^{\theta,\beta,\nu,\eps}_{T-\eps} \bigg)^2 \, \dt \bigg]}
	\\=&
	\sqrt{\frac \eps 3 \E\bigg[ \bigg( x_0 + \int_0^{T-\eps} \xi^{\theta,\beta,\nu,\eps}_s \, \ds \bigg)^2 \bigg]}
	\\\le&
	x_0 \sqrt{\frac \eps 3 }
	+ \sum_{* \in \{+,-\}} \sqrt{\frac \eps 3 \E\bigg[ \bigg( \int_0^{T-\eps} \frac 1 \eps \big( j^*_s - j^*_{s-\eps} \big) \, \ds \bigg)^2 \bigg]}
	+ \sqrt{\frac \eps 3 \E\bigg[ \bigg( \int_0^{T-\eps} \dot V^{\beta,\nu}_s \, \ds \bigg)^2 \bigg]}.
\end{align*}
Now we can bound
\begin{align*}
	\sqrt{\frac \eps 3 \E\bigg[ \bigg( \int_0^{T-\eps} \dot V^{\beta,\nu}_s \, \ds \bigg)^2 \bigg]}
	\le
	T \frac \beta \nu \sqrt{\frac \eps 3} \to 0 \quad \mbox{for } \eps \to 0.
\end{align*}
Moreover, in view of \eqref{eqprfthmEintDeltaXsqconv0lemmaEintsqintDeltajconv0},
\begin{align*}
	&
	 \sqrt{\frac \eps 3 \E\bigg[ \bigg( \int_0^{T-\eps} \frac 1 \eps \big( j^*_s - j^*_{s-\eps} \big) \, \ds \bigg)^2 \bigg]} \\
	= &
	\sqrt{\frac 1 3 \E\bigg[ \int_{T-\eps}^T \bigg( \int_0^{(T-\eps)\wedge t} \frac 1 \eps \big( j^*_s - j^*_{s-\eps} \big) \, \ds \bigg)^2 \, \dt\bigg]}
	\\ \le & \sqrt{\frac 1 3 \E\bigg[ \int^T_0 \bigg( \int_0^{(T-\eps)\wedge t} \frac 1 \eps \big( j^*_s - j^*_{s-\eps} \big) \, \ds - j^*_t \bigg)^2 \, \dt\bigg]}
	+ \sqrt{\frac 1 3 \E\bigg[ \int_{T-\eps}^T \big( j^*_t \big)^2 \, \dt\bigg]}
	\\\le&
	 \frac 1 {\sqrt 3} \sqrt{
		\E\bigg[ \int^T_0 \big( j^*_t - j^*_{t - \eps} \big)^2 \, \dt \bigg]
		+ \eps \E\Big[ \big( j^*_T \big)^2 \Big]
	}
	+ \sqrt{\frac \eps 3 \E\Big[ \big( j^*_T \big)^2 \Big]}
	 \to 0
\end{align*}
for $\eps \to 0$.

It remains to consider the fourth term in \eqref{eqthmEintDeltaXsqconv0lemmaSqrtEIntSqDeltaXLeEtc}. To this end, let
$$
M^{\beta,\nu}
:=
\bigg\{ \sup_{t \in [0,T]} \bigg\vert \int^t_0 \dot V^{\beta,\nu}_s \, \ds - V_t \bigg\vert \le 3\beta \bigg\}.
$$
Then, using \eqref{eqMaxVBetaNuLeMaxV} in the last step,
\begin{align*}
	&
	 \sqrt{ \E\bigg[ \int^T_0 \bigg( \int^t_0 \dot V^{\beta,\nu}_s \, \ds - V_t \bigg)^2 \, \dt \bigg] }
	\\\le&
	 \sqrt{ \E\bigg[ \int^T_0 \mathbbm 1_{M^{\beta,\nu}} \bigg( \int^t_0 \dot V^{\beta,\nu}_s \, \ds - V_t \bigg)^2 \, \dt \bigg] }
	+ \sqrt{ \E\bigg[ \mathbbm 1_{\Omega\backslash M^{\beta,\nu}} \int^T_0 ( V_t )^2 \, \dt \bigg] }
	\\&
	 + \sqrt{ \E\bigg[ \mathbbm 1_{\Omega\backslash M^{\beta,\nu}} \int^T_0 \bigg( \int^t_0 \dot V^{\beta,\nu}_s \, \ds \bigg)^2 \, \dt \bigg] } \\
	\le &
	3 \beta \sqrt T +
	\sqrt{ T \E\Big[ \mathbbm 1_{\Omega\backslash M^{\beta,\nu}} \max_{t \in [0,T]} \vert V_t \vert^2 \Big] }
	+ \sqrt{ T \E\bigg[ \mathbbm 1_{\Omega\backslash M^{\beta,\nu}} \max_{t \in [0,T]} \bigg( \int^t_0 \dot V^{\beta,\nu}_s \, \ds \bigg)^2 \bigg] } \\
	\le & 3 \beta \sqrt T
	+ 2 \sqrt{ T \E\Big[ \mathbbm 1_{\Omega\backslash M^{\beta,\nu}} \max_{t \in [0,T]} \vert V_t \vert^2 \Big] }.
\end{align*}
By \eqref{eqVMaxLp},
$\{ \max_{t \in [0,T]} \vert V_t \vert^2 \}$ is uniformly integrable.
Hence we can first choose $\beta>0$ small enough, then, according to Lemma \ref{thmWPApproximationExistslemma}, choose $\nu>0$ small enough such that $\IP[\Omega \backslash M^{\beta,\nu}]$ is sufficiently small and finally choose $\eps>0$ small enough in order to obtain the desired result.
\end{proof}

\subsection{Proof of Lemma \ref{thmJetaconvJ0lemma} } \label{SectCostApproximation}

We start with a technical lemma.

\begin{lemma} \label{thmIntDiffSqUSqVBdlemma}
Let $(M,\mathcal A,\mu)$ be a measure space and let $u,v \in \mathcal L^2(M)$. Then
\begin{align*}
	\int_M | u^2 - v^2 | \, \mathrm d\mu
	\le
	\int_M (u-v)^2 \, \mathrm d\mu
	+ 2 \sqrt{ \int_M (u-v)^2 \, \mathrm d\mu \int_M v^2 \, \mathrm d\mu }.
\end{align*}
\end{lemma}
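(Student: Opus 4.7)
The plan is to use the algebraic identity $u^2 - v^2 = (u-v)(u+v)$, rewritten as $u^2 - v^2 = (u-v)^2 + 2v(u-v)$, so that the two terms on the right-hand side of the target inequality correspond naturally to the two summands of this decomposition.

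First I would take absolute values and apply the triangle inequality pointwise to get
\[
|u^2 - v^2| \le (u-v)^2 + 2|v|\,|u-v|.
\]
Integrating over $M$ against $\mu$ then yields
\[
\int_M |u^2 - v^2|\,\mathrm d\mu \le \int_M (u-v)^2\,\mathrm d\mu + 2\int_M |v|\,|u-v|\,\mathrm d\mu.
\]

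The second integral is handled by the Cauchy--Schwarz inequality in $\mathcal L^2(M)$:
\[
\int_M |v|\,|u-v|\,\mathrm d\mu \le \sqrt{\int_M v^2\,\mathrm d\mu}\,\sqrt{\int_M (u-v)^2\,\mathrm d\mu},
\]
which by assumption is finite since $u,v \in \mathcal L^2(M)$ (hence also $u-v \in \mathcal L^2(M)$). Substituting this bound back gives the claim. There is no real obstacle here; the lemma is an elementary $L^2$ estimate, and the only ``choice'' is the algebraic rewriting $u+v = (u-v) + 2v$, which is what produces the two-term structure matching the right-hand side of the stated inequality.
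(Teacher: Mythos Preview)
Your proof is correct and essentially the same as the paper's. The only cosmetic difference is the order of operations: you apply the pointwise triangle inequality to $(u-v)^2 + 2v(u-v)$ and then Cauchy--Schwarz, whereas the paper first applies H\"older to $\int_M |u-v|\,|u+v|\,\mathrm d\mu$ and then the triangle inequality $\|u+v\|_2 \le \|u-v\|_2 + 2\|v\|_2$; both rely on the same rewriting $u+v = (u-v)+2v$.
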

\begin{proof}
Due to the Hölder inequality and the triangle inequality,
\begin{align*}
	\int_M | u^2 - v^2 | \, \mathrm d\mu
	=&
	\sqrt{ \int_M (u-v)^2 \, \mathrm d\mu } \sqrt{ \int_M (u+v)^2 \, \mathrm d\mu }
	\\\le&
	\sqrt{ \int_M (u-v)^2 \, \mathrm d\mu }
	\bigg(
		\sqrt{ \int_M (u-v)^2 \, \mathrm d\mu }
		+ \sqrt{ \int_M (2v)^2 \, \mathrm d\mu }
	\bigg).
\end{align*}
\end{proof}

The following technical lemma provides useful estimates for the impact process.
\begin{lemma} \label{thmYpropertieslemma}
Let $X \in \mathcal L_{\mathcal P}^2(\Omega\times[0,T];\R)$. Then the transient price impact process $Y^X$ given by \eqref{eqYDef} satisfies $Y^X \in \mathcal L_{\mathcal P}^2(\Omega\times[0,T];\R)$ and
\begin{align}
	\sqrt{\E\Big[ \big(Y^X_T\big)^2 \Big]}
	& \le
	\gamma \bigg(
		X_0
		+ \sqrt{\E\big[ (X_T)^2 \big]}
		+ \overline\rho \exp(T \overline\rho) \sqrt{T \E\bigg[ \int^T_0 (X_s)^2 \, \ds \bigg]}
	\bigg),
	\label{eqEYTsqbd}
\end{align}
\begin{align}
	\sqrt{\E\bigg[ \int^T_0 \big( Y^X_t \big)^2 \, \dt \bigg]}
	& \le
	\gamma X_0 \sqrt T
	+ \gamma \big( 1 + T \overline\rho \exp( T \overline\rho ) \big) \sqrt{ \E\bigg[ \int^T_0 (X_t)^2 \, \dt \bigg] }.
	\label{eqEintYsqbd}
\end{align}
If $X_0 = 0$, then additionally, we have
for all $s,t \in [0,T]$ with $s < t$,
\begin{align}
	\sqrt{ \E\bigg[ \int_s^t \big( Y^X_u \big)^2 \, \du \bigg] }
	\le
	& \gamma \overline\rho \exp\big( T \overline\rho \big) \sqrt{ (t-s) T \E\bigg[ \int_0^s \big( X_u \big)^2 \, \du \bigg] }
	\nonumber
	\\&
	 + \gamma \Big(
		1
		+ \overline\rho \exp\big( T \overline\rho \big) \sqrt{ (t-s) T }
	\Big) \sqrt{ \E\bigg[ \int_s^t \big( X_u \big)^2 \, \du \bigg] }.
	\label{eqEintpartYsqLeEtc}
\end{align}
\end{lemma}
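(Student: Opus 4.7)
The plan is to derive an explicit representation for $Y^X$ via variation of parameters, and then to read off each of the three inequalities by Cauchy--Schwarz and Minkowski's inequality.

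Writing $R_t := \int_0^t \rho_s \ds$ and solving \eqref{eqYDef} by multiplying through by the integrating factor $e^{R_t}$ (equivalently, by pathwise integration by parts applied to $\int_0^t e^{R_u} \gamma\,\mathrm dX_u$), I obtain the identity
\[
Y^X_t \;=\; \gamma X_t \;-\; \gamma e^{-R_t} X_0 \;-\; \gamma \int_0^t \rho_u\, e^{-(R_t - R_u)} X_u \du ,
\]
which is well-defined for every progressively measurable $X$ and which exhibits $Y^X$ as a predictable process. Using $0 \le \rho_u \le \overline\rho$ and $e^{-(R_t - R_u)} \le 1$, this in turn produces the pointwise estimate
\[
|Y^X_t| \;\le\; \gamma |X_t| + \gamma X_0 + \gamma \overline\rho \int_0^t |X_u|\du ,
\]
which is the main analytical input for the rest of the proof.

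For \eqref{eqEYTsqbd} I set $t = T$, bound $\int_0^T |X_u|\du \le \sqrt{T}\sqrt{\int_0^T X_u^2 \du}$ by Cauchy--Schwarz, and take $L^2(\Omega)$-norms via Minkowski's inequality; the resulting bound is already sharper than the claim, which is weakened only by the harmless factor $\exp(T\overline\rho)$. For \eqref{eqEintYsqbd} I apply Minkowski in $L^2(\Omega\times[0;T])$ directly to the pointwise estimate: the three terms contribute $\gamma\sqrt{\E\int_0^T X_t^2 \dt}$, $\gamma X_0 \sqrt{T}$, and $\gamma \overline\rho \sqrt{\E\int_0^T t \int_0^t X_u^2 \du\,\dt} \le \gamma \overline\rho T \sqrt{\E \int_0^T X_u^2 \du}$, respectively; collecting them and using $T\overline\rho \le T\overline\rho \exp(T\overline\rho)$ yields the stated form and in particular shows $Y^X \in \mathcal L^2_{\mathcal P}(\Omega\times[0;T];\R)$.

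For \eqref{eqEintpartYsqLeEtc}, I set $X_0 = 0$ and, for $u \in [s;t]$, split the inner integral as $\int_0^u = \int_0^s + \int_s^u$, applying Cauchy--Schwarz with the correct integration length on each piece: $\int_0^s |X_v|\,\mathrm dv \le \sqrt{s}\,\sqrt{\int_0^s X_v^2\,\mathrm dv}$ (with $\sqrt{s}\le\sqrt{T}$) and $\int_s^u |X_v|\,\mathrm dv \le \sqrt{t-s}\,\sqrt{\int_s^t X_v^2\,\mathrm dv}$. Taking the $L^2(\Omega\times[s;t])$-norm by Minkowski and regrouping yields the required three-term bound; the awkward factor $(t-s)$ coming from the third piece is dominated by $\sqrt{(t-s)T}$ because $t-s \le T$, and the $\exp(T\overline\rho)$ prefactor is again a harmless overestimate. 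The only nonroutine ingredient---and hence the main obstacle---is precisely this length-matching in the splitting, so that the contribution of $X$ on $[0;s]$ carries $\sqrt{\E\int_0^s X_u^2 \du}$ and the contribution on $[s;t]$ carries $\sqrt{\E\int_s^t X_u^2 \du}$; otherwise one is forced into a weaker bound involving $\E\int_0^t X_u^2 \du$ on both sides, which is not enough for the subsequent applications. As an alternative route, all three bounds also follow from a direct application of Gronwall's inequality to the raw identity $Y^X_t = \gamma(X_t - X_0) - \int_0^t \rho_s Y^X_s \ds$, which organically produces the $\exp(T\overline\rho)$ factors appearing in the statement.
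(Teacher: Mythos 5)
Your proof is correct and takes essentially the same route as the paper: the explicit variation-of-parameters formula for $Y^X$ (the paper's equation \eqref{eqYDefExplicit}) followed by Cauchy--Schwarz and Minkowski/triangle inequalities, with the $\exp(T\overline\rho)$ factors in the statement absorbed as harmless overestimates. The only differences are bookkeeping ones --- you bound $e^{-(R_t-R_u)}\le 1$ and split the inner integral at $s$ before applying Cauchy--Schwarz, whereas the paper keeps an $e^{T\overline\rho}$ factor and splits afterwards via $|Y^X_u-\gamma X_u|$ and subadditivity of the square root --- so your constants are in fact slightly sharper than the stated ones.
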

\begin{proof}
Inequality \eqref{eqEYTsqbd} follows from the explicit formula
\begin{align}
	Y^X_t
	=
	\gamma X_t
	- \gamma \exp\bigg( -\int^t_0 \rho_u \, \du \bigg) \bigg( X_0 + \int^t_0 \rho_s X_s \exp\bigg( \int_0^s \rho_u \, \du \bigg) \, \ds \bigg)
	\label{eqYDefExplicit}
\end{align}
and the triangle inequality. Moreover,
\begin{align}
\begin{split}
	&
	\sqrt{\E\bigg[ \int^T_0 \big( Y^X_t \big)^2 \, \dt \bigg]}
	\le \gamma \sqrt{ \E\bigg[ \int^T_0 (X_t)^2 \, \dt \bigg] }
	\\& \quad\quad
	+ \gamma \sqrt{ \E\bigg[ \int^T_0 \exp\bigg( -2 \int^t_0 \rho_u \, \du \bigg) \bigg( X_0 + \int^t_0 \rho_s X_s \exp\bigg( \int_0^s \rho_u \, \du \bigg) \, \ds \bigg)^2 \, \dt \bigg] }.
\end{split}
\label{eqthmYpropertieslemmaEIntSqYLeEtc}
\end{align}
Substituting the inequality
\begin{align*}
	&\sqrt{ \E\bigg[ \int^T_0 \exp\bigg( -2 \int^t_0 \rho_u \, \du \bigg) \bigg( X_0 + \int^t_0 \rho_s X_s \exp\bigg( \int_0^s \rho_u \, \du \bigg) \, \ds \bigg)^2 \, \dt \bigg] }
	\\\le&
	\sqrt{ \E\big[T (X_0)^2\big] }
	+ \overline\rho \exp( T \overline\rho ) \sqrt{ T \E\bigg[ \bigg( \int^T_0 |X_s| \, \ds \bigg)^2 \bigg] }
	\\\le&
	X_0 \sqrt T
	+ T \overline\rho \exp( T \overline\rho ) \sqrt{ \E\bigg[ \int^T_0 (X_s)^2 \, \ds \bigg] }
\end{align*}
into \eqref{eqthmYpropertieslemmaEIntSqYLeEtc} yields \eqref{eqEintYsqbd}. To prove \eqref{eqEintpartYsqLeEtc}, let $X_0 = 0$. Then, due to \eqref{eqYDefExplicit},
\begin{align*}
	\big| Y^X_u - \gamma X_u \big|
	\le
	\gamma \overline\rho \exp\big( T \overline\rho \big) \int_0^u | X_r | \, \dr
\end{align*}
for all $u \in [0,T]$ and so
\begin{align*}
	\E\bigg[ \int_s^t \big(Y^X_u - \gamma X_u \big)^2 \, \du \bigg]
	\le (t-s) T \gamma^2 \overline\rho^2 \exp\big( 2 T \overline\rho \big) \E\bigg[ \int^t_0 (X_u)^2 \, \du \bigg].
\end{align*}
Using the subadditivity of the square root, we now obtain \eqref{eqEintpartYsqLeEtc} from
\begin{align*}
	& \sqrt{ \E\bigg[ \int_s^t \big( Y^X_u \big)^2 \, \du \bigg] }
	\le \sqrt{ \E\bigg[ \int_s^t \big( Y^X_u - \gamma X_u \big)^2 \, \du \bigg] }
	+ \sqrt{ \E\bigg[ \int_s^t (\gamma X_u)^2 \, \du \bigg] }
	\\\le&
	\gamma \overline\rho \exp\big( T \overline\rho \big) \sqrt{ (t-s) T \E\bigg[ \int_0^s (X_u)^2 \, \du \bigg] }
	\\&
	+ \gamma \overline\rho \exp\big( T \overline\rho \big) \sqrt{ (t-s) T \E\bigg[ \int_s^t (X_u)^2 \, \du \bigg] }
	+ \gamma \sqrt{ \E\bigg[ \int_s^t (X_u)^2 \, \du \bigg] }.
\end{align*}
\end{proof}

We are now ready to prove our approximation result for the cost functional.
\begin{proof}[Proof of Lemma \ref{thmJetaconvJ0lemma}]
For $\theta = (j^+,j^-,V) \in \mathcal A^0$ and $\xi \in \mathcal A$,
\begin{align*}
	\big\vert J^0(\theta) - J^0\big(0,0,V^\xi \big) \big\vert
	\le&
	 \bigg| \E \bigg[
		\frac \gamma 2 \big( j^+_0 - j^-_0 \big)^2
		+ \int_{(0,T]} \frac 1 2 \big( Y^\theta_{t-} + Y^\theta_t \big) \, \mathrm dX^\theta_t
		- \int^T_0 Y^{\xi}_t \, \mathrm dX^{\xi}_t
		+ \frac \gamma 2 [V]_T
	\bigg] \bigg| \\
	& + \frac {\overline\lambda} 2 \E\bigg[ \int^T_0 \Big| \big(X^\theta_t\big)^2 - \big(X^{\xi}_t\big)^2 \Big| \, \dt \bigg].
\end{align*}
Due to Lemma \ref{thmIntDiffSqUSqVBdlemma} the last term can be estimated as
\begin{align*}
	\E\bigg[ \int^T_0 \Big| \big(X^\theta_t\big)^2 - \big(X^{\xi}_t\big)^2 \Big| \, \dt \bigg]
	\le&
	\E\bigg[ \int^T_0 \big( X^\theta_t - X^{\xi}_t \big)^2 \, \dt \bigg]
	\\& \quad
	+ 2 \sqrt{
		\E\bigg[ \int^T_0 \big( X^\theta_t - X^{\xi}_t \big)^2 \, \dt \bigg]
		\E\bigg[ \int^T_0 \big( X^\theta_t \big)^2 \, \dt \bigg]
	}.
\end{align*}
Moreover, using first Lemma \ref{thmintYdXeqlemma}, and then Lemma \ref{thmYpropertieslemma} and Lemma \ref{thmIntDiffSqUSqVBdlemma}, we obtain
\begin{align*}
	&
	 \bigg| \E \bigg[
		\frac \gamma 2 \big( j^+_0 - j^-_0 \big)^2
		+ \int_{(0,T]} \frac 1 2 \big( Y^\theta_{t-} + Y^\theta_t \big) \, \mathrm dX^\theta_t
		- \int^T_0 Y^{\xi}_t \, \mathrm dX^{\xi}_t
		+ \frac \gamma 2 [V]_T
	\bigg] \bigg|
	\\ = & \bigg| \E \bigg[
		\frac \gamma 2 \big( j^+_0 - j^-_0 \big)^2
		+ \frac 1 {2\gamma} \big(Y^\theta_T\big)^2 - \frac \gamma 2 \big(j^+_0 - j^-_0\big)^2
		- \frac \gamma 2 [V]_T
		+ \frac 1 \gamma \int^T_0 \rho_t \big(Y^\theta_t\big)^2 \, \dt
	\\ &
		- \frac 1 {2\gamma} \big(Y^{\xi}_T\big)^2
		- \frac 1 \gamma \int^T_0 \rho_t \big(Y^{\xi}_t\big)^2 \, \dt
		+ \frac \gamma 2 [V]_T
	\bigg] \bigg|
	\\\le&
	 \frac 1 {2\gamma} \E\bigg[ \Big| \big(Y^\theta_T\big)^2 - \big(Y^{\xi}_T\big)^2 \Big| \bigg]
	+ \frac {\overline\rho} \gamma \E\bigg[ \int^T_0 \Big| \big(Y^\theta_t\big)^2 - \big(Y^{\xi}_t\big)^2 \Big| \, \dt \bigg]
	\\\le&
	 \bigg(
		\frac 1 2 T \gamma \overline\rho^2 e^{2 T \overline\rho}
		+ \overline\rho \gamma \big(1 + T \overline\rho e^{T \overline\rho} \big)^2
	\bigg) \E\bigg[ \int^T_0 \big(X^\theta_t - X^{\xi}_t\big)^2 \, \dt \bigg]
	\\&
	 + \sqrt T \gamma \overline\rho e^{T \overline\rho} \sqrt{ \E\bigg[ \int^T_0 \big(X^\theta_t - X^{\xi}_t\big)^2 \, \dt \bigg] } \bigg( x_0 + \overline\rho e^{T \overline\rho} \sqrt{T \E\bigg[ \int^T_0 (X^\theta_t)^2 \, \dt \bigg]} \bigg)
	\\&
	 + 2 \overline\rho \big(1 + T \overline\rho e^{T \overline\rho} \big) \sqrt{ \E\bigg[ \int^T_0 \big(X^\theta_t - X^{\xi}_t\big)^2 \, \dt \bigg] } \\
	& \quad \bigg(
		\gamma x_0 \sqrt T
		+ \gamma \big(1 + T \overline\rho e^{T \overline\rho} \big) \sqrt{ \E\bigg[ \int^T_0 (X^\theta_t)^2 \, \dt \bigg] }
	\bigg).
\end{align*}
\end{proof}

\subsection{Proof of Lemma \ref{thmetaEintxisqconv0lemma} } \label{SectInstantaneousCostConvergence}

We assume the contrary, i.e.\ that there exists a constant $c > 0$ such that, for all $H > 0$, there exists some $\eta \in (0,H)$ such that
\begin{align}
	\frac \eta 2 \E\bigg[ \int_0^T \big( \hat\xi^{\eta,\infty}_t \big)^2 \, \dt \bigg]
	>
	c.
	\label{eqthmlimetaEintxisqeq0lemmaEtaEIntEtcGeC}
\end{align}
The optimality of $\hat\xi^{\eta,\infty}$ and \eqref{eqJEtaEqJ0PlusEIntXiSq} imply that, for all $\eta,\nu,\beta,\eps > 0$,
\begin{align}
\begin{split}
	0
	\ge&
	J^{\eta,\infty}\big(\hat\xi^{\eta,\infty}\big) - J^{\eta,\infty}\big(\xi^{\hat\theta,\beta,\nu,\eps}\big)
	\\=&
	\frac \eta 2 \E\bigg[ \int_0^T \big( \hat\xi^{\eta,\infty}_t \big)^2 \, \dt \bigg]
	- \frac \eta 2 \E\bigg[ \int_0^T \big( \xi^{\hat\theta,\beta,\nu,\eps}_t \big)^2 \, \dt \bigg]
	\\&
	+ J^0 \big( 0,0,V^{\hat\xi^{\eta,\infty}} \big)
	- J^0 \big( \hat\theta \big)
	+ J^0 \big( \hat\theta \big)
	- J^0 \big( 0, 0, V^{\xi^{\hat\theta,\beta,\nu,\eps}} \big).
\end{split}
\label{eqthmetaEintxisqconv0lemma0GeEtc}
\end{align}

We now prove that \eqref{eqthmlimetaEintxisqeq0lemmaEtaEIntEtcGeC} contradicts \eqref{eqthmetaEintxisqconv0lemma0GeEtc}. By Theorem \ref{thmJetaconvJ0lemma} and since $|\hat X^0_t| \le x_0$, we obtain (for convenience, let $p(x) := x + \sqrt x$)
\begin{align*}
	\Big|
		J^0 \big( 0,0,V^{\hat\xi^{\eta,\infty}} \big)
		- J^0 \big( \hat\theta \big)
	\Big|
	\le
	D \big( T (x_0)^2 \big) \cdot p \bigg( \E\bigg[ \int^T_0 \big(
		\hat X^{\eta,\infty}_t
		- \hat X^0_t
	\big)^2 \, \dt \bigg] \bigg)
\end{align*}
and
\begin{align*}
	\Big|
		J^0 \big( \hat\theta \big)
		- J^0 \big( 0, 0, V^{\xi^{\hat\theta,\beta,\nu,\eps}} \big)
	\Big|
	\le
	D \big( T (x_0)^2 \big) \cdot p \bigg( \E\bigg[ \int^T_0 \big(
		X^{\hat\theta}_t
		- X^{\hat\theta,\beta,\nu,\eps}_t
	\big)^2 \, \dt \bigg] \bigg).
\end{align*}
Plugging the results into \eqref{eqthmetaEintxisqconv0lemma0GeEtc} yields that, for all $\eta > 0$ that satisfy \eqref{eqthmlimetaEintxisqeq0lemmaEtaEIntEtcGeC}, it holds
\begin{align}
\begin{split}
	0
	>&
	c
	- \frac \eta 2 \E\bigg[ \int_0^T \big( \xi^{\hat\theta,\beta,\nu,\eps}_t \big)^2 \, \dt \bigg]
	\\&
	- D \big( T (x_0)^2 \big) \bigg(
		p \bigg( \E\bigg[ \int^T_0 \big(
			\hat X^{\eta,\infty}_t
			- \hat X^0_t
		\big)^2 \, \dt \bigg] \bigg)
	\\& \quad
		+ p \bigg( \E\bigg[ \int^T_0 \big(
			X^{\hat\theta}_t
			- X^{\hat\theta,\beta,\nu,\eps}_t
		\big)^2 \, \dt \bigg] \bigg)
	\bigg).
\end{split}
\label{eqthmetaEintxisqconv0lemma0GeEtc2}
\end{align}
Due to Lemma \ref{thmEintDeltaXsqconv0lemma}, we can first choose $\beta, \nu, \eps > 0$ sufficiently small such that
\begin{align*}
	D \big( T (x_0)^2 \big)
	\cdot p \bigg( \E\bigg[ \int^T_0 \big(
		X^{\hat\theta}_t
		- X^{\hat\theta,\beta,\nu,\eps}_t
	\big)^2 \, \dt \bigg] \bigg)
	<
	\frac c 2.
\end{align*}
Since $|\hat X^{\eta,\infty}_t| \le x_0$ (cf.\ Theorem \ref{thmXYEtaPosBdthm}) and in view of Theorem \ref{thmxConv0Specificthm}, we can then choose $\eta > 0$ sufficiently small satisfying \eqref{eqthmlimetaEintxisqeq0lemmaEtaEIntEtcGeC} such that the right hand side of \eqref{eqthmetaEintxisqconv0lemma0GeEtc2} is larger than zero, which is a contradiction. This finishes the proof of Lemma \ref{thmetaEintxisqconv0lemma}.


\begin{appendix}

\section{Regularity properties of It\^o processes and BSDEs} \label{SectAppendixRegularityItoBsde}

In this appendix, we introduce some regularity properties of stochastic processes, which we use to prove various convergence results for stochastic processes.

We consider a continuous, adapted, $\R^d$-valued stochastic process $Y = \{Y_t\}_{t\in I}$ on some interval $I$ and introduce the following continuity conditions.
\begin{condition*}[\textbf{C.1}]
\hypertarget{C1}{For} all $\eps \in (0,\infty)$, there exists $\delta \in (0,\infty)$ such that, for all $s \in I$, all $\mathcal F_s$-measurable and integrable $V \colon \Omega \to \R$ and all stopping times $\tau \colon \Omega \to [s,s+\delta]\cap I$,
$$
\Big\Vert \E \big[ V ( Y_\tau - Y_s ) \big] \Big\Vert_\infty
\le \eps \E\big[ |V| \big].
$$
\end{condition*}

\begin{condition*}[\textbf{C.2}]
\hypertarget{C2}{For} all $\eps \in (0,\infty)$, there exists $\delta \in (0,\infty)$ such that, for all $s \in I$,
$$
\E \Big[ \sup_{t\in [s,s + \delta]\cap I} \Vert Y_t - Y_s \Vert_\infty \Big| \mathcal F_s \Big]
\le \eps.
$$
\end{condition*}

\begin{definition}
A family of stochastic processes is said to \textit{uniformly} satisfy Condition \hyperlink{C1}{C.1} or \hyperlink{C2}{C.2} on $I$ if all processes satisfy the respective property and $\delta$ can be chosen uniformly for all processes.
\end{definition}

In what follows, we list some auxiliary results.

\begin{lemma} \label{thmPropertyStarComponentslemma}
An $\R^d$-valued stochastic process $Y = \{(Y^1_t,\ldots,Y^d_t)\}_{t\in I}$ satisfies Condition \hyperlink{C1}{C.1} (Condition \hyperlink{C2}{C.2}) if and only if all components $Y^i$ satisfy Condition \hyperlink{C1}{C.1} (Condition \hyperlink{C2}{C.2}).
\end{lemma}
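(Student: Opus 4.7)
The plan is to reduce both equivalences to the elementary inequalities $|x^i| \le \Vert x\Vert_\infty \le \sum_{i=1}^d |x^i|$ valid for any $x = (x^1,\ldots,x^d) \in \R^d$, and to absorb the factor $d$ from the upper bound into the $\eps$-tolerance where necessary. The underlying reason the statement is true is just that the $\ell^\infty$-norm is coordinatewise comparable to each component, uniformly in a finite dimension.

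For the two forward directions, I would note that $|Y^i_t - Y^i_s| \le \Vert Y_t - Y_s\Vert_\infty$ holds pathwise in $(\omega,t)$, and likewise the $i$-th coordinate of $\E(V(Y_\tau - Y_s))$ is, by linearity, equal to $\E(V(Y^i_\tau - Y^i_s))$. Hence, whatever $\delta = \delta(\eps)$ works for the vector-valued process $Y$ in Condition \hyperlink{C1}{C.1} or Condition \hyperlink{C2}{C.2}, the same $\delta$ works for every component $Y^i$ without any rescaling of $\eps$.

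For the two reverse directions, given $\eps > 0$ I would select per-component parameters $\delta_i = \delta_i(\eps)$ for Condition \hyperlink{C1}{C.1} and $\delta_i = \delta_i(\eps/d)$ for Condition \hyperlink{C2}{C.2}, and set $\delta := \min_{1\le i \le d} \delta_i$. For Condition \hyperlink{C1}{C.1}, the identity $\Vert x\Vert_\infty = \max_i |x^i|$ then gives $\Vert\E(V(Y_\tau - Y_s))\Vert_\infty = \max_i |\E(V(Y^i_\tau - Y^i_s))| \le \eps \E(|V|)$. For Condition \hyperlink{C2}{C.2}, I would use $\Vert Y_t - Y_s\Vert_\infty \le \sum_i |Y^i_t - Y^i_s|$ together with the interchange of $\sup_t$ and the finite sum to get
\[
\E\Big( \sup_{t \in [s;s+\delta]\cap I} \Vert Y_t - Y_s\Vert_\infty \,\Big|\, \mathcal F_s \Big)
\le \sum_{i=1}^d \E\Big( \sup_{t \in [s;s+\delta]\cap I} |Y^i_t - Y^i_s| \,\Big|\, \mathcal F_s \Big)
\le d \cdot \frac{\eps}{d} = \eps.
\]

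There is no genuine obstacle; the only minor bookkeeping is to remember to shrink the per-component tolerance by the factor $d$ in the C.2 reverse direction to compensate for the use of $\max \le \sum$, whereas in the C.1 reverse direction the identity $\max_i |x^i| = \Vert x\Vert_\infty$ is sharp and no rescaling is required.
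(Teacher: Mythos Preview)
Your argument is correct and is precisely the elementary componentwise estimate one would expect. The paper in fact states this lemma without proof, treating it as immediate, so there is nothing to compare against; your write-up supplies exactly the routine details the authors chose to omit.
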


\begin{lemma} \label{thmC2ImpliesC1lemma}
Condition \hyperlink{C2}{C.2} implies Condition \hyperlink{C1}{C.1}.
\end{lemma}
\begin{proof}
Due to Lemma \ref{thmPropertyStarComponentslemma}, it is enough to consider a real-valued $Y$. The following calculation shows that $Y$ satisfies Condition \hyperlink{C1}{C.1} under the assumption that it satisfies Condition \hyperlink{C2}{C.2}:
\begin{align*}
	&
	\Big| \E \big[ V ( Y_\tau - Y_s ) \big] \Big|
	= \Big| \E \big[ V \cdot \E[ Y_\tau - Y_s | \mathcal F_s ] \big] \Big|
	\le \E\Big[ |V| \cdot \E \big[ | Y_\tau - Y_s |_\infty \big| \mathcal F_s \big] \Big]
	\le \E\big[ |V| \big] \eps.
\end{align*}
\end{proof}

\begin{remark}
If we want to bound a term of the form $\Vert \E[\mathbbm 1_N (Y_\tau - Y_s)] \Vert_\infty$ for some $N \subseteq M \in \mathcal F_s$ with $N \not\in \mathcal F_s$, then Condition \hyperlink{C1}{C.1} is not enough; in this case, we need the stronger Condition \hyperlink{C2}{C.2}.
\end{remark}

\begin{lemma} \label{thmSumCondC1C2lemma}
Let $X = \{X_t\}_{t\in I}$ and $Y = \{Y_t\}_{t\in I}$ be continuous, adapted, real-valued processes, which satisfy Condition \hyperlink{C1}{C.1} (Condition \hyperlink{C2}{C.2}). Then $X + Y$ satisfies Condition \hyperlink{C1}{C.1} (Condition \hyperlink{C2}{C.2}).
\end{lemma}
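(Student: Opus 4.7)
The plan is to handle the two conditions separately using the triangle inequality and linearity of (conditional) expectation; there is no genuine obstacle here, the point is simply to keep track of the factor of two when splitting $\eps$.

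For Condition \hyperlink{C1}{C.1}, I would fix $\eps > 0$ and apply the condition for $X$ with tolerance $\eps/2$ to obtain some $\delta_X > 0$, and likewise for $Y$ to obtain $\delta_Y > 0$. Setting $\delta := \min(\delta_X, \delta_Y)$, for any $s \in I$, any $\mathcal F_s$-measurable integrable $V$, and any stopping time $\tau \colon \Omega \to [s;s+\delta] \cap I$, linearity of expectation gives
\[
\bigl| \E\bigl( V ((X+Y)_\tau - (X+Y)_s) \bigr) \bigr|
\leq \bigl| \E(V(X_\tau - X_s)) \bigr| + \bigl| \E(V(Y_\tau - Y_s)) \bigr|
\leq \tfrac{\eps}{2} \E(|V|) + \tfrac{\eps}{2} \E(|V|) = \eps \E(|V|).
\]

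For Condition \hyperlink{C2}{C.2}, I would again fix $\eps > 0$ and choose $\delta_X, \delta_Y > 0$ according to Condition \hyperlink{C2}{C.2} for $X$ and $Y$ with tolerance $\eps/2$. Setting $\delta := \min(\delta_X, \delta_Y)$, the subadditivity of the supremum gives, for every $s \in I$,
\[
\sup_{t \in [s;s+\delta]\cap I} |(X+Y)_t - (X+Y)_s|
\leq \sup_{t \in [s;s+\delta]\cap I} |X_t - X_s| + \sup_{t \in [s;s+\delta]\cap I} |Y_t - Y_s|.
\]
Taking $\E(\,\cdot \mid \mathcal F_s)$ on both sides and using linearity together with the bounds from Condition \hyperlink{C2}{C.2} for $X$ and $Y$ yields an almost sure upper bound of $\eps/2 + \eps/2 = \eps$, which is the desired estimate.

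Since the choice of $\delta$ in each case depends only on $\eps$ (and on the constants supplied by the hypotheses applied to $X$ and $Y$), but not on $s$, $V$, or $\tau$, the conclusion follows. I expect no real difficulty; the only thing to be careful about is that the conditions are stated with ``norms'' on the right-hand side, so for the real-valued case considered here $\Vert \cdot \Vert_\infty$ reduces to the absolute value and the triangle inequality applies directly.
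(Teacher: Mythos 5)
Your argument is correct and is exactly the routine triangle-inequality/$\eps$-splitting argument the paper has in mind; indeed the paper states this lemma without any proof, treating it as immediate. Nothing is missing: both halves (linearity of expectation for Condition \hyperlink{C1}{C.1}, subadditivity of the supremum plus monotonicity of conditional expectation for Condition \hyperlink{C2}{C.2}) are handled properly and the choice of $\delta$ is uniform in $s$, $V$ and $\tau$ as required.
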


\begin{theorem} \label{thmXYSatisfiesStarPropertythm}
Let $X = \{X_t\}_{t\in I}$ and $Y = \{Y_t\}_{t\in I}$ be continuous, adapted, essentially bounded, real-valued processes.
\begin{itemize}
\item
If $X$ satisfies Condition \hyperlink{C2}{C.2} and $Y$ satisfies Condition \hyperlink{C1}{C.1}, then $X\cdot Y$ satisfies Condition \hyperlink{C1}{C.1}.
\item
If $X$ and $Y$ both satisfy Condition \hyperlink{C2}{C.2}, then $X\cdot Y$ also satisfies Condition \hyperlink{C2}{C.2}.
\end{itemize}
\end{theorem}

\begin{proof}
The second part of the statement can be proven straightforward using the decomposition
$$
	|X_t Y_t - X_s Y_s|
	\le
	|X_t| |Y_t - Y_s|
	+ |Y_s| |X_t - X_s|.
$$
The first part of the statement follows from
\begin{align*}
	&
	\Big| \E \big[ V ( X_\tau Y_\tau - X_s Y_s ) \big] \Big|
	\le
	\Big| \E \big[ V Y_\tau (X_\tau - X_s) \big] \Big|
	+ \Big| \E \big[ V X_s (Y_\tau - Y_s) \big] \Big|.
\end{align*}
\end{proof}

Next, we prove some properties of the concave envelope of the modulus of continuity of a uniformly continuous function.
We use these results to show that a uniformly continuous function of an It\^o process with bounded coefficients satisfies Condition \hyperlink{C2}{C.2}.
\begin{lemma} \label{thmexistsmodcontconcavelemma}
Let $(X,\Vert\cdot\Vert_X)$ be a normed space with a non-empty and convex subset $D$, let $(Y,d_Y)$ be a metric space, let $f\colon D \to Y$ and, for every $t \in [0,\infty)$ let
\begin{align*}
	\tilde\omega_f(t)
	:=&
	\sup_{x_1,x_2 \in D, \Vert x_1-x_2\Vert_X \le t} d_Y\big( f(x_1),f(x_2) \big),
	\\
	\omega_f(t)
	:=&
	\inf \{ \psi(t) : \psi \colon [0,\infty) \to [0,\infty) \, \text{concave and} \, \psi \ge \tilde\omega_f \}.
\end{align*}
Then $\tilde\omega_f \colon [0,\infty) \to [0,\infty]$, $\tilde\omega_f$ is non-decreasing, $\tilde\omega_f(0) = 0$, and for all $s,t \in [0,\infty)$ with $s > 0$,
\begin{align}
	\tilde\omega_f(t)
	\le \frac{t+s} s \tilde\omega_f(s).
	\label{eqthmexistsmodcontlemmatildeomegabound}
\end{align}

If $\tilde\omega_f$ is finite, then $\omega_f$ takes values in $[0,\infty)$, is concave, non-decreasing and, for all $x_1,x_2 \in D$,
\begin{equation}
	d_Y\big( f(x_1),f(x_2) \big) \le \omega_f\big( \Vert x_1 - x_2\Vert_X \big).
	\label{eqthmexistsmodcontconcavelemmaDeltaFLeOmegaDeltaX}
\end{equation}
If $f$ is uniformly continuous, then $\tilde\omega_f$ is finite, $\tilde\omega_f$ and $\omega_f$ are continuous in $0$ and $\omega_f(0) = 0$.
\end{lemma}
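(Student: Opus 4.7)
The plan is to verify the four clusters of claims — basic properties of $\tilde\omega_f$, the affine bound \eqref{eqthmexistsmodcontlemmatildeomegabound}, the properties of $\omega_f$ under finiteness of $\tilde\omega_f$, and the behaviour at $0$ under uniform continuity — one after the other; each is essentially a soft-analysis exercise.

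For $\tilde\omega_f$, non-negativity, $\tilde\omega_f(0)=0$ (the only admissible pair has $x_1=x_2$) and monotonicity (enlarging $t$ enlarges the supremum set) are immediate from the definition. For the affine bound I will fix $x_1,x_2\in D$ with $\|x_1-x_2\|_X\le t$, set $n:=\lceil t/s\rceil$ so that $n\le (t+s)/s$, and exploit convexity of $D$ to interpolate: the points $z_k := x_1+(k/n)(x_2-x_1)$, $k=0,\dots,n$, lie in $D$ and satisfy $\|z_k-z_{k-1}\|_X \le t/n \le s$, so the triangle inequality in $Y$ gives $d_Y(f(x_1),f(x_2)) \le n\,\tilde\omega_f(s) \le \frac{t+s}{s}\tilde\omega_f(s)$. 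Taking the supremum over $x_1,x_2$ delivers \eqref{eqthmexistsmodcontlemmatildeomegabound}.

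Assuming $\tilde\omega_f$ is finite, the linear map $\psi_s(t):=\frac{t+s}{s}\tilde\omega_f(s)$, $s>0$, is a concave, non-negative admissible majorant of $\tilde\omega_f$, which already shows $\omega_f$ is finite. Concavity of $\omega_f$ follows from the standard fact that a pointwise infimum of concave functions is concave (verified by chaining the concavity inequality for each $\psi$ with $\inf_i(a_i+b_i)\ge \inf_i a_i+\inf_i b_i$), and $\omega_f\ge 0$ since every admissible $\psi$ is. A concave non-negative function on $[0;\infty)$ must be non-decreasing: if it strictly decreased between some $s<t$, a standard concavity extrapolation would force it to $-\infty$, contradicting $\psi\ge 0$. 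This gives monotonicity of $\omega_f$. The inequality \eqref{eqthmexistsmodcontconcavelemmaDeltaFLeOmegaDeltaX} then follows because $d_Y(f(x_1),f(x_2))\le \tilde\omega_f(\|x_1-x_2\|_X)\le\psi(\|x_1-x_2\|_X)$ for every admissible $\psi$, so taking the infimum over $\psi$ yields the result.

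For the last block, uniform continuity of $f$ means $\tilde\omega_f(s)\to 0$ as $s\to 0^+$, so $\tilde\omega_f(s_0)<\infty$ for some small $s_0>0$, and \eqref{eqthmexistsmodcontlemmatildeomegabound} upgrades this to finiteness of $\tilde\omega_f$ on all of $[0;\infty)$. Continuity of $\tilde\omega_f$ at $0$ is then immediate from monotonicity and $\tilde\omega_f(0)=0$. The bounds $\omega_f(0)\le \psi_s(0)=\tilde\omega_f(s)$ and $\omega_f(t)\le\psi_s(t)$ for all $s>0$ give, after letting first $t\to 0^+$ and then $s\to 0^+$, both $\omega_f(0)=0$ and continuity of $\omega_f$ at $0$. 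I do not expect any serious obstacle; the only slightly delicate points are the chaining argument, which critically relies on convexity of $D$ (without which the modulus of continuity is not subadditive), and the observation that non-negativity on $[0;\infty)$ forces a concave function to be monotone, which is what makes the admissible class of $\psi$'s well-behaved.
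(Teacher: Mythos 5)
Your proposal is correct and follows essentially the same route as the paper: the convex-interpolation chaining argument with $N=\lceil t/s\rceil$ points to obtain \eqref{eqthmexistsmodcontlemmatildeomegabound}, the affine majorant $t\mapsto\frac{t+s}{s}\tilde\omega_f(s)$ to make the concave envelope well-defined, finite and satisfying \eqref{eqthmexistsmodcontconcavelemmaDeltaFLeOmegaDeltaX}, and the same affine bound at a small $s$ to get $\omega_f(0)=0$ and continuity at $0$ under uniform continuity. You merely spell out the facts the paper calls "clear" (infimum of concave functions is concave; a non-negative concave function on $[0;\infty)$ is non-decreasing), which is fine.
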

\begin{proof}
By definition, $\tilde{\omega}_f$ is non-decreasing and $\tilde{\omega}_f(0)=0$.
To prove \eqref{eqthmexistsmodcontlemmatildeomegabound}, let $s,t\in [0,\infty)$ with $s > 0$ and $x_1,x_2 \in D$ with $\Vert x_1-x_2\Vert_X \le t$. Let $N := \lceil t/s \rceil$ and let
\begin{align*}
	x^k
	:= \frac k {N} x_1 + \frac {N-k} {N} x_2
	\quad (k=0,\ldots,N).
\end{align*}
Then $x^k \in D$ because $D$ is convex and
$
	\Vert x^k - x^{k+1} \Vert_X
	= \frac 1 N \Vert x_1 - x_2\Vert_X
	\le s.
$
Hence
\begin{align*}
	d_Y\big( f(x_1),f(x_2) \big)
	\le
	\sum_{k=0}^{N-1} d_Y\big( f(x^k), f(x^{k+1}) \big)
	\le
	N \max_{k} \tilde\omega_f\big(\Vert x^k - x^{k+1} \Vert_X\big)
	\le
	(t + s) \frac {\tilde\omega_f(s)} s.
\end{align*}

Having established \eqref{eqthmexistsmodcontlemmatildeomegabound}, we know that, if $\tilde\omega_f$ is finite, then it is bounded above by an affine (hence concave) function. In particular, the concave envelope is well-defined and satisfies \eqref{eqthmexistsmodcontconcavelemmaDeltaFLeOmegaDeltaX}.

Concavity and monotonicity of $\omega_f$ are clear.

Uniform continuity of $f$ implies that $\tilde\omega_f$ is finite and continuous in $0$. In order to see $\omega_f(0) = 0$ and continuity in $0$, let $\eps > 0$ be arbitrary and let $\delta > 0$ be such that $\tilde\omega_f(\delta) \le \eps/2$. Due to \eqref{eqthmexistsmodcontlemmatildeomegabound}, $\tilde\omega_f$ and hence $\omega_f$ is bounded above by the affine and hence concave function $t \mapsto ((t+\delta)/\delta) \cdot (\eps/2)$. In particular, for all $t \le \delta$, we obtain $\omega_f(t) \le \eps$.
\end{proof}

\begin{lemma} \label{thmftWtpropstarlemma}
Let $X = \{X_t\}_{t\in I}$ be a continuous, adapted, $\R^d$-valued process which satisfies Condition \hyperlink{C2}{C.2} and let $f \colon [0,T] \times \R^d \to \R$ be uniformly continuous. Then $\{f(t,X_t)\}_{t\in[0,T]}$ also satisfies Condition \hyperlink{C2}{C.2}.
\end{lemma}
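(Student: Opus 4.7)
The plan is to pass from the uniform continuity of $f$ to an estimate in terms of a concave modulus of continuity provided by Lemma \ref{thmexistsmodcontconcavelemma}, and then combine this with Condition \hyperlink{C2}{C.2} for $X$ via Jensen's inequality. Specifically, applying Lemma \ref{thmexistsmodcontconcavelemma} to $f$ on the normed space $[0;T]\times\R^d$ equipped with the norm $\Vert (t,x)\Vert := |t| + \Vert x\Vert_\infty$, we obtain a non-decreasing, concave function $\omega_f \colon [0;\infty) \to [0;\infty)$ with $\omega_f(0)=0$, continuous at $0$, such that
\[
	\big| f(t_1,x_1) - f(t_2,x_2) \big|
	\le \omega_f\big( |t_1 - t_2| + \Vert x_1 - x_2\Vert_\infty \big)
\]
for all $(t_i,x_i) \in [0;T]\times\R^d$.

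Fix $\eps > 0$. Choose $\delta_1 > 0$ small enough that $\omega_f(2\delta_1) \le \eps$, which is possible by continuity of $\omega_f$ at $0$ and $\omega_f(0)=0$. By Condition \hyperlink{C2}{C.2} for $X$, there exists $\delta_2 > 0$ such that for every $s \in I$,
\[
	\E\bigg( \sup_{t \in [s;s+\delta_2]\cap I} \Vert X_t - X_s\Vert_\infty \,\Big|\, \mathcal F_s \bigg)
	\le \delta_1.
\]
Set $\delta := \min(\delta_1,\delta_2)$. Then, for every $s \in I$ and every $t \in [s;s+\delta]\cap I$,
\[
	\big| f(t,X_t) - f(s,X_s) \big|
	\le \omega_f\big( \delta + \Vert X_t - X_s\Vert_\infty \big)
	\le \omega_f\Big( \delta_1 + \sup_{u \in [s;s+\delta]\cap I} \Vert X_u - X_s\Vert_\infty \Big).
\]

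Taking the supremum over $t \in [s;s+\delta]\cap I$ on the left and then conditioning on $\mathcal F_s$, we apply the conditional Jensen inequality to the concave function $\omega_f$ to obtain
\[
	\E\bigg( \sup_{t \in [s;s+\delta]\cap I} \big| f(t,X_t) - f(s,X_s) \big| \,\Big|\, \mathcal F_s \bigg)
	\le \omega_f\bigg( \delta_1 + \E\Big( \sup_{u \in [s;s+\delta]\cap I} \Vert X_u - X_s\Vert_\infty \,\Big|\, \mathcal F_s \Big) \bigg)
	\le \omega_f(2\delta_1)
	\le \eps,
\]
using monotonicity of $\omega_f$ in the second inequality. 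This shows that $\{f(t,X_t)\}_{t\in[0;T]}$ satisfies Condition \hyperlink{C2}{C.2}. The only delicate step is invoking conditional Jensen for a concave modulus, which is legitimate because $\omega_f$ is real-valued (finite), concave and non-decreasing, and the argument is an integrable, non-negative random variable; apart from that, the proof is routine.
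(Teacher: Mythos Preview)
Your proof is correct and follows essentially the same approach as the paper: both invoke Lemma~\ref{thmexistsmodcontconcavelemma} to obtain a concave, non-decreasing modulus of continuity $\omega_f$, bound the increment of $f(t,X_t)$ by $\omega_f$ applied to the increment of $(t,X_t)$, and then use the conditional Jensen inequality for concave functions together with Condition~\hyperlink{C2}{C.2} for $X$. The only cosmetic differences are your choice of the $\ell^1$-type norm $|t|+\Vert x\Vert_\infty$ versus the paper's $\Vert\cdot\Vert_\infty$ on $[0;T]\times\R^d$, and that you fix $\eps$ first and then choose $\delta$, whereas the paper writes the estimate for arbitrary $\delta$ and concludes by letting $\delta\to 0$.
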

\begin{proof}
There exists a function $\omega_f$ with the properties listed by Lemma \ref{thmexistsmodcontconcavelemma} (with the norm $\Vert\cdot\Vert_\infty$ on $[0,T]\times\R^d$).
Now let $\delta > 0$ be arbitrary. Using \eqref{eqthmexistsmodcontconcavelemmaDeltaFLeOmegaDeltaX} and the concavity and monotonicity of $\omega_f$, we obtain that
\begin{align*}
	&
	\E\Big[ \max_{s \le t\le T\wedge(s + \delta)} \big| f(t,X_t) - f(s,X_s) \big| \Big| \mathcal F_s \Big] \\
	\le &
	\E\Big[ \max_{s \le t\le T\wedge(s + \delta)} \omega_f\big( |t-s| \vee \Vert X_t - X_s \Vert_\infty \big) \Big| \mathcal F_s \Big]
	\\\le&
	\E\bigg[ \omega_f\Big( \delta \vee \max_{s \le t\le T\wedge(s + \delta)} \Vert X_t - X_s \Vert_\infty \Big) \Big| \mathcal F_s \bigg]
	\\ \le &
	\omega_f\bigg( \delta + \E\Big[ \max_{s \le t\le T\wedge(s + \delta)} \Vert X_t - X_s \Vert_\infty \Big| \mathcal F_s \Big] \bigg).
\end{align*}
Since $\omega_f$ is continuous in $0$ and since $X$ satisfies Condition \hyperlink{C2}{C.2}, we can choose $\delta>0$ small enough such that this term is not greater than $\eps$.
\end{proof}

\begin{lemma} \label{thmEcondmaxincXbdlemma}
Let $\{\tilde\sigma_t\}_{t\in[0,T]} \in \mathcal L^\infty_{\mathcal P}(\Omega\times[0,T];\R^m)$. Then $\{\int^t_0 \tilde\sigma_u \, \mathrm dW_u\}_{t\in[0,T]}$ satisfies Condition \hyperlink{C2}{C.2}.
\end{lemma}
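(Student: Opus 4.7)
The plan is to reduce the condition to a standard $L^2$ estimate via Doob's maximal inequality, then convert back to an $L^1$ bound via conditional Jensen. Fix $s\in[0;T]$ and let $M:=\|\tilde\sigma\|_{L^\infty}$; note that $M<\infty$ by hypothesis. For $t\in[s;T]$, the process
\[
  N_t:=\int_s^t\tilde\sigma_u\,\mathrm dW_u
\]
is a continuous, square-integrable martingale with respect to $\{\mathcal F_t\}_{t\ge s}$, starting from $0$ at $t=s$, with quadratic variation $[N]_t=\int_s^t\|\tilde\sigma_u\|_2^2\,\du\le M^2(t-s)$.

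First, I apply the conditional version of Doob's $L^2$-maximal inequality on the interval $[s;(s+\delta)\wedge T]$, together with the It\^o isometry:
\[
  \E\bigg(\sup_{t\in[s;(s+\delta)\wedge T]}|N_t|^2\,\Big|\,\mathcal F_s\bigg)
  \le 4\,\E\bigg(\int_s^{(s+\delta)\wedge T}\|\tilde\sigma_u\|_2^2\,\du\,\Big|\,\mathcal F_s\bigg)
  \le 4M^2\delta.
\]
Second, by conditional Jensen applied to the concave map $x\mapsto\sqrt{x}$,
\[
  \E\bigg(\sup_{t\in[s;(s+\delta)\wedge T]}|N_t|\,\Big|\,\mathcal F_s\bigg)
  \le\sqrt{\E\bigg(\sup_{t\in[s;(s+\delta)\wedge T]}|N_t|^2\,\Big|\,\mathcal F_s\bigg)}
  \le 2M\sqrt{\delta}.
\]
Since $N$ is real-valued, $\|\cdot\|_\infty$ coincides with $|\cdot|$, so this is exactly the left-hand side of Condition \hyperlink{C2}{C.2}.

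Finally, given any $\eps>0$, choose $\delta:=(\eps/(2M))^2\wedge T$; then $2M\sqrt{\delta}\le\eps$ and the bound is uniform in $s$, which is what Condition \hyperlink{C2}{C.2} requires. No significant obstacle arises; the only care needed is using the \emph{conditional} form of Doob's inequality (applied to the shifted martingale $\{N_{s+r}\}_{r\ge0}$, which, conditional on $\mathcal F_s$, is still a continuous martingale starting at $0$) so that the constant $\delta$ depends only on $M$ and $\eps$, not on $s$ or $\omega$.
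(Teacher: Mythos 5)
Your proof is correct and takes essentially the same route as the paper: a conditional Jensen step (concavity of $\sqrt{\cdot}$), the conditional form of Doob's $L^2$ maximal inequality, and the conditional It\^o isometry, yielding a bound of order $\sqrt{\delta}$ uniform in $s$. The only difference is the immaterial constant ($2M\sqrt\delta$ versus the paper's $2m\overline\sigma\sqrt\delta$ from a component-wise bound), and your remark on justifying the conditional Doob inequality matches the paper's footnote on the same point.
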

\begin{proof}
Let $\overline\sigma > 0$ be a component-wise bound on $\tilde\sigma$. By the Jensen inequality,
\begin{align*}
	\E\bigg[ \max_{s \le t\le T\wedge(s + \delta)} \bigg| \int^t_s \tilde\sigma_u \, \mathrm dW_u \bigg| \bigg| \mathcal F_s \bigg]
	\le
	\sqrt{ \E\bigg[ \max_{s \le t\le T\wedge(s + \delta)} \bigg| \int^t_s \tilde\sigma_u \, \mathrm dW_u \bigg|^2 \bigg| \mathcal F_s \bigg] }.
\end{align*}
In order to prove that the conditional expectation on the right hand side is bounded, we apply the classical Doob's maximal inequality concerning the conditional measures w.r.t. all sets $A \in \mathcal F_s$ with positive probability and obtain
\begin{align*}
	\sqrt{ \E\bigg[ \max_{s \le t\le T\wedge(s + \delta)} \bigg| \int^t_s \tilde\sigma_u \, \mathrm dW_u \bigg|^2 \bigg| \mathcal F_s \bigg] }
	\le
	2 \sqrt{ \E\bigg[ \bigg| \int_s^{T\wedge(s + \delta)} \tilde\sigma_u \, \mathrm dW_u \bigg|^2 \bigg| \mathcal F_s \bigg] }.
\end{align*}
Then by combining these inequalities and using the It\^o isometry for conditional expectations, we finally obtain
\begin{align*}
	\E\bigg[ \max_{s \le t\le T\wedge(s + \delta)} \bigg| \int^t_s \tilde\sigma_u \, \mathrm dW_u \bigg| \bigg| \mathcal F_s \bigg]
	\le
	2 \sqrt{ m \E\bigg[ \int_s^{T\wedge(s + \delta)} m \overline\sigma^2 \, \du \bigg| \mathcal F_s \bigg] }
	\le
	2 m \overline\sigma \sqrt{ \delta }.
\end{align*}
\end{proof}

\begin{remark}
Obviously, we have the same result for a drift part of an It\^o process, i.e. if $\tilde\mu \in \mathcal L^\infty_{\prog}(\Omega\times[0,T];\R^m)$, then $\{\int^t_0 \tilde\mu_u \, \mathrm du\}_{t\in[0,T]}$ satisfies Condition \hyperlink{C2}{C.2}. However, we cannot expect the weaker Condition \hyperlink{C1}{C.1} under weaker integrability assumptions, as the following example shows. Let $T = 2$, $\tilde\mu_u := \mathbbm 1_{[1,2]}(u) \cdot |W_1|$, $\eps := 1$, let $\delta \in (0,1)$ be arbitrary and let $s := 1$, $\tau := 1 + \delta$ and $V := \mathbbm 1_{\{\delta |W_1| > 1\}}$. Then
\begin{align*}
	\frac {|\E[V \int^{1+\delta}_1 \tilde\mu_u \, \du]|} {\E[|V|]}
	=
	\frac {\E[\mathbbm 1_{\{\delta |W_1| > 1\}} \cdot \delta |W_1|]|} {\E[\mathbbm 1_{\{\delta |W_1| > 1\}}]}
	>
	1.
\end{align*}
\end{remark}

Finally, we prove that the strong Condition \hyperlink{C2}{C.2} holds for a certain class of BSDEs driven by forward SDEs. Specifically, we prove that the solution to the BSDE can be expressed as a {\sl uniformly continuous} function of the forward process and then we apply Lemma \ref{thmftWtpropstarlemma}. The representation of the solution in terms of a \textit{continuous} function has been proven by El Karoui~\cite{elkaroui} already.

For all $(t,x) \in [0,T]\times \R^n$, we consider the following SDE on $[t,T]$,
\begin{align*}
	\mathrm dX^{t,x}_s
	=
	\tilde\mu\big( s, X^{t,x}_s \big) \, \ds
	+ \tilde\sigma\big( s, X^{t,x}_s \big) \, \mathrm dW_s,
	\quad
	X^{t,x}_t
	=
	x,
\end{align*}
where $\tilde\mu, \tilde\sigma \colon [0,T] \times \R^n \to \R^n \times \R^{n\times m}$ are measurable and bounded and satisfy the Lipschitz condition
$$
\big\Vert (\tilde\mu, \tilde\sigma)(t,x_1) - (\tilde\mu, \tilde\sigma)(t,x_2) \big\Vert_\infty \le L \Vert x_1 - x_2\Vert_\infty.
$$
By the previous results, we obtain that $X^{t,x}$ satisfies Condition \hyperlink{C2}{C.2}.

Standard computations show that there exists a constant $C \in (0,\infty)$ such that, for all $t \in [0,T]$ and $x_1,x_2 \in \R^n$,
\begin{align}
	\E\Big[ \sup_{s\in[t,T]} \big\Vert X^{t,x_1}_s - X^{t,x_2}_s \big\Vert_\infty^2 \Big]
	\le
	C \Vert x_1 - x_2\Vert_\infty^2.
	\label{eqDeltaFsdeSolSmall}
\end{align}

\begin{theorem} \label{thmBsdeSolExistsUniqueWithHthm}
Let $\Psi\colon \R^n \to \R$ be bounded, $\psi\colon [0,T]\times\R^n\times\R \to \R$ be continuous and bounded and let both function satisfy the Lipschitz condition
\[
	\big| \Psi(x_1) - \Psi(x_2) \big|
	+ \big| \psi(t,x_1,y_1) - \psi(t,x_2,y_2) \big|
	\le L \big( \Vert x_1 - x_2\Vert_\infty + \vert y_1 - y_2\vert \big).
\]

For $(t,x) \in [0,T]\times\R^n$, let $(Y^{t,x},Z^{t,x}) \in \mathcal L_{\mathcal P}^2(\Omega\times[0,T];\R\times \R^m)$ be the unique solution to the BSDE on $[t,T]$,
\begin{align*}
	- \, \mathrm dY^{t,x}_s
	=
	\psi\big(s, X^{t,x}_s, Y^{t,x}_s\big) \, \ds
	- Z^{t,x}_s \, \mathrm dW_s,
	\quad
	Y^{t,x}_T
	=
	\Psi \big( X^{t,x}_T \big)
\end{align*}
driven by the forward process $X^{t,x}$, which exists due to Theorem 2.1 in \cite{elkaroui}.
Then there exists a uniformly continuous function $h\colon [0,T]\times\R^n \to \R$, which does not depend on $(t,x)$, such that
\begin{align}
	\IP\Big[ h\big(s,X^{t,x}_s\big) = Y^{t,x}_s \mbox{ for all } s \in [t,T] \Big]
	= 1.
	\label{eqthmBsdeSolExistsUniqueWithHthmPHEqYEq1}
\end{align}
In particular, the process $Y^{t,x}$ satisfies Condition \hyperlink{C2}{C.2} on $[t,T]$.
\end{theorem}

\begin{proof}
By Theorem 3.4 in \cite{elkaroui} and the remark following it, there exists a decoupling field $h\colon [0,T]\times\R^n \to \R$ that is $1/2$-Hölder continuous in $t$ and locally Lipschitz continuous in $x$ such that \eqref{eqthmBsdeSolExistsUniqueWithHthmPHEqYEq1} holds. It therefore remains to prove the uniform continuity of $h$. Since
\begin{align*}
	h(t,x)
	= \E\bigg[
		\Psi \big( X^{t,x}_T \big)
		+ \int_t^T \psi\Big(u, X^{t,x}_u, h \big( u,X^{t,x}_u \big) \Big) \, \du
	\bigg],
\end{align*}
we see that $h$ is bounded, say $|h| \le \overline h$. Then
\begin{align*}
	\tilde\omega_h(s,\delta)
	:= \sup_{\Vert x_1-x_2 \Vert_\infty \le \delta} \big| h(s,x_1) - h(s,x_2) \big|
\end{align*}
is finite, and Lemma \ref{thmexistsmodcontconcavelemma} allows us to define its concave envelope $\omega_h(s,\cdot)\colon [0,\infty) \to [0,\infty)$.

Using the Lipschitz continuity of $\psi$,
we obtain, for all $x_1,x_2 \in \R^n$ and $t \in [0,T]$ that
\begin{align*}
	\big| h(t,x_1) - h(t,x_2) \big|
	&\le
	 L \E\Big[ \big\Vert X^{t,x_1}_T - X^{t,x_2}_T \big\Vert_\infty \Big] \\
	& \qquad + L \int_t^T \bigg(
		\E\Big[ \big\Vert X^{t,x_1}_u - X^{t,x_2}_u \big\Vert_\infty \Big]
		 + \E\bigg[ \tilde\omega_h \Big( u, \big\Vert X^{t,x_1}_u - X^{t,x_2}_u \big\Vert_\infty \Big) \bigg]
	\bigg) \, \du.
\end{align*}

Using \eqref{eqDeltaFsdeSolSmall} along with Lemma \ref{thmexistsmodcontconcavelemma}, if $\Vert x_1-x_2 \Vert_\infty \le \delta$, then
\begin{align*}
	\E\Big[ \big\Vert X^{t,x_1}_T - X^{t,x_2}_T \big\Vert_\infty \Big]
	& \le
	C \delta
\end{align*}
and
\begin{align*}
	\E\bigg[ \tilde\omega_h\Big( u, \big\Vert X^{t,x_1}_u - X^{t,x_2}_u \big\Vert_\infty \Big) \bigg]
	& \le
	\omega_h\bigg( u, \E\Big[ \big\Vert X^{t,x_1}_u - X^{t,x_2}_u \big\Vert_\infty \Big] \bigg)
	\le
	\omega_h( u, C \delta ),
\end{align*}
for some constant $C \in (0,\infty)$ that depends only on $\tilde\mu$, $\tilde\sigma$ and $L$. In view of \eqref{eqthmexistsmodcontlemmatildeomegabound},
\begin{align*}
	\omega_h(u,C\delta)
	\le
	(C+1) \tilde\omega_h(u,\delta).
\end{align*}
This shows that
\begin{align*}
	\tilde\omega_h(t,\delta)
	\le&
	L C \delta
	+ L \int_t^T \big(
		C \delta
		+ (C+1) \tilde\omega_h(u,\delta)
	\big) \, \du
	\\=&
	L C \delta (1 + T-t)
	+ L (C+1) \int_t^T \tilde\omega_h(u,\delta) \, \du.
\end{align*}

Hence the Gronwall inequality implies that
$h$ is uniformly Lipschitz continuous in the second variable. Since $h$ is uniformly $1/2$-Hölder continuous in the first variable, this proves the assertion.
\end{proof}

\section{Convergence results for SDEs and random ODEs} \label{SectAppendixConv}

In this appendix, we establish three convergence results for stochastic integral equations that are useful to establish the convergence of our coefficient and state processes. In Subsection \ref{SectAppendixConvSde}, we consider sequences of SDEs parametrized by $\eta\to 0$ with ``positive feedback", i.e. sequences of equations that are driven away from their limits.
In Subsection \ref{SectAppendixConvOde}, we consider sequences of random ODE systems parametrized by $\eta\to 0$ with ``negative feedback", which are driven towards their limit.

For equations with positive feedback, we require a priori information about the terminal value and the increments of the coefficient processes to be bounded independently of their past (see Condition \hyperlink{C1}{C.1} and Condition \hyperlink{C2}{C.2}). These bounds prevent the stochastic process from reaching its terminal value if a large difference between the limit and the pre-limit occurs at some time $s \in (0,T)$. This will imply a.s.~uniform convergence. An almost sure statement under negative feedback cannot be expected. Instead, we follow a pathwise approach and establish uniform convergence in probability.

\subsection{SDEs with positive feedback} \label{SectAppendixConvSde}

\subsubsection{An equation without scaling} \label{SectAppendixConvSdeNotScaled}

We consider a family of real-valued, adapted, continuous stochastic processes $b^\eta=\{b^\eta_t\}_{t \in (0,T)}$ that satisfy the integral equation
\begin{align}
	\mathrm d b^\eta_t
	=
	\Big( a\big( t, b^\eta_t \big) + q^\eta_t \Big) \, \dt
	+ \overline Z^\eta_t \, \mathrm dW_t
	\quad
	\mbox{on } (0,T),
	\label{eqSdeBGeneral}
\end{align}
where $a\colon \Omega\times (0,T) \times \R \to \R$ is adapted and continuous and $\{q^\eta_t\}_{t\in(0,T)}$ is an adapted, real-valued, continuous stochastic process and
$$
\{\overline Z^\eta_t\}_{t\in(0,T)}
\in
\mathcal L_{\prog}^2 \big(\Omega\times(0,T-];\R^m\big).
$$

Our goal is to prove that the processes $\{b^\eta_t\}_{t \in (0,T)}$ with terminal conditions $b^\eta_T = 0$ converge to $0$ as $\eta \to 0$ uniformly on $(0,T)$ if the process $q^\eta$ does and if the mapping $a$ is such that $b^\eta$ is driven away from $0$. Intuitively, the last condition makes it impossible for $b^\eta$ to return to $0$ at the terminal time once it has left a neighbourhood of $0$. Specifically, we assume that the following assumption is satisfied.

\begin{assumption} \label{assthmbGeneralConvlemma}
\hspace{2em} 
\begin{itemize}
	\item[i)] There exists a real constant $\overline a > 0$ such that
	$$
		\sup_{t \in (0,T)}
		\sup_{\eta > 0}
			\Big| a\big( t, b^\eta_t \big) \Big|
		\le \overline a.
	$$
	\item[ii)] For all $\eps > 0$, there exists a $\delta > 0$ such that, for all $\eta>0$,
	$$
		\IP\Big[ \sup_{t \in [T-\delta,T)} \big| b^\eta_t \big| \le \eps \Big]
		=
		1.
	$$
	\item[iii)] There exists a function $\eta_1 \colon (0,\infty) \to (0,\infty)$ such that, for all $\eps>0$ and $\eta \le \eta_1(\eps)$,
	\begin{align*}
		\IP\Big[
			\sup_{t \in (0,T-\eps]} \big| q^\eta_t \big| \le \eps
		\Big]
		= 1.
	\end{align*}
	\item[iv)] There exists an $\eps_1 > 0$ such that
	$$
		\inf_{t \in (0,T)}
		\inf_{b \in [-\eps_1,\eps_1]}
			b \cdot a(t,b)
		\ge 0.
	$$
\end{itemize}
\end{assumption}

\begin{lemma} \label{thmbGeneralConvlemma}
Under Assumption \ref{assthmbGeneralConvlemma}, there exists a function $\eta_0\colon (0,\infty) \to (0,\infty)$, which depends only on $\eta_1$ and the mapping $\eps \mapsto \delta$ introduced in Assumption \ref{assthmbGeneralConvlemma} ii), such that, for all $\eps>0$ and all $\eta \in (0,\eta_0(\eps)]$,
\begin{align*}
	\IP \Big[ \sup_{s \in (0,T)} \big| b^\eta_s \big| \le \eps \Big]
	=
	1.
\end{align*}
\end{lemma}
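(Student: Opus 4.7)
The plan is to argue by contradiction using a Lyapunov argument based on $V(b) := b^2$ and the optional stopping theorem.

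First, I will fix $\eps > 0$ and note that, by monotonicity of the conclusion in $\eps$, it suffices to treat $\eps \in (0, \eps_1]$. I then choose parameters in the following order: $\delta_* := \delta(\eps/4)$ from Assumption \ref{assthmbGeneralConvlemma}~ii), $\eps' := \min\{\delta_*, \eps^2/(8\eps_1 T)\}$, and $\eta_0 := \eta_1(\eps')$. By construction $\eta_0$ depends only on $\eta_1$ and the map $\eps \mapsto \delta$, so the asserted dependence holds. For $\eta \in (0, \eta_0]$, Assumption iii) yields $|q^\eta_s| \leq \eps'$ on $(0, T-\delta_*]$ a.s., and ii) yields $|b^\eta_s| \leq \eps/4$ on $[T-\delta_*, T)$ a.s.

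Next I suppose for contradiction that $\IP[\sup_{s \in (0,T)} |b^\eta_s| > \eps] > 0$ for some such $\eta$. By the symmetry $(a, b^\eta, q^\eta, \overline Z^\eta) \mapsto (-a(\cdot, -\cdot), -b^\eta, -q^\eta, -\overline Z^\eta)$ (which preserves all four parts of Assumption \ref{assthmbGeneralConvlemma}), I may assume WLOG that $A^+ := \{\sup_{s} b^\eta_s > \eps\}$ has positive probability. Define the stopping times $\sigma := \inf\{t \leq T-\delta_* : b^\eta_t \geq \eps\} \wedge (T-\delta_*)$ and $\tau := \inf\{t \geq \sigma : b^\eta_t \notin (\eps/2, \eps_1)\} \wedge (T-\delta_*)$. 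Continuity gives on $A^+$: $\sigma < T-\delta_*$, $b^\eta_\sigma = \eps$, and $b^\eta_s \in [\eps/2, \eps_1]$ for $s \in [\sigma, \tau]$, whence Assumption iv) yields $a(s, b^\eta_s) \geq 0$ on $[\sigma, \tau]$.

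The core computation applies Itô's formula to $V(b^\eta)$ and uses that, on $[\sigma, \tau]$, the drift $2 b^\eta a + \lVert \overline Z\rVert^2 + 2 b^\eta q^\eta$ is bounded below by $-2\eps_1 \eps'$. Since $\tau \leq T-\delta_*$ and $\overline Z^\eta \in \mathcal L^2_\prog(\Omega \times [0, T-\delta_*]; \R^m)$, the stochastic integral is a true martingale, and optional stopping together with the choice of $\eps'$ gives $\E[V(b^\eta_\tau) \,|\, \mathcal F_\sigma] \geq \eps^2 - 2\eps_1 \eps' T \geq 3\eps^2/4$ on $A^+$. The matching upper bound $V(b^\eta_\tau) \leq \eps_1^2 \mathbbm{1}_{\{b^\eta_\tau = \eps_1\}} + \eps^2/4$ follows by case analysis on the three possible exits (via $\eps/2$, via $\eps_1$, or via the time boundary with $|b^\eta_\tau| \leq \eps/4$). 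Setting $p_2 := \IP[b^\eta_\tau = \eps_1 \,|\, \mathcal F_\sigma]$, I obtain $p_2 \geq \eps^2/(2\eps_1^2) > 0$ on $A^+$, and hence $\IP[\sup_s b^\eta_s \geq \eps_1] > 0$.

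The main obstacle is closing the loop into a genuine contradiction, because Assumption iv) only yields positive feedback inside $[-\eps_1, \eps_1]$, so excursions above $\eps_1$ are a priori uncontrolled. My plan is to iterate: restart the Lyapunov argument at $\sigma' := \inf\{t : b^\eta_t \geq \eps_1\}$ with Assumption ii) applied at the scale $\eps_1/4$, combine the uniform drift bound from i) with an a priori $L^\infty$-bound of the form $|b^\eta_t| \leq \eps/4 + (\overline a + \eps') T$ (obtained by taking conditional expectation in the SDE over $[t, T-\delta_*]$), and control large excursions of the martingale $\int \overline Z^\eta \mathrm dW$ via Doob's maximal inequality. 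Telescoping this iteration produces ever-larger thresholds hit with positive probability, contradicting the a priori $L^\infty$-bound. Making this iteration uniformly rigorous in $\eta$ is the main technical challenge; in the applications (e.g., Lemma \ref{thmbconv0lemma}), an a priori bound on $|b^\eta|$ coming from the concrete structure of the problem simplifies this step substantially.
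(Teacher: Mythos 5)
Your Lyapunov computation up to the conclusion that $\IP[\sup_s b^\eta_s \ge \eps_1] > 0$ is sound, but the proof does not close, and the gap you yourself flag is fatal rather than technical. Above the level $\eps_1$ Assumption \ref{assthmbGeneralConvlemma} iv) gives no information, so the drift $a$ may point \emph{back towards zero} there (it is only bounded by $\overline a$); consequently there is no mechanism forcing the process to reach ever-larger thresholds, and your proposed telescoping iteration cannot produce a contradiction with the a priori bound $|b^\eta_t| \le \eps/4 + (\overline a + \eps')T$ — that bound is of order $\overline a T$, far above $\eps_1$, and a process that repeatedly climbs to $\eps_1$ and is pushed back down violates nothing. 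Restarting at $\sigma' = \inf\{t : b^\eta_t \ge \eps_1\}$ only reproduces the first step's conclusion at the same threshold, not at a larger one. Moreover, the lemma demands probability exactly $1$ with $\eta_0$ depending only on $\eta_1$ and the map $\eps \mapsto \delta$; an iteration invoking Doob's maximal inequality for $\int \overline Z^\eta\,\mathrm dW$ would at best give high-probability statements with extra dependence on $\overline a$ and on bounds for $\overline Z^\eta$, and deferring to application-specific a priori bounds on $b^\eta$ is not available in this abstract statement.

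The paper's proof avoids the problem entirely by arranging the argument so that the process is confined to the region where iv) applies on the whole relevant stochastic interval. It proves the one-sided bound $\IP[\inf_s b^\eta_s \ge -\eps]=1$ (the other side is symmetric) by introducing $\sigma^{\eps,\eta} := \inf\{t : \IP[b^\eta_s < -\eps]=0 \text{ for all } s \in [t;T)\}$, which is at most $T-\delta$ by ii). Assuming $\sigma^{\eps,\eta}>0$, one picks $s$ within $\alpha$ of $\sigma^{\eps,\eta}$ with $\IP[b^\eta_s<-\eps]>0$ and stops at $\tau^s$, the first hitting time of $-\eps/2$ after $s$ (which is $\le T-\delta$ on that event). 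The point is that on $[\tau^s\wedge\sigma^{\eps,\eta};\tau^s]$ the process lies in $[-\eps;-\eps/2]\subset[-\eps_1;\eps_1]$ \emph{by the very definition of} $\sigma^{\eps,\eta}$, so iv) makes the drift nonpositive there, while on the short piece $[s;\tau^s\wedge\sigma^{\eps,\eta}]$ (length $\le\alpha$) the crude bound $\overline a$ from i) suffices; optional sampling then forces $\IP[b^\eta_s<-\eps]\,\eps/2 \le \IP[b^\eta_s<-\eps]\,(\alpha\overline a + T\sup|q^\eta|)$, a contradiction for $\alpha,\delta,\eta$ small. This backward ``last-time'' device is exactly the missing idea in your forward first-hitting-time approach: it renders any control of $b^\eta$ outside $[-\eps_1;\eps_1]$ unnecessary and yields the almost-sure statement with the asserted dependence of $\eta_0$.
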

\begin{proof}
Let $\eps \in (0,\eps_1]$. In a first step, we show that there exists an $\eta_0(\eps) > 0$ such that
\begin{align*}
	\sigma^{\eps,\eta}
	:=
	\inf \Big\{
		t \in (0,T) :
		\IP\big[b^\eta_s < -\eps\big] = 0
		\text{ for all } s \in [t,T)
	\Big\}
\end{align*}
satisfies $\sigma^{\eps,\eta} = 0$ for all $\eta \in (0,\eta_0(\eps)]$.
Since $b^\eta$ is continuous, this will imply that, for all $\eta \in (0,\eta_0(\eps)]$,
$$
	\IP \Big[ \inf_{s \in (0,T)} b^\eta_s \ge -\eps \Big]
	=
	1.
$$
The other direction of the statement can be proven analogously.

We now assume to the contrary that $0 < \sigma^{\eps,\eta}$ and prove that this leads to a contradiction if $\eta$ is small enough. For all $s\in (0,T)$ and $\eta \in (0,\infty)$, we can define the stopping time
\begin{align*}
	\tau^s
	:= \tau^{s,\eps,\eta}
	:= \inf \big\{ u \in [s,T) : b^\eta_u = -\eps/2 \big\}.
\end{align*}
By Assumption \ref{assthmbGeneralConvlemma} \textit{ii)}, there exists a small enough $\delta > 0$ such that, for all $s\in (0,T)$ and $\eta \in (0,\infty)$,
\begin{align}
	\IP\big[ b^\eta_s < -\eps, \tau^s > T - \delta \big]
	=
	0.
\label{eqthmbconv0lemmaTauLeTMinusEtc}
\end{align}
Since $\tau^s$ is a stopping time, due to the square-integrability of $\overline Z^\eta$, by the optional sampling theorem (cf. Theorem 1.3.22 in \cite{karatzas_shreve}) and Assumption \ref{assthmbGeneralConvlemma} \textit{i)}, \textit{iv)}, \eqref{eqSdeBGeneral} and \eqref{eqthmbconv0lemmaTauLeTMinusEtc}, we obtain that, for all $s\in (0,\sigma^{\eps,\eta})$ and $\eta \in (0,\infty)$,
\begin{align*}
	\IP\big[ b^\eta_s < -\eps \big] \frac \eps 2
	\le&
	\E\Big[ \mathbbm 1_{\{b^\eta_s < -\eps\}} \big( b^\eta_{\tau^s} - b^\eta_s \big) \Big]
	\\=&
	\E\bigg[ \mathbbm 1_{\{b^\eta_s < -\eps\}} \bigg(
		\int_s^{\tau^s \wedge \sigma^{\eps,\eta}} a\big( t, b^\eta_t \big) \, \dt
		+ \int_{\tau^s \wedge \sigma^{\eps,\eta}}^{\tau^s} a\big( t, b^\eta_t \big) \, \dt
		+ \int_s^{\tau^s} q^\eta_t \, \dt
	\bigg) \bigg].
\end{align*}

Inside the second integral, if $\sigma^{\eps,\eta} < \tau^s$, it holds $b^\eta_t \in [-\eps,-\eps/2]$. Hence

\begin{align}
\begin{split}
	\IP\big[b^\eta_s < -\eps\big] \frac \eps 2
	\le
	\IP\big[ b^\eta_s < -\eps \big] \Big(
		\big( \sigma^{\eps,\eta} - s \big) \overline a
		+ T \sup_{t\in [0,T-\delta]} \big| q^\eta_t \big|
	\Big).
\end{split}
\label{eqthmbGeneralConvlemmaMain}
\end{align}

The assumption $0 < \sigma^{\eps,\eta}$ implies that for any $\alpha > 0$, we can find
$$
\sigma^{\eps,\eta} - \alpha
\le s
\le \sigma^{\eps,\eta}
$$
with $\IP[ b^\eta_s < -\eps ] > 0$. However, due to Assumption \ref{assthmbGeneralConvlemma} \textit{ii)} and \textit{iii)}, we can choose $\alpha \le \eps/(5\overline a)$, $\delta \le \eps/(5T)$ and $\eta_0(\eps) < \eta_1(\delta)$ such that \eqref{eqthmbGeneralConvlemmaMain} leads to a contradiction because $\IP[ b^\eta_s < -\eps ] > 0$. This implies $0 = \sigma^{\eps,\eta}$.
\end{proof}

\subsubsection{An equation with scaling} \label{SectAppendixConvSdeScaled}

In this section, we establish an abstract convergence result for stochastic processes $\{P^\eta_t\}_{t \in (0,T)}$ indexed by some parameter $\eta > 0$ that satisfy the integral equation
\begin{align}
	\mathrm d ( \psi P^\eta )_t
	=
	\sqrt{\eta^{-1}} \Big(
		a\big(P^\eta_t, P^0_t\big)
		+ q^\eta_t
	\Big) \, \dt
	+ \, \mathrm dL^\eta_t
	+ \overline Z^\eta_t \, \mathrm dW_t
	\quad
	\mbox{on } (0,T),
	\label{eqthmfggeneralconvlemmaSde}
\end{align}
where $a:\mathbb{R}^2 \to \R$ is a measurable mapping,
$\{\psi_t\}_{t\in(0,T)}$,
$\{P^\eta_t\}_{t\in(0,T)}$,
$\{P^0_t\}_{t\in(0,T)}$,
$\{q^\eta_t\}_{t\in(0,T)}$
and $\{L^\eta_t\}_{t\in(0,T)}$
are adapted, real-valued, continuous stochastic processes and $$\{\overline Z^\eta_t\}_{t\in(0,T)} \in \mathcal L_{\prog}^2\big(\Omega\times(0,T-];\R^m\big).$$

Our goal is to prove that the processes $\{P^\eta_t\}_{t \in (0,T)}$ converge to $P^0$ as $\eta \to 0$ uniformly on compact subintervals of $(0,T)$ if the mapping $a(\cdot,\cdot)$ is such that $P^\eta$ is driven away from $P^0$ and if the boundary condition $\lim_{t\to T} ( P^\eta_t - P^0_t ) \ge 0$ holds. If the integral equation \eqref{eqthmfggeneralconvlemmaSde} holds on the whole interval $(0,T]$, then it is enough to assume that $P^\eta_T \ge P^0_T$. When applying the abstract convergence result to the BSDEs \eqref{eqBsdeF} and \eqref{eqBsdeG}, the former condition holds if $N = +\infty$ while the latter holds if $N$ is finite.

\begin{assumption} \label{assthmfggeneralconvlemma}
\hspace{2em} 
\begin{itemize}
	\item[i)] There exist positive constants $\underline \psi, \overline \psi, \overline P$ such that $\underline\psi \le \psi \le \overline\psi$, $|P^0| \le \overline P$ and $P^\eta \ge - \overline P$. Moreover, there exists a function $\eta_1 \colon (0,\infty) \to (0,\infty)$ such that
	$$
	P^\eta|_{\Omega\times(0,T-\eps]} \le \overline P \quad \mbox{for all} \quad \eps > 0 \mbox{ and } \eta \le \eta_1(\eps).
	$$
	\item[ii)] One of the following two ``boundary conditions'' holds:
	\begin{itemize}
		\item[a)] For all $\eta > 0$, there exists a $T_\eta < T$ such that
	$$
		\IP\Big[ \inf_{T_\eta \le t < T} \big( P^\eta_t - P^0_t \big) \ge 0 \Big]
		=
		1.
	$$
	\item[b)] The integral equation \eqref{eqthmfggeneralconvlemmaSde} holds on the whole interval $(0,T]$ and $P^\eta_T \ge P^0_T$. Thus, in particular, $\psi$, $P^\eta$, $P^0$, $q^\eta$ and $L^\eta$ are continuous on $(0,T]$ and $\overline Z^\eta \in \mathcal L_{\prog}^2(\Omega\times(0,T];\R^m)$.
	\end{itemize}
	\item[iii)] There exists a function $\eta_2 \colon (0,\infty) \to (0,\infty)$ such that, for all $\eps>0$ and all $\eta \in (0,\eta_2(\eps)]$,
\begin{align*}
	\IP\Big[
		\inf_{t \in (0,T-\eps]} q^\eta_t \ge -\eps,
		\sup_{t \in (0,T)} q^\eta_t \le \eps
	\Big]
	=
	1.
\end{align*}
	\item[iv)]
	The function $a \colon \R^2 \to \R$ is non-decreasing and differentiable in the first variable and $a(P^0_t(\omega),P^0_t(\omega)) \equiv0$. Moreover, there exist $\overline\eps, \beta > 0$ such that
\[
	\mddx a(x,P^0_t(\omega)) \ge \beta \quad \mbox{for all} \, \big| x-P^0_t(\omega) \big| \le \overline\eps.
\]
	\item[v)] The processes $\psi$ and $P^0$ satisfy Condition \hyperlink{C2}{C.2} and the processes $L^\eta$ $(\eta > 0)$ satisfy Condition \hyperlink{C1}{C.1} uniformly in $\eta>0$.
\end{itemize}
\end{assumption}

\begin{lemma} \label{thmfggeneralconvlemma}
Under Assumption \ref{assthmfggeneralconvlemma},
there exists a mapping $\eta_0\colon (0,\infty) \to (0,\infty)$ that depends only on $\psi$, $\overline P$, $\eta_1$, $\eta_2$, $P^0$, $\overline\eps$, $\beta$ and the mapping $\eps \mapsto \delta$ corresponding to the uniform satisfaction of Condition \hyperlink{C1}{C.1} by $\{L^\eta\}_{\eta}$, such that, for all $\eps>0$ and $\eta \in (0,\eta_0(\eps)]$,
\begin{align*}
	\IP \Big[
		\sup_{s \in (0,T-\eps]} \big( P^\eta_s - P^0_s \big) \le \eps,
		\inf_{s \in (0,T)} \big( P^\eta_s - P^0_s \big) \ge -\eps
	\Big]
	=
	1.
\end{align*}
\end{lemma}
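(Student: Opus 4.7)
The plan is to argue by contradiction, mirroring the strategy used in Lemma~\ref{thmbGeneralConvlemma} but accounting for the rescaling factor $\sqrt{\eta^{-1}}$ and the auxiliary finite-variation driver $L^\eta$. The two one-sided statements are handled by essentially symmetric arguments, so I focus on the upper bound $\sup_{s \in (0;T-\eps]} p^\eta_s \le \eps$; the slight asymmetry in the time interval reflects that the a priori upper bound $P^\eta \le \overline P$ is only postulated on $(0;T-\eps]$ under case~(a) of Assumption~\ref{assthmfggeneralconvlemma}~i), while the corresponding lower bound on $P^\eta$ is available on the full interval.

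I would define
$$
\sigma^{\eps,\eta} := \inf\{t \in (0;T-\eps] : \IP[p^\eta_r > \eps] = 0 \text{ for all } r \in [t;T-\eps]\}
$$
and assume for contradiction that $\sigma^{\eps,\eta} > 0$. For $s < \sigma^{\eps,\eta}$ chosen close enough that $A := \{p^\eta_s > \eps\} \in \mathcal F_s$ has positive probability, introduce the stopping time $\tau := \inf\{r \ge s : p^\eta_r \le \eps/2\} \wedge T'$, with $T'$ a time strictly below $T$ under case~(a) of Assumption~\ref{assthmfggeneralconvlemma}~ii) (e.g.\ $T' = T_\eta$) and $T' = T$ under case~(b); either way the stochastic integral in \eqref{eqthmfggeneralconvlemmaSde} is a genuine martingale on $[s,\tau]$, so optional sampling yields
\begin{align*}
\E[\mathbbm 1_A(\psi_\tau P^\eta_\tau - \psi_s P^\eta_s)]
& = \sqrt{\eta^{-1}} \E\Big[\mathbbm 1_A \int_s^\tau a(P^\eta_r, P^0_r)\, dr\Big] \\
& \quad + \sqrt{\eta^{-1}} \E\Big[\mathbbm 1_A \int_s^\tau q^\eta_r\, dr\Big] + \E[\mathbbm 1_A(L^\eta_\tau - L^\eta_s)].
\end{align*}
The LHS is bounded in absolute value by $2\overline\psi\overline P\,\IP(A)$ using Assumption~\ref{assthmfggeneralconvlemma}~i); on $[s,\tau]$ the monotonicity and derivative lower bound in Assumption~\ref{assthmfggeneralconvlemma}~iv) yield $a(P^\eta_r, P^0_r) \ge \beta\eps/2$ once $\eps/2 \le \overline\eps$; the $q^\eta$ contribution is negligible by Assumption~\ref{assthmfggeneralconvlemma}~iii) for $\eta$ small; and the $L^\eta$ expectation is $O(\IP(A))$ by Condition~\hyperlink{C1}{C.1} (uniform in $\eta$) once the stopping interval is short.

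The crucial remaining step is to secure a uniform-in-$\eta$ lower bound of the form $\E[\mathbbm 1_A(\tau - s)] \gtrsim c\,\IP(A)$: once this is in hand, the scaling $\sqrt{\eta^{-1}}$ forces the drift term to dominate, contradicting the $O(1)$ LHS bound for all sufficiently small $\eta$. I expect this lower bound to arise from combining the proximity of $s$ to $\sigma^{\eps,\eta}$ with Condition~\hyperlink{C2}{C.2} regularity of $\psi$ and $P^0$: a collapse of $p^\eta$ from above $\eps$ to $\eps/2$ on a time scale much shorter than the C.1/C.2 parameter would force a drop of at least $\underline\psi \eps/2$ in $\psi P^\eta$ modulo negligible increments of $\psi P^0$, which, combined with the nonnegativity of the drift, would have to come from a negative stochastic-integral excursion whose mean vanishes, and this cannot occur on $A$ with nonvanishing conditional frequency. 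Establishing this short-time lower bound is the main technical obstacle and is precisely where the strong regularity hypotheses on $\psi$, $P^0$ and $\{L^\eta\}$ do their work. The lower bound for $\inf p^\eta$ on $(0;T)$ follows by the symmetric argument, with the endpoint $T$ handled via case~(b) of Assumption~\ref{assthmfggeneralconvlemma}~ii) when applicable.
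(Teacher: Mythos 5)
There is a genuine gap, and you have located it yourself: the ``uniform-in-$\eta$ lower bound $\E[\mathbbm 1_A(\tau-s)]\gtrsim c\,\IP(A)$'' is never established, and neither the heuristic you give nor the hypotheses of Assumption \ref{assthmfggeneralconvlemma} deliver it. The mean-zero property of the stochastic integral does not prevent $p^\eta$ from collapsing from above $\eps$ to $\eps/2$ arbitrarily fast on most of $A$: $\overline Z^\eta$ is only square integrable, so a large negative excursion on most of $A$ can be balanced in expectation by large positive values on a small subset, and no lower bound on the crossing time follows. Worse, your two requirements conflict: you need the stopping interval to be \emph{short} so that Condition \hyperlink{C1}{C.1} makes the $L^\eta$ and $\psi P^0$ increments negligible (C.1 only controls increments over horizons of a fixed length $\delta$), but your key estimate asks the interval to be \emph{long} (of length $\gtrsim c$) on a non-negligible part of $A$ so that the $\sqrt{\eta^{-1}}$ drift can dominate. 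With your uncapped stopping time $\tau=\inf\{r\ge s: p^\eta_r\le\eps/2\}\wedge T'$ the increments of $L^\eta$ and $\psi P^0$ are not small at all, so even an $O(\IP(A))$ bound does not close the argument.

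The paper resolves exactly this tension by capping the stopping time at $s+\delta$, with $\delta$ the C.1/C.2 parameter, i.e.\ $\tau^\eta=(s+\delta)\wedge\inf\{u\ge s: |p^\eta_u|=\eps/2\}$, and then splitting the expectation identity according to whether the cap is attained. If $\tau^\eta=s+\delta$, the drift term contributes at least $\delta\sqrt{\eta^{-1}}\beta\eps/4$ on that event, which for small $\eta$ beats the worst-case boundary term $\eps\underline\psi-2\overline\psi\overline P$ coming from $p^\eta_{\tau^\eta}\le 2\overline P$; if $\tau^\eta<s+\delta$, then $p^\eta_{\tau^\eta}$ is pinned at $\eps/2$, so the boundary term $\psi_s p^\eta_s-\psi_{\tau^\eta}p^\eta_{\tau^\eta}$ is itself bounded below by roughly $\underline\psi\eps/2$ minus the small C.1/C.2 errors, while the drift is still nonnegative. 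Either way each sub-event has probability zero, giving $\IP[p^\eta_s>\eps]=0$ directly for every $s\in(0;T-\eps]$ — no first-time $\sigma^{\eps,\eta}$ construction (that device belongs to the unscaled Lemma \ref{thmbGeneralConvlemma}) and no lower bound on $\tau-s$ is ever needed. Your treatment of the boundary condition ii) for the lower bound and the one-sidedness of the time intervals is broadly in the right spirit, but as written the central step of your argument is missing and the route you sketch for it does not work.
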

\begin{proof}
Let $p^\eta := P^\eta - P^0$ and let us fix $\eps \in (0,\overline\eps]$. We first show that there exists $\eta_0(\eps) > 0$ such that
\begin{align*}
	\sup_{\eta \le \eta_0(\eps)} \sup_{s \in (0,T-\eps]} \IP\big[ p^\eta_s > \eps \big] = 0.
\end{align*}
Since $p^\eta$ is continuous, this proves that, for all $\eta \le \eta_0(\eps)$,
\[
	\IP \Big[ \sup_{s \in (0,T-\eps]} p^\eta_s \le \eps \Big]=1.
\]
To this end, we fix $\delta \leq \eps/2$ and $s \in (0,T-\eps]$. For all $\eta > 0$, we can define the stopping time
\begin{align*}
	\tau^\eta
	:=
	(s+\delta) \wedge \inf \Big\{ u \in [s,T) : \big| p^\eta_u \big| = \eps/2 \Big\}
	\le
	T - \frac \eps 2.
\end{align*}
Due to the $L^2$-integrability of $\overline Z^\eta$ on $[s,T-\eps/2]$ and by the optional sampling theorem (cf. Theorem 1.3.22 in \cite{karatzas_shreve}),
we obtain
\begin{equation} \label{eqthmfggeneralconvlemmaEEq0}
\begin{split}
	0
	=&
	\E\bigg[ \mathbbm 1_{\{ p^\eta_s > \eps\}} \bigg(
		\sqrt{\eta^{-1}} \int_s^{\tau^\eta} \Big(
			a\big(P^\eta_u, P^0_u\big)
			+ q^\eta_u
		\Big) \, \du
		+ L^\eta_{\tau^\eta} - L^\eta_s
	\\&\quad
		+ \psi_s P^0_s - \psi_{\tau^\eta} P^0_{\tau^\eta}
		+ \psi_s p^\eta_s - \psi_{\tau^\eta} p^\eta_{\tau^\eta}
	\bigg)\bigg].
\end{split}
\end{equation}

We now analyze the various terms in \eqref{eqthmfggeneralconvlemmaEEq0} separately to deduce that $\IP[p^\eta_s > \eps] = 0$ if
\[
	\eta \le \eta_0(\eps)
	:=
	\min \Big( \eta_1(\eps / 2), \eta_2 \big( \min( \delta, \beta \eps/4 ) \big) \Big).
\]

\begin{itemize} \item We first consider the term
\[
	\E\bigg[ \mathbbm 1_{\{ p^\eta_s > \eps\}}
		\sqrt{\eta^{-1}} \int_s^{\tau^\eta} \Big(
			a\big(P^\eta_u, P^0_u\big)
			+ q^\eta_u
		\Big) \, \du
	\bigg].
\]

On $\{p^\eta_s > \eps\}$, we have
that $p^\eta_u \ge \eps/2$ for all $u \in [s,\tau^\eta]$ and hence
\begin{align*}
	a\big(P^\eta_u, P^0_u\big)
	= a\big( P^0_u + p^\eta_u, P^0_u\big)
	\ge a\big( P^0_u + \eps/2, P^0_u\big)
	\ge \frac {\beta \eps} 2.
\end{align*}

Together with our assumption on the process $q^\eta$ and our choice of $\eta$, this implies that
\begin{align*}
	&
	\E\bigg[ \mathbbm 1_{\{ p^\eta_s > \eps\}}
		\sqrt{\eta^{-1}} \int_s^{\tau^\eta} \Big(
			\underbrace{a\big(P^\eta_u, P^0_u\big)}_{\ge \beta \eps/2}
			+ \underbrace{q^\eta_u}_{\ge -\beta \eps/4}
		\Big) \, \du
	\bigg]
	\\\ge&
	\E\bigg[ \mathbbm 1_{\{ p^\eta_s > \eps\}}
		\sqrt{\eta^{-1}} (\tau^\eta - s) \frac {\beta \eps} 4
	\bigg]
	\\\ge&
	\delta \sqrt{\eta^{-1}} \frac {\beta \eps} 4 \IP\big[ p^\eta_s > \eps \, \text{and} \, \tau^\eta = s + \delta \big].
\end{align*}

\item By Assumption \ref{assthmfggeneralconvlemma} \textit{v)}, Lemma \ref{thmC2ImpliesC1lemma} and Theorem \ref{thmXYSatisfiesStarPropertythm}, we can choose for any $\eps_0 > 0$ some $\delta > 0$, which does not depend on $\eta$, such that
\begin{equation} \label{MLY}
	\Big| \E \big[ \mathbbm 1_{\{ p^\eta_s > \eps\}} ( Y_{\tau^\eta} - Y_s ) \big] \Big|
	\le
	\eps_0 \IP\big[ p^\eta_s > \eps \big],
	\quad \mbox{for} \, Y = \psi P^0, L^{\eta},
\end{equation}
and
\begin{align*}
	\E \big[ | \psi_{\tau^\eta} - \psi_s | \big| \mathcal F_s \big]
	\le
	\eps_0.
\end{align*}

\item Using \eqref{MLY}, along with the fact that $P^\eta|_{\Omega\times (0,T-\eps]} \le \overline P$ due to Assumption \ref{assthmfggeneralconvlemma} \textit{i)}, yields
\begin{align*}
	&
	\E\Big[ \mathbbm 1_{\{ p^\eta_s > \eps\}} \big( \psi_s p^\eta_s - \psi_{\tau^\eta} p^\eta_{\tau^\eta} \big) \Big]
	\\ \ge&
	\E\Big[ \mathbbm 1_{\{ p^\eta_s > \eps \, \text{and} \, {\tau^\eta} < s + \delta \}} \big( \psi_s \eps/2 + (\psi_s - \psi_{\tau^\eta}) \eps/2 \big) \Big]
	+ \E\Big[ \mathbbm 1_{\{ p^\eta_s > \eps \, \text{and} \, {\tau^\eta} = s + \delta \}} \big( \psi_s \eps - \psi_{\tau^\eta} \underbrace{p^\eta_{\tau^\eta}}_{\le 2 \overline P} \big) \Big]
	\\\ge&
	\frac \eps 2 \underline\psi \IP\big[ p^\eta_s > \eps \, \text{and} \, {\tau^\eta} < s + \delta \big]
	- \frac {\eps \eps_0} 2 \IP\big[ p^\eta_s > \eps \big]
	+ \big( \eps \underline\psi - 2 \overline\psi \overline P \big) \IP\big[ p^\eta_s > \eps \, \text{and} \, {\tau^\eta} = s + \delta \big].
\end{align*}
\end{itemize}

Altogether this yields
\begin{align*}
	0
	\ge&
	\IP\big[ p^\eta_s > \eps \, \text{and} \, {\tau^\eta} = s + \delta \big] \bigg(
		\delta \sqrt{\eta_0(\eps)^{-1}} \frac {\beta \eps} 4
		- 2 \eps_0
		- \frac {\eps \eps_0} 2
		+ \eps \underline\psi - 2 \overline\psi \overline P
	\bigg)
	\nonumber
	\\&
	+ \IP\big[ p^\eta_s > \eps \, \text{and} \, {\tau^\eta} < s + \delta \big] \bigg(
		- 2 \eps_0
		+ \frac \eps 2 \underline\psi
		- \frac {\eps \eps_0} 2
	\bigg).
\end{align*}
Now, if we choose first $\eps_0$ and then $\eta_0(\eps)$ small enough, the coefficients that multiply the probabilities become positive. Hence both probabilities must be equal to zero and so we have $\IP[ p^\eta_s > \eps] = 0$ if $\eta \le \eta_0(\eps)$. Analogously, we can prove that $\IP[ p^\eta_s < -\eps] = 0$ for all $s \in (0,T)$. The main difference is that ${\tau^\eta} < T-\delta$ does not hold on the set $\{p^\eta_s < -\eps\}$. Instead, we only obtain $\tau^\eta < T$ (if Assumption \ref{assthmfggeneralconvlemma} \textit{ii)} \textit{b)} holds) or $\tau^\eta < T_\eta$ (if Assumption \ref{assthmfggeneralconvlemma} \textit{ii)} \textit{a)} holds).
\end{proof}

\subsection{ODE systems with negative feedback} \label{SectAppendixConvOde}

In this section, we establish a convergence result for ODEs with random coefficients and ``nearly singular'' driver. Specifically, we consider pairs of continuously differentiable stochastic processes $(X^\eta,Z^\eta)$, which satisfy the ODE system
\begin{align*}
\dot X^\eta_t
=&
- A^\eta_t \big( X^\eta_t + B^\eta_t Z^\eta_t \big), \quad
X^\eta_0
=
x_0,
\\
\dot Z^\eta_t
=&
C_t X^\eta_t + D_t Z^\eta_t,
\quad
Z^\eta_0
=
z_0
\end{align*}
on some time interval $[0,S]$ for all $\eta > 0$, where $A^\eta$, $B^\eta$, $C$ and $D$ are continuous, adapted, real-valued stochastic processes. We assume that the process $A^\eta$ converges in probability to infinity as $\eta \to 0$, that the process $B^\eta$ converges to a process $B^0$ in probability and that the processes $(X^\eta,Z^\eta)$ are uniformly bounded in probability.

\begin{assumption} \label{assOdeConv}
There exists a continuous adapted process $B^0$, such that for all $\eps,\delta \in (0,\infty)$, there exists some $\eta_0 = \eta_0(\eps, \delta) > 0$ such that, for all $\eta \in (0,\eta_0]$,
\begin{align}
	\IP\bigg[
		\inf_{t\in [0,S]} A^\eta_t \ge \frac 1 \eps,
		\sup_{t\in [0,S]} \big| B^\eta_t - B^0_t \big| \le \eps
	\bigg]
	\ge
	1 - \delta.
	\label{eqassOdeConvABConv}
\end{align}
Moreover, for all $\delta \in (0,1)$, there exists an $L > 0$ such that, for all $\eta>0$,
\begin{equation} \label{boundXZ}
	\IP\bigg[ \sup_{t\in[0,S]} \Big( \big| X^\eta_t \big| + \big| Z^\eta_t \big| \Big) \le L \bigg] \ge 1 - \delta.
\end{equation}
\end{assumption}

In terms of the process $B^0$, we define
\begin{align*}
X^0_t & := - B^0_t Z^0_t, \\
Z^0_t & :=
z_0 \exp\bigg( \int^t_0 \big( - C_s B^0_s + D_s \big) \, \ds \bigg).
\end{align*}

Our goal is to prove that $(X^\eta,Z^\eta) \to (X^0,Z^0)$ as $\eta \to 0$ in a suitable sense. Since the initial condition of the processes $X^\eta$ does not vary with $\eta$ and since $A^\eta_0 \uparrow \infty$, we cannot expect convergence in $t=0$. Instead, we first prove that $\frac{\dot X^\eta}{A^\eta}$ converges to $0$ on any compact subinterval of $(0,S]$ with lower, respectively upper convergence near the origin.

\begin{lemma} \label{thmDdtXEtaScaledConv0Generallemma}
If Assumption \ref{assOdeConv} holds and if $x_0 + B^0_0 z_0 > 0$,
then for all $\eps,\delta \in (0,\infty)$, there exists $\eta_1=\eta_1(\eps,\delta) > 0$ such that, for all $\eta \in (0,\eta_1]$,
\begin{align*}
	\IP\bigg[
		\sup_{t\in [0,S]} \frac {\dot X^\eta_t} {A^\eta_t} \le \eps,
		\inf_{t\in [\eps,S]} \frac {\dot X^\eta_t} {A^\eta_t} \ge -\eps
	\bigg]
	\ge
	1 - \delta.
\end{align*}
If $x_0 + B^0_0 z_0 < 0$,
then the intervals $[0,S]$ and $[\eps,S]$ in the above statement are to be swapped.

The mapping $\eta_1$ depends only on $B^0$, $C$, $D$, $x_0$, $z_0$, and the mappings $(\eps,\delta) \mapsto \eta_0(\eps,\delta)$ and $\delta \mapsto L$ introduced in Assumption \ref{assOdeConv}.
\end{lemma}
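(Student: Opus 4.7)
The plan is to treat the case $x_0 + B^0_0 z_0 > 0$ in detail (the other case being symmetric) and to argue pathwise on an event $\Omega_0$ of probability at least $1-\delta$, obtained by intersecting the event in Assumption \ref{assOdeConv} (applied with a small auxiliary parameter $\eps' \ll \eps$) with the event on which $(X^\eta, Z^\eta)$ is bounded in absolute value by a constant $L$ and on which $B^0$ admits a deterministic modulus of continuity $\omega^\star$ with $\omega^\star(h)\downarrow 0$ on $[0;S]$; such an $\omega^\star$ exists (up to an arbitrarily small probability loss) because $B^0$ is continuous, hence a.s.~uniformly continuous, on the compact $[0;S]$. The quantity to control is $u^\eta_t := X^\eta_t + B^\eta_t Z^\eta_t = -\dot X^\eta_t / A^\eta_t$, and the conclusion is equivalent to $u^\eta_t \ge -\eps$ on $[0;S]$ together with $u^\eta_t \le \eps$ on $[\eps;S]$.

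The pathwise core will be a variation-of-constants formula for the linear ODE $\dot X^\eta_t = -A^\eta_t X^\eta_t - A^\eta_t B^\eta_t Z^\eta_t$ followed by one integration by parts exploiting $\partial_s \exp(-\int_s^t A^\eta_r \dr) = A^\eta_s \exp(-\int_s^t A^\eta_r \dr)$. Writing $K^\eta_t := \exp(-\int_0^t A^\eta_r \dr)$, this yields the key representation
\[
u^\eta_t = \bigl(x_0 + B^\eta_t Z^\eta_t\bigr) K^\eta_t + \int_0^t A^\eta_s \exp\!\bigl(-\int_s^t A^\eta_r \dr\bigr)\,\bigl(B^\eta_t Z^\eta_t - B^\eta_s Z^\eta_s\bigr)\,\ds.
\]
Crucially, the derivation requires no differentiability of $B^\eta$ (which it typically does not possess in the applications), and the weight $A^\eta_s \exp(-\int_s^t A^\eta_r \dr)$ integrates to $1-K^\eta_t \le 1$ on $[0;t]$ and concentrates at $s=t$ as $A^\eta \to \infty$. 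On $\Omega_0$, the ODE for $Z^\eta$ forces $|\dot Z^\eta| \le (\Vert C\Vert_\infty + \Vert D\Vert_\infty) L$, so $Z^\eta$ is uniformly Lipschitz on $[0;S]$; combined with $|B^\eta_t - B^\eta_s| \le 2\eps' + \omega^\star(|t-s|)$, the product $B^\eta Z^\eta$ inherits a modulus of continuity $\omega(h) = O(h + \eps' + \omega^\star(h))$ independent of $\eta$.

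Bounding the two summands is then a parameter-tuning exercise in $\eps'$. For $t \ge \eps$, the prefactor $K^\eta_t \le e^{-\eps/\eps'}$ is arbitrarily small, making the first summand of absolute value $\le \eps/2$; splitting the integral at $s = t - h$ bounds the second summand by $2 \Vert B^\eta Z^\eta\Vert_\infty\, e^{-h/\eps'} + \omega(h)$, which is $\le \eps/2$ after first choosing $h$ with $\omega(h) \le \eps/4$ (possible because $\omega(h) \to O(\eps')$ as $h \to 0$) and then shrinking $\eps'$ so that the tail is $\le \eps/4$. For $t \in [0;\eps]$, only the lower bound on $u^\eta_t$ is required: the first summand stays nonnegative, since the uniform modulus $\omega$ keeps $x_0 + B^\eta_t Z^\eta_t$ close to the strictly positive $x_0 + B^0_0 z_0$ on $[0;\eps]$ (the case of large $\eps$ being trivial by boundedness), while the same splitting argument bounds the second summand from below by $-\eps/2$. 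Assembling these pieces, $u^\eta_t \ge -\eps$ throughout $[0;S]$ and $u^\eta_t \le \eps$ on $[\eps;S]$, with $\eta_1$ depending only on $B^0$, $C$, $D$, $x_0$, $z_0$ and the maps from Assumption \ref{assOdeConv}.

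The main technical obstacle will be obtaining the pathwise, $\eta$-uniform modulus of continuity for $B^\eta$ on an event of large probability, since the assumption provides only uniform closeness in probability and not equicontinuity. The triangle inequality $|B^\eta_t - B^\eta_s| \le 2\Vert B^\eta - B^0\Vert_\infty + |B^0_t - B^0_s|$ transfers continuity from $B^0$ to $B^\eta$ on $\Omega_0$, and the a.s.~uniform continuity of $B^0$ on the compact $[0;S]$ yields a deterministic $\omega^\star$ dominating the modulus of $B^0$ on an event of probability arbitrarily close to $1$, completing the construction of $\Omega_0$.
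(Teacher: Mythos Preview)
Your argument is correct and takes a genuinely different route from the paper's proof. The paper proceeds by contradiction via a level-crossing / stopping-time analysis: working on events analogous to your $\Omega_0$, it assumes there is a minimal time $t$ at which $\dot X^\eta_t / A^\eta_t$ reaches $\eps$ (respectively $-\eps$), traces back to the last time $s<t$ at which it equaled $\eps/2$, and then shows that the increment
\[
\frac{\dot X^\eta_t}{A^\eta_t} - \frac{\dot X^\eta_s}{A^\eta_s}
= -(X^\eta_t - X^\eta_s) - B^\eta_t(Z^\eta_t - Z^\eta_s) - Z^\eta_s(B^\eta_t - B^\eta_s)
\]
cannot equal $\eps/2$, distinguishing the cases $t-s \ge \nu$ (where $X^\eta_t - X^\eta_s \ge \nu\eps/(2\alpha)$ contradicts $|X^\eta|\le L$) and $t-s<\nu$ (where the last two terms are too small and the first is $\le 0$). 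Your approach, by contrast, derives the explicit variation-of-constants representation for $u^\eta_t$ and bounds it by a kernel-concentration argument; this is more direct, gives a clean decomposition into a transient boundary term and a locally averaged increment of $B^\eta Z^\eta$, and makes the initial layer at $t=0$ completely transparent. Both proofs ultimately rest on the same three ingredients on $\Omega_0$: the lower bound $A^\eta \ge 1/\eps'$, a uniform-in-$\eta$ modulus of continuity for $B^\eta$ (inherited from $B^0$ via the triangle inequality), and the Lipschitz bound on $Z^\eta$ coming from its ODE.

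One small point to tighten: you invoke $\Vert C\Vert_\infty$ and $\Vert D\Vert_\infty$ to obtain the Lipschitz constant for $Z^\eta$, but $C$ and $D$ are only assumed continuous, so these suprema are random. To obtain a \emph{deterministic} modulus $\omega(h)$ that feeds into a uniform choice of $h$ and $\eps'$, enlarge $L$ so that also $\sup_{[0;S]}(|B^0_t|+|C_t|+|D_t|) \le L$ on $\Omega_0$; this costs only another arbitrarily small loss of probability and is exactly what the paper does with its set $M^{L,\eta}$. With that adjustment your parameter-tuning (choose $L$, then $\omega^\star$, then $h$, then $\eps'$, then $\eta_1=\eta_0(\eps',\delta)$) goes through without circularity, and the stated dependency of $\eta_1$ on $B^0$, $C$, $D$, $x_0$, $z_0$ and the two maps from the assumption is respected.
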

\begin{proof}
Let $y_0 := x_0 + B^0_0 z_0$. We assume w.l.o.g.~that $y_0 > 0$. Then there exists $\alpha_0 > 0$ s.t.
\begin{align*}
	y_0 - \alpha | z_0 |
	> 0 \quad \mbox{for all} \, \alpha \leq \alpha_0.
\end{align*}
Let us now put
\begin{align*}
	M^{L,\eta}
	:=&
	\bigg\{ \sup_{t \in [0,S]} \Big( \big| B^0_t \big| + |C_t| + |D_t| + \big| X^\eta_t \big| + \big| Z^\eta_t \big| \Big) \le L \bigg\},
	\\
	M^{\alpha,\eta}
	:=&
	\bigg\{
		\inf_{t\in[0,S]} A^\eta \ge \frac 1 \alpha,
		\sup_{t\in[0,S]} \big| B^\eta_t - B^0_t \big| \le \alpha
	\bigg\},
	\\
	M^{\beta,\nu}
	:=&
	\bigg\{ \sup_{s,t\in[0,S],|s-t|\le\nu} \big| B^0_s - B^0_t \big| \le \beta \bigg\}.
\end{align*}

Due to Assumption \ref{assOdeConv} and Lemma \ref{thmStarStarlemma}, for all $\delta>0$, $\beta>0$ and $\alpha \leq \alpha_0$, there exist constants $L_0(\delta), \eta_0(\alpha,\delta)$ and $\nu_0(\beta,\delta)$ such that
\[
	\inf_{L \ge L_0} \inf_\eta \IP[M^{L,\eta}] \geq 1-\frac{\delta}{6}, \qquad
	\inf_{\eta \leq \eta_0}\IP[M^{\alpha,\eta}] \geq 1-\frac{\delta}{6}, \qquad
	\inf_{\nu \le \nu_0} \IP[M^{\beta,\nu}] \geq 1-\frac{\delta}{6}.
\]
Moreover, for all $\alpha \le \alpha_0$ and $\IP$-a.a.~$\omega \in M^{\alpha,\eta}$,
\begin{align}
	\dot X^\eta_0(\omega)
	\le - A^\eta_0(\omega) \Big( y_0 - \big| B^\eta_0(\omega) - B^0_0(\omega) \big| | z_0 | \Big)
	\leq - \frac 1 \alpha \big( y_0 - \alpha |z_0| \big) < 0.
	\label{eqthmDdtXEtaScaledConv0GenerallemmadX0Le0}
\end{align}

We are now ready to prove that, for all $\eps,\delta>0$, there exists $\eta_1>0$ such that, for all $\eta \le \eta_1$,
\[
	\IP \bigg[ \sup_{t\in [0,S]} \frac {\dot X^\eta_t} {A^\eta_t} \le \eps \bigg] \geq 1- \frac{\delta}{2}.
\]
To this end, we take an $\omega \in M^{L,\eta} \cap M^{\alpha,\eta} \cap M^{\beta,\nu}$ where \eqref{eqthmDdtXEtaScaledConv0GenerallemmadX0Le0} holds and assume that there exists some $t \in [0,S]$ such that $\frac{\dot X^\eta_t(\omega)}{A^\eta_t(\omega)} \ge \eps$. It is enough to show that such a $t$ cannot exist for $L$ large and $\alpha,\beta,\nu,\eta$ small enough. Since $\frac{\dot X^\eta(\omega)}{A^\eta(\omega)}$ is continuous and $\frac{\dot X^\eta_0(\omega)}{A^\eta_0(\omega)} < 0$ if $\alpha \le \alpha_0$, we can choose a minimal $t \in [0,S]$ with the property that $\frac{\dot X^\eta_t(\omega)}{A^\eta_t(\omega)} = \eps$. Let
\begin{align*}
	s
	:=
	\sup\bigg\{ u \in (0,t) : \frac{\dot X^\eta_u(\omega)}{A^\eta_u(\omega)} = \eps/2 \bigg\}.
\end{align*}
Then, $\frac{\dot X^\eta_s(\omega)}{A^\eta_s(\omega)} = \frac{\eps}{2}$ and $\frac{\dot X^\eta_u(\omega)}{A^\eta_u(\omega)} > \frac{\eps}{2}$ for all $u \in (s,t]$. We now distinguish two cases.
\begin{itemize}
	\item Case 1: $t - s \ge \nu$. Then, the fact that $\omega \in M^{L,\eta} \cap M^{\alpha,\eta}$ yields
\begin{align}
	2 L
	\ge
	X^\eta_t(\omega) - X^\eta_s(\omega)
	=
	\int^t_s \frac {\dot X^\eta_u(\omega)} {A^\eta_u(\omega)} A^\eta_u(\omega) \, \du
	\ge
	\frac {\nu \eps} {2\alpha}
	\label{eqthmDdtXEtaScaledConv0GenerallemmaPart1Case1}.
\end{align}

	\item Case 2: $t - s < \nu$. In this case, we have
\begin{align*}
	\frac \eps 2
	=&
	\frac {\dot X^\eta_t(\omega)} {A^\eta_t(\omega)} - \frac {\dot X^\eta_s(\omega)} {A^\eta_s(\omega)}
	=
	- X^\eta_t(\omega)
	- B^\eta_t(\omega) Z^\eta_t(\omega)
	+ X^\eta_s(\omega)
	+ B^\eta_s(\omega) Z^\eta_s(\omega).
\end{align*}
Using that $\omega \in M^{\alpha,\eta}$ yields $-X_t^\eta(\omega)+X^\eta_s(\omega) = - \int^t_s \frac {\dot X^\eta_u(\omega)} {A^\eta_u(\omega)} A^\eta_u(\omega) \, \du < 0$. Using that $\omega \in M^{L,\eta} \cap M^{\beta,\nu}$ yields
\begin{equation}
\begin{split}
	\frac{\eps}{2} \le &
	- B^\eta_t(\omega) \int^t_s \dot Z^\eta_u(\omega) \, \du
	- Z^\eta_s(\omega) \big( B^\eta_t(\omega) - B^\eta_s(\omega) \big)
	\\\le&
	2L^2 \nu \Big(
		\big| B^0_t(\omega) \big|
		+ \big| B^\eta_t(\omega) - B^0_t(\omega) \big|
	\Big)
	\\&
	+ L \Big( \big| B^\eta_t(\omega) - B^0_t(\omega) \big| + \big| B^0_t(\omega) - B^0_s(\omega) \big| + \big| B^0_s(\omega) - B^\eta_s(\omega) \big| \Big)
	\\\le&
	2 L^2 \nu ( L + \alpha )
	+ L ( 2\alpha + \beta ).
	\label{eqthmDdtXEtaScaledConv0GenerallemmaPart1Case2}
\end{split}
\end{equation}
\end{itemize}

Now, we first choose $L \geq L_0(\delta)$, then $\beta > 0$ such that $3L\beta < \frac{\eps}{4}$, then $\nu \leq \nu_0(\beta,\delta)$ such that $2 L^2 \nu( L + \alpha_0 ) < \frac{\eps}{4}$, then $\alpha \leq \alpha_0$ such that $\frac{\nu \eps}{2 \alpha} > 2 L$ and $\alpha \leq \beta$ and finally $\eta_1(\eps,\delta) \leq \eta_0(\alpha,\delta)$. Then both \eqref{eqthmDdtXEtaScaledConv0GenerallemmaPart1Case1} and \eqref{eqthmDdtXEtaScaledConv0GenerallemmaPart1Case2} are violated. As a result,
\begin{align*}
M^{L,\eta}
\cap M^{\alpha,\eta}
\cap M^{\beta,\nu}
\subseteq
\bigg\{ \sup_{t\in [0,S]} \frac {\dot X^\eta_t} {A^\eta_t} \le \eps \bigg\}
\end{align*}
and hence
\begin{align}
&
\IP\bigg[ \sup_{t\in [0,S]} \frac {\dot X^\eta_t} {A^\eta_t} \le \eps \bigg]
\ge 1 - \frac{\delta}{2}.
\label{eqthmDdtXEtaScaledConv0GenerallemmaPart1Result}
\end{align}

It remains to prove that, for suitably chosen parameters
\begin{align*}
M^{L,\eta}
\cap M^{\alpha,\eta}
\cap M^{\beta,\nu}
\subseteq
\bigg\{ \inf_{t\in [\eps,S]} \frac {\dot X^\eta_t} {A^\eta_t} \ge -\eps \bigg\}.
\end{align*}
To this end, we fix $\omega \in M^{L,\eta} \cap M^{\alpha,\eta} \cap M^{\beta,\nu}$ and assume that there exists a minimal $t \in [\eps,S]$ such that $\frac{\dot X^\eta_t(\omega)}{A^\eta_t(\omega)} \le -\eps$. Using that $\omega \in M^{L,\eta} \cap M^{\alpha,\eta}$, it is straightforward to show that, for all sufficiently small $\alpha$,
there exists an $r \in [0,\eps]$ such that $\frac{\dot X^\eta_r(\omega)}{A^\eta_r(\omega)} \ge -\eps/2$.
Hence, we can define
\begin{align*}
s
:=
\sup \bigg\{ u \in [0,t) : \frac{\dot X^\eta_u(\omega)}{A^\eta_u(\omega)} = -\eps/2 \bigg\}.
\end{align*}
Then $0 \leq s < t$ and $\frac{\dot X^\eta_u(\omega)}{A^\eta_u(\omega)} \in [-\eps,-\eps/2]$ for all $u \in [s,t]$. We can now use the same arguments as in the first step to conclude that

\begin{align}
\IP\bigg[ \inf_{t\in [\eps,S]} \frac {\dot X^\eta_t} {A^\eta_t} \ge -\eps \bigg]
\ge
1 - \frac \delta 2.
\label{eqthmDdtXEtaScaledConv0GenerallemmaPart2Result}
\end{align}

Combining \eqref{eqthmDdtXEtaScaledConv0GenerallemmaPart1Result} and \eqref{eqthmDdtXEtaScaledConv0GenerallemmaPart2Result} yields the desired result.
\end{proof}

\begin{lemma} \label{thmZConv0Generallemma}
Let Assumption \ref{assOdeConv} be satisfied and let $x_0 + B^0_0 z_0 \ne 0$. Then for all $\eps,\delta \in (0,\infty)$, there exists $\eta_2=\eta_2(\eps,\delta) > 0$ such that, for all $\eta \in (0,\eta_2]$,
\begin{align*}
	\IP\Big[ \sup_{t\in [0,S]} \big| Z^\eta_t - Z^0_t \big| \le \eps \Big]
	\ge
	1 - \delta.
\end{align*}

The mapping $\eta_2$ depends only on $B^0$, $C$, $D$, $x_0$, $z_0$, and the mappings $(\eps,\delta) \mapsto \eta_0(\eps,\delta)$ and $\delta \mapsto L$ introduced in Assumption \ref{assOdeConv}.
\end{lemma}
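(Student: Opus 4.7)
The plan is to treat the ODE for $Z^\eta - Z^0$ as a linear perturbation driven by $C(X^\eta - X^0)$, and to use the state equation for $X^\eta$ to express $X^\eta - X^0$ as a combination of a term proportional to $Z^\eta - Z^0$, an $O(\sup|B^\eta - B^0|)$ term, and the remainder $-\dot X^\eta/A^\eta$. A Grönwall estimate then reduces the problem to showing that $\int_0^S |\dot X^\eta_s/A^\eta_s|\,ds$ is small with high probability, which is where Lemma \ref{thmDdtXEtaScaledConv0Generallemma} enters.

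First I would fix $\delta > 0$ and, using \eqref{boundXZ}, choose $L > 0$ large enough that
\[
	M^L := \Big\{ \sup_{t\in[0;S]}\big( |X^\eta_t| + |Z^\eta_t| + |B^0_t| + |C_t| + |D_t| \big) \le L \Big\}
\]
has probability at least $1-\delta/3$ for all $\eta > 0$. Comparing \eqref{eqassOdeConvOdeXZ} with \eqref{eqassOdeConvX0Z0Defn} yields the key identity
\[
	X^\eta_t - X^0_t = -(B^\eta_t - B^0_t)\,Z^\eta_t - B^0_t\,(Z^\eta_t - Z^0_t) - \frac{\dot X^\eta_t}{A^\eta_t}.
\]
Substituting this into $Z^\eta_t - Z^0_t = \int_0^t \big( C_s(X^\eta_s - X^0_s) + D_s(Z^\eta_s - Z^0_s) \big)\,ds$ and applying Grönwall's lemma yields, on $M^L$ intersected with the event from \eqref{eqassOdeConvABConv} (for some parameter $\alpha \le 1$),
\[
	\sup_{t\in[0;S]}\big|Z^\eta_t - Z^0_t\big| \le \big( L^2 S \alpha + L\, I^\eta \big) \exp\!\big((L^2+L)S\big), \qquad I^\eta := \int_0^S \bigg|\frac{\dot X^\eta_s}{A^\eta_s}\bigg|\,ds.
\]

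It remains to control $I^\eta$. The subtlety is that Lemma \ref{thmDdtXEtaScaledConv0Generallemma} does not bound $\dot X^\eta/A^\eta$ on a small neighbourhood of $t=0$ (when $x_0 + B^0_0 z_0 > 0$; the other case is symmetric), precisely because $X^\eta$ must traverse the fixed gap $x_0 - X^0_0$ on this interval. However, the pathwise identity $\dot X^\eta_t/A^\eta_t = -(X^\eta_t + B^\eta_t Z^\eta_t)$ supplies an a-priori bound $|\dot X^\eta_t/A^\eta_t| \le 3L^2$ on $M^L$ (for $\alpha \le L$), so the contribution of the initial interval $[0;\eps']$ to $I^\eta$ is at most $3L^2 \eps'$. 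On $[\eps';S]$, Lemma \ref{thmDdtXEtaScaledConv0Generallemma} gives $|\dot X^\eta/A^\eta| \le \eps'$ with probability $\ge 1-\delta/3$ for $\eta$ small enough, so $I^\eta \le 3L^2 \eps' + \eps' S$ on the combined event. Choosing first $\eps'$ and $\alpha$ so small that the right-hand side of the Grönwall bound is at most $\eps$, and then $\eta_2$ so small that \eqref{eqassOdeConvABConv} and the conclusion of Lemma \ref{thmDdtXEtaScaledConv0Generallemma} both hold on events of probability $\ge 1-\delta/3$, yields the claim on an event of probability at least $1-\delta$. The main obstacle, namely the singular behaviour of $\dot X^\eta/A^\eta$ near $t=0$, is thus circumvented not by sharpening Lemma \ref{thmDdtXEtaScaledConv0Generallemma} but by exploiting the free $L^\infty$ bound that the state equation itself supplies on $M^L$.
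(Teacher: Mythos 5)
Your proof is correct and follows essentially the same route as the paper: the paper writes the variation-of-constants formula for $Z^\eta-Z^0$ with integrating factor $\exp(\int (D_u-B^\eta_u C_u)\,\du)$ and forcing $(B^0-B^\eta)CZ^0 - C\,\dot X^\eta/A^\eta$, then invokes Assumption \ref{assOdeConv} and Lemma \ref{thmDdtXEtaScaledConv0Generallemma}, which is the same linear-perturbation argument you carry out via Gr\"onwall. Your explicit treatment of the initial interval, where Lemma \ref{thmDdtXEtaScaledConv0Generallemma} gives only a one-sided bound and the identity $\dot X^\eta/A^\eta=-(X^\eta+B^\eta Z^\eta)$ supplies the $O(L^2)$ pathwise bound, is a detail the paper leaves implicit but which is indeed needed.
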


\begin{proof}
Since
\begin{align*}
	Z^\eta_t - Z^0_t
	=&
	\exp\bigg( \int^t_0 \big( D_u - B^\eta_u C_u \big) \, \du \bigg)
	\int^t_0
		\exp\bigg( - \int_0^s \big( D_u - B^\eta_u C_u \big) \, \du \bigg) \cdot
\\& \quad\quad
		\bigg(
			\big( B^0_s - B^\eta_s \big) C_s Z^0_s
			- C_s \cdot \frac{\dot X^\eta_s}{A^\eta_s}
		\bigg)
	\, \ds,
\end{align*}
the assertion follows from Assumption \ref{assOdeConv} and Lemma \ref{thmDdtXEtaScaledConv0Generallemma}.
\end{proof}

The next theorem follows by Assumption \ref{assOdeConv}, Lemma \ref{thmDdtXEtaScaledConv0Generallemma} and Lemma \ref{thmZConv0Generallemma}.

\begin{theorem} \label{thmXConv0Generalthm}
If Assumption \ref{assOdeConv} holds and $x_0 + B^0_0 z_0 > 0$,
then for all $\eps,\delta \in (0,\infty)$, there exists $\eta_3 = \eta_3(\eps,\delta) > 0$ such that, for all $\eta \in (0,\eta_3]$,
\begin{align*}
	\IP\bigg[
		\sup_{t\in [0,S]} \max\Big(
			\big| Z^0_t - Z^\eta_t \big|,
			X^0_t - X^\eta_t
		\Big) \le \eps,
		\inf_{t\in [\eps,S]} \big( X^0_t - X^\eta_t \big) \ge -\eps
	\bigg]
	\ge
	1 - \delta.
\end{align*}
If $x_0 + B^0_0 z_0 < 0$,
then the intervals $[0,S]$ and $[\eps,S]$ in the statement are to be swapped.

The mapping $\eta_3$ depends only on $B^0$, $C$, $D$, $x_0$, $z_0$, and the mappings $(\eps,\delta) \mapsto \eta_0(\eps,\delta)$ and $\delta \mapsto L$ introduced in Assumption \ref{assOdeConv}.
\end{theorem}

\end{appendix}

\bibliography{bibdata}

\end{document}